\newcommand{\N}{\mathbb{N}}
\newcommand{\R}{\mathbb{R}}
\newcommand{\tr}{\mathrm{tr}}
\begin{document}

\numberwithin{equation}{section}
\newtheorem{theorem}[equation]{Theorem}
\newtheorem{remark}[equation]{Remark}
\newtheorem{claim}[equation]{Claim}
\newtheorem{lemma}[equation]{Lemma}
\newtheorem{definition}[equation]{Definition}
\newtheorem{assumptions}[equation]{Assumptions}
\newtheorem{corollary}[equation]{Corollary}
\newtheorem{proposition}[equation]{Proposition}
\newtheorem{def1}[equation]{Definition (First version)}
\newtheorem{def2}[equation]{Definition (Second version)}

\title{On the Existence of a Maximal Cauchy Development for the Einstein Equations - a Dezornification}
\author{Jan Sbierski\thanks{Department for Applied Mathematics and Theoretical Physics, University of Cambridge,
Wilberforce Road,
Cambridge,
CB3 0WA,
United Kingdom}}
\date{\today}

\maketitle

\begin{abstract}
In 1969, Choquet-Bruhat and Geroch established the existence of a unique maximal globally hyperbolic Cauchy development of given initial data for the Einstein equations. Their proof, however, has the unsatisfactory feature that it relies crucially on the axiom of choice in the form of Zorn's lemma. In this paper we present a proof that avoids the use of Zorn's lemma. In particular, we provide an explicit \emph{construction} of this maximal globally hyperbolic development.
\end{abstract}

\tableofcontents

\section{Introduction}

This paper is concerned with the initial value problem for the vacuum Einstein equations, \(Ric(g) =0\). In her seminal paper \cite{Choquet52} from 1952, Choquet-Bruhat showed that the initial value problem is \emph{locally} well-posed, i.e., in particular she proved a local existence and a local uniqueness statement. Global aspects of the Cauchy problem in general relativity were explored in the paper \cite{ChoquetGeroch69} by Choquet-Bruhat and Geroch from 1969, where they showed that for given initial data there exists a (unique) \emph{maximal} globally hyperbolic development (MGHD), i.e., a globally hyperbolic development  (GHD) which is an extension of any other GHD of the same initial data. The existence of the MGHD  not only implies `global uniqueness' for the Cauchy problem in general relativity within the class of globally hyperbolic developments, but it also defines the object whose properties one needs to understand for answering further questions about the initial value problem\footnote{Prominent and important examples are here the weak and the strong cosmic censorship conjectures, which are both concerned with the properties of the MGHD (for more details see Section \ref{Overview}).}  - thus turning the MGHD into a central object in mathematical general relativity. 

The proof of the existence of the MGHD, as given by Choquet-Bruhat and Geroch in \cite{ChoquetGeroch69}, has the unsatisfactory feature that it relies heavily on the axiom of choice in the form of Zorn's lemma, which they invoke in order to ensure the existence of such a maximal element \emph{without actually finding it}.
In this paper we present another proof of the existence of the MGHD which does not appeal to Zorn's lemma at all and, in fact, \emph{constructs} the MGHD.

\subsubsection*{Outline of the paper}

In the next subsection we elaborate more on the importance of the MGHD by discussing   the role it plays in the global theory of the Cauchy problem for the Einstein equations. Our motivation for giving another proof of the existence of the MGHD is discussed in Section \ref{Why}. Thereafter, we briefly recall the original proof by Choquet-Bruhat and Geroch. The impatient or knowledgeable reader is invited to skip directly to Section \ref{SketchMyProof}, where we sketch the idea of the proof given in this paper and exhibit the analogy of this new proof with the elementary proof of the existence of a unique MGHD for, say, a quasilinear wave equation on a fixed background manifold. Finally, Section \ref{Comparison} gives a brief schematic comparison of the original and the new proof.

In Section \ref{Defs}, we introduce the necessary definitions and state the main theorems, which are then proved in Section \ref{Proofs}.

\subsection{The maximal globally hyperbolic development in the global theory of the Cauchy problem in general relativity}
\label{Overview}

In the following we give a brief overview of the global aspects of the Cauchy problem in general relativity, focussing on the role played by the MGHD.  
Let us first discuss the aspect of `global uniqueness'.
In the paper \cite{ChoquetGeroch69}, Choquet-Bruhat and Geroch raised the following question:
\begin{quote}
A priori, it might appear possible that, once the solution has been integrated beyond a certain point in some region, the option, previously available, of further evolution in some quite different region has been destroyed\footnote{The possible scenario they describe here is well illustrated by the example of the simple ordinary differential equation \(\dot{x} = 3  x^{\nicefrac{2}{3}}\). If we prescribe, for instance, at time  \(t=-1\) the initial data \(x(t=-1) = -1\), then there is a \emph{unique solution up to time \(t=0\)}, given by \(x(t) = t^3\). At time \(t=0\), however, one can continue \(x\) as a \(C^2\) solution of the ODE in infinitely many ways, for example just by setting it to zero for all positive times.}.
\end{quote}
First of all it is clear, by looking at  the Kerr solution for example, that one can only hope to obtain a global uniqueness result if one restricts consideration to \emph{globally hyperbolic} developments of initial data\footnote{A globally hyperbolic development is not just a `development' which is globally hyperbolic, but one also requires that the initial data embeds as a Cauchy hypersurface. See Section \ref{Defs} for the precise definition of \emph{GHD}.}. That under this restriction, however, a global uniqueness statement indeed holds, was first proven in 1969 by Choquet-Bruhat and Geroch in the above cited paper. They actually proved a stronger statement than global uniqueness, namely they showed the existence of the MGHD, from which it follows trivially that global uniqueness holds. But the MGHD also furnishes the central object for the study of further global aspects of the Cauchy problem in general relativity. First and foremost one should mention here the \emph{weak} and the \emph{strong cosmic censorship conjectures}. The latter states that for \emph{generic} asymptotically flat initial data, that is data which models isolated gravitational systems, the MGHD cannot be isometrically embedded into a strictly larger spacetime (of a certain regularity). A positive resolution of the strong cosmic censorship conjecture would thus imply, that for asymptotically flat initial data, global uniqueness holds generically  \emph{even if we lift the restriction to globally hyperbolic developments}. 

We now come to the more subtle aspect of `global existence'. In fact, the sheer notion of a spacetime existing for `all time' is already non-trivial due to the absence of a fixed background manifold. However, the \emph{completeness of all causal geodesics} is a geometric invariant, which, moreover, accurately captures the physical concept of the spacetime existing for all time. And indeed, there are a few results which establish that global existence in this sense holds for small neighbourhoods of special initial data (see for example the monumental work of Christodoulou and Klainerman on the stability of Minkowski space, \cite{ChrisKlainStability}). On the other hand, there are explicit solutions to the Einstein equations which do not enjoy this causal geodesic completeness, showing that one cannot possibly hope to establish `global existence' in this sense for all initial data. Moreover, Penrose's famous singularity theorem, see \cite{Pen65}, shows that global existence in this sense cannot even hold generically\footnote{Penrose's singularity assumes that the development is globally hyperbolic, but recall from our discussion of global uniqueness, that this is the class of spacetimes we are interested in.}. 

If we restrict, however, our attention to asymptotically flat initial data, one could make the physically reasonable conjecture that at least the observers far out (at infinity) live for all time. Under the assumption that strong cosmic censorship holds, the mathematical equivalent of this physical conjecture is that null infinity of the corresponding MGHD is complete - which, for \emph{generic} asymptotically flat initial data, is the content of the weak cosmic censorship conjecture. Thus, the weak cosmic censorship conjecture should be thought of as conjecturing `global existence'.

\subsection{Why another proof?}
\label{Why}

Our motivation for giving another proof of the existence of the MGHD is mainly based on the following three arguments:

\begin{enumerate}[i)]

\item
A constructive proof is more natural and, from an epistemological point of view, more satisfying than a non-constructive one, since one can actually \emph{find} or \emph{construct} the object one seeks instead of inferring a contradiction by assuming its non-existence. Moreover, a direct construction usually provides not only more insight, but also more information.

\item
In his lecture notes \cite{Hilbert92}, David Hilbert distinguishes between two aspects of the mathematical method\footnote{For Hilbert's original words on this matter see \cite {Hilbert92}, page 17:  \scriptsize
\begin{quote}
Der Mathematik kommt hierbei eine zweifache Aufgabe zu: Einerseits gilt es, die Systeme von Relationen zu entwickeln und auf ihre logischen Konsequenzen zu untersuchen, wie dies ja in den rein mathematischen Disziplinen geschieht. Dies ist die \emph{progressive Aufgabe} der Mathematik. Andererseits kommt es darauf an, den an Hand der Erfahrung gebildeten Theorien ein festeres Gef\"uge und eine m\"oglichst einfache Grundlage zu geben. Hierzu ist es n\"otig, die Voraussetzungen deutlich herauszuarbeiten, und \"uberall genau zu unterscheiden, was Annahme und was logische Folgerung ist. Dadurch gewinnt man insbesondere auch Klarheit \"uber die unbewu{\ss}t gemachten Voraussetzungen, und man erkennt die Tragweite der verschiedenen Annahmen, so da{\ss} man \"ubersehen kann, was f\"ur Modifikationen sich ergeben, falls eine oder die andere von diesen Annahmen aufgehoben werden mu{\ss}. Dies ist die \emph{regressive Aufgabe} der Mathematik.
\end{quote}\normalsize}: He first mentions the \emph{progressive task} of mathematics, which is to establish a suitable set of postulates as the foundations of a theory, and then to investigate the theory itself by finding the logical consequences of its axioms. Hilbert then goes on to elaborate on the \emph{regressive task} of mathematics, which he says is to find and exhibit the logical dependency of the theorems on the postulates, which, in particular, leads to a clarification of the strength and the necessity of each axiom of the theory. 

The work in this paper is motivated by the regressive task, we show that the existence of the MGHD for the Einstein equations does \emph{not} rely on the axiom of choice. Besides a purely \emph{mathematical} motivation for investigating the strength and the necessity of each axiom of a theory, there is also an important \emph{physical} reason for doing so: The question whether an axiom or a theory is `true' is beyond the realm of mathematics. However, a \emph{physical} theory can be judged in accordance with its agreement with our perception of reality. For example, one would have a reason to dismiss the axiom of choice from the foundations of the \emph{physical} theory\footnote{Here, the `foundations of the physical theory' should be thought of as `mathematics with all its axioms together with those postulates within mathematics that actually model the physical theory'.}, if its inclusion in the remaining postulates of our physical theory allowed the deduction of a statement which is in serious disagreement with our perception of reality. On the other hand, it would be reasonable to include the axiom of choice in our axiomatic framework of the physical theory, if one could not prove a theorem, that is crucial for the physical theory, without it.

To the best of our knowledge, there are neither very strong arguments for embracing nor for rejecting the axiom of choice in general relativity. But if it had been the case that the axiom of choice had been needed for ensuring the existence of the MGHD, this would have been a strong reason for including it into the postulates of general relativity.

\item
The structure of the original proof of the existence of the MGHD is in stark contrast to the straightforward and elementary construction of the MGHD for, say, a quasilinear wave equation on a fixed background manifold; in the latter case one constructs the MGHD by taking the union of all GHDs (see also Section \ref{Quasilin}). The proof given in this paper embeds the construction of the MGHD for the Einstein equations in the general scheme for constructing MGHDs by showing that an analogous construction to `taking the union of all GHDs' works.
\end{enumerate}

We conclude with some formal set theoretic remarks: The results from PDE theory and causality theory we resort to in our proof do not require more choice than the \emph{axiom of dependent choice} (DC). Disregarding such `black box results' we refer to, our proof only needs the \emph{axiom of countable choice} (CC).\footnote{For one application of it see for example the proof of Lemma \ref{FuturePoint}.} We can thus conclude that the existence of the MGHD is a theorem of\footnote{\textbf{ZF} stands here for the Zermelo-Fraenkel set theory.} \textbf{ZF+DC}; and checking how much choice is actually required for proving the  `black box results' we resort to might even reveal that the existence of the MGHD is provable in \textbf{ZF+CC}. 

We have made no effort to avoid the axiom of \emph{countable} choice in our proof -  mainly for two reasons: Firstly, the axiom of countable choice is needed for many of the standard results and techniques in mathematical analysis. Thus, investigating whether the `black box results' we resort to can be proven even without the axiom of countable choice promises to be a rather tedious undertaking, while the gained insight might not be that enlightening.
Secondly, while the axiom of choice has rather wondrous consequences,  the implications of the axiom of countable (or dependent) choice seem, so far, to be less foreign to human intuition.

\subsection{Sketch of the proof given by Choquet-Bruhat and Geroch}

The original proof by Choquet-Bruhat and Geroch can be divided into two steps. In the first step, they invoke Zorn's lemma to ensure the existence of a \emph{maximal} element in the class of all developments; and in the second step, which is more difficult, they show that actually any other development embeds into this maximal element. Let us recall their proof in some more detail\footnote{The reader, who is not familiar with the terminology used below, is referred to the definitions made it Section \ref{Defs}.}:
\newline
\newline
\textbf{First step:} Consider the set \(\mathcal{M}\) of all globally hyperbolic developments of certain fixed initial data. Define a partial ordering on this set by \emph{\(M \leq M'\) iff \(M'\) is an extension of \(M\)}. Since a chain is by definition \emph{totally ordered}, it is not difficult to glue all the elements of a chain together\footnote{In particular it is trivial to show that the so obtained space is Hausdorff!} to construct a bound for the chain in question. Zorn's lemma then implies that there is at least one maximal element in  \(\mathcal{M}\). Pick one and call it \(M\).\footnote{The collection of all globally hyperbolic developments of given initial data is actually a proper class and not a set (see also footnote \ref{footclass} and the discussion above the proof of Theorem \ref{MGHD} on page \pageref{detailed}). In order to justify the above steps within \textbf{ZFC} (the Zermelo-Fraenkel set theory with the axiom of choice) one can perform a reduction to a \emph{set} $X$ of globally hyperbolic developments analogous to the reduction used in the proof of Theorem \ref{MGHD}. One then identifies elements in $X$ if, and only if, they are isometric (this is needed to ensure that $\leq$ as defined above is antisymmetric). The quotient space obtained in this way then takes the place of $\mathcal{M}$ in the above argument.}

\textbf{Second step:} Let \(M'\) be another element of \(\mathcal{M}\). Choquet-Bruhat and Geroch set up another partially ordered set, namely the set of all common globally hyperbolic developments of \(M\) and \(M'\), where the partial order is given by inclusion. Using the same argument as in Corollary \ref{welldefined}, they again argue that every chain is bounded, since one can just take the union of its elements. By appealing to Zorn's lemma once more, they establish the existence of a \emph{maximal} common globally hyperbolic development \(U\), and argue that it must be unique. 

Now, one glues  \(M\) and \(M'\) together along \(U\). The resulting space \(\tilde{M}\) can be endowed in a natural way with the structure needed for turning it into a globally hyperbolic development, which, however, might a priori be non-Hausdorff. Establishing that \(\tilde{M}\) is indeed Hausdorff is at the heart of their argument. Once this is shown, the resulting development is trivially an extension of \(M\) - and since \(M\) is maximal, we must have had \(U = M'\), i.e., \(M'\) embeds into \(M\).

The proof of \(\tilde{M}\) being Hausdorff goes by contradiction. If it were not Hausdorff, then one shows that this would be due to pairs of points on the boundary of \(U\) in \(M\) and \(M'\), respectively (cf. the picture below). One then has to ensure the existence of a `spacelike' part of this non-Hausdorff boundary. Given a `non-Hausdorff pair' \([p], [p'] \in \tilde{M}\), one then constructs a spacelike slice \(T\) in \(M\) that goes through \(p\) and such that \(T \setminus \{p\}\) is contained in \(U\). If \(\psi\) denotes the isometric embedding of \(U\) into \(M'\), this also gives rise to a spacelike slice \(T' := \psi\big(T \setminus \{p\}\big) \cup \{p'\}\) in \(M'\). 

\vspace{0.2cm}
\begin{center}
\def\svgwidth{14cm}
%% Creator: Inkscape 0.48.2, www.inkscape.org
%% PDF/EPS/PS + LaTeX output extension by Johan Engelen, 2010
%% Accompanies image file '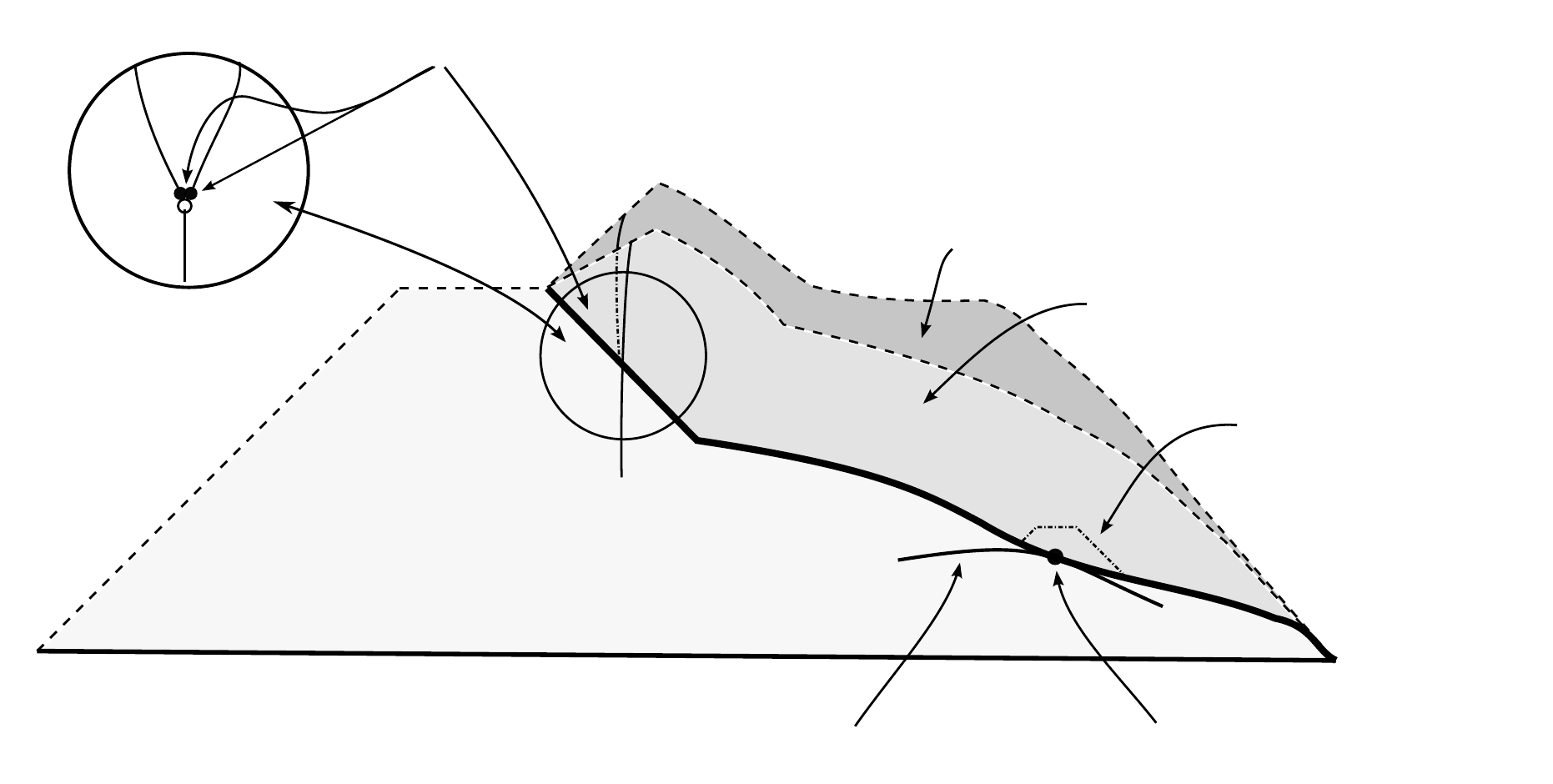' (pdf, eps, ps)
%%
%% To include the image in your LaTeX document, write
%%   \input{<filename>.pdf_tex}
%%  instead of
%%   \includegraphics{<filename>.pdf}
%% To scale the image, write
%%   \def\svgwidth{<desired width>}
%%   \input{<filename>.pdf_tex}
%%  instead of
%%   \includegraphics[width=<desired width>]{<filename>.pdf}
%%
%% Images with a different path to the parent latex file can
%% be accessed with the `import' package (which may need to be
%% installed) using
%%   \usepackage{import}
%% in the preamble, and then including the image with
%%   \import{<path to file>}{<filename>.pdf_tex}
%% Alternatively, one can specify
%%   \graphicspath{{<path to file>/}}
%% 
%% For more information, please see info/svg-inkscape on CTAN:
%%   http://tug.ctan.org/tex-archive/info/svg-inkscape
%%
\begingroup%
  \makeatletter%
  \providecommand\color[2][]{%
    \errmessage{(Inkscape) Color is used for the text in Inkscape, but the package 'color.sty' is not loaded}%
    \renewcommand\color[2][]{}%
  }%
  \providecommand\transparent[1]{%
    \errmessage{(Inkscape) Transparency is used (non-zero) for the text in Inkscape, but the package 'transparent.sty' is not loaded}%
    \renewcommand\transparent[1]{}%
  }%
  \providecommand\rotatebox[2]{#2}%
  \ifx\svgwidth\undefined%
    \setlength{\unitlength}{536.07402344bp}%
    \ifx\svgscale\undefined%
      \relax%
    \else%
      \setlength{\unitlength}{\unitlength * \real{\svgscale}}%
    \fi%
  \else%
    \setlength{\unitlength}{\svgwidth}%
  \fi%
  \global\let\svgwidth\undefined%
  \global\let\svgscale\undefined%
  \makeatother%
  \begin{picture}(1,0.50476277)%
    \put(0,0){\includegraphics[width=\unitlength]{OriginalProof.pdf}}%
    \put(0.19679939,0.47868777){\color[rgb]{0,0,0}\makebox(0,0)[lb]{\smash{Thick line contained twice \(=\) non-Hausdorff points }}}%
    \put(0.38293598,0.40073483){\color[rgb]{0,0,0}\makebox(0,0)[lb]{\smash{\(\tilde{M}\)}}}%
    \put(0.23730549,0.15996844){\color[rgb]{0,0,0}\makebox(0,0)[lb]{\smash{\(U\)}}}%
    \put(0.47288064,0.00753743){\color[rgb]{0,0,0}\makebox(0,0)[lb]{\smash{\(T\) and \(T'\)}}}%
    \put(0.70419202,0.01073523){\color[rgb]{0,0,0}\makebox(0,0)[lb]{\smash{\(p\) and \(p'\)}}}%
    \put(0.71283204,0.30386651){\color[rgb]{0,0,0}\makebox(0,0)[lb]{\smash{\(M\)}}}%
    \put(0.61232976,0.3516069){\color[rgb]{0,0,0}\makebox(0,0)[lb]{\smash{\(M'\)}}}%
    \put(0.80475305,0.22256329){\color[rgb]{0,0,0}\makebox(0,0)[lb]{\smash{extension of 
isometry}}}%
  \end{picture}%
\endgroup%

\end{center}

Clearly, the induced initial data on \(T\) and \(T'\) are isometric. Appealing to the local uniqueness statement for the initial value problem for the Einstein equations, one thus finds that one can actually extend the isometric identification of \(M\) with \(M'\) to a small neighbourhood of \(p\) - in contradiction with \(U\) being the \emph{maximal} common globally hyperbolic development. 

Let us remark that the proof of  \(\tilde{M}\) being Hausdorff is rather briefly presented in the original paper by Choquet-Bruhat and Geroch. A very detailed proof is found in Ringstr\"om's \cite{Ring13}.

\subsection{Outline of the proof presented in this paper}
\label{SketchMyProof}

We first discuss a proof of global uniqueness and of the existence of a MGHD for the case of a quasilinear wave equation on a fixed background manifold. Our proof for the case of the Einstein equations will then naturally arise by analogy. 

\subsubsection{The case of a quasilinear wave equation}
\label{Quasilin}

Let us consider a quasilinear wave equation for \(u : \R^{3+1} \to \R\),
\begin{equation}
\label{quasi}
g^{\mu \nu}(u, \partial u) \partial_\mu \partial_\nu u = F(u,\partial u)\;,
\end{equation}
where \(g\) is a Lorentz metric valued function. Under suitable conditions on \(g\) and \(F\) one can prove local existence and uniqueness of solutions to the Cauchy problem\footnote{We are not concerned with regularity questions here, all initial data can be assumed to be smooth.}. Such a statement takes the following form (see for example \cite{Sogge}):
\begin{equation}
\label{IV}
\begin{split}
&\textnormal{Given initial data \(f, h \in C_0^\infty(\R^3)\) there exists a \(T >0\) and a unique solution \(u \in C^\infty([0,T] \times \R^3)\)} \\ &\textnormal{of \eqref{quasi} with \(u(0, \cdot) = f(\cdot)\) and \(\partial_t u(0, \cdot) = h(\cdot)\). Moreover, if \(T^*\) denotes the supremum of all such} \\ &\textnormal{\(T>0\) then we have either \(T^* = \infty\) or the \(L^\infty(\R^3)\) norm of \(u (t,\cdot)\) and/or of some derivatives of} \\ &\textnormal{\(u\) blows up for \(t \to T^*\).}
\end{split}
\end{equation}
However, in the case of \(T^* < \infty\), in general \(u(x,t)\) will not become singular for all \(x\in \R^3\) for \(t \to T^*\). The points \(x \in \R^3\) where it becomes singular are called \emph{first singularities} - at regular spacetime points \((T^*,x)\) we can extend the solution. 

\begin{center}
\def\svgwidth{7cm}
%% Creator: Inkscape 0.48.2, www.inkscape.org
%% PDF/EPS/PS + LaTeX output extension by Johan Engelen, 2010
%% Accompanies image file '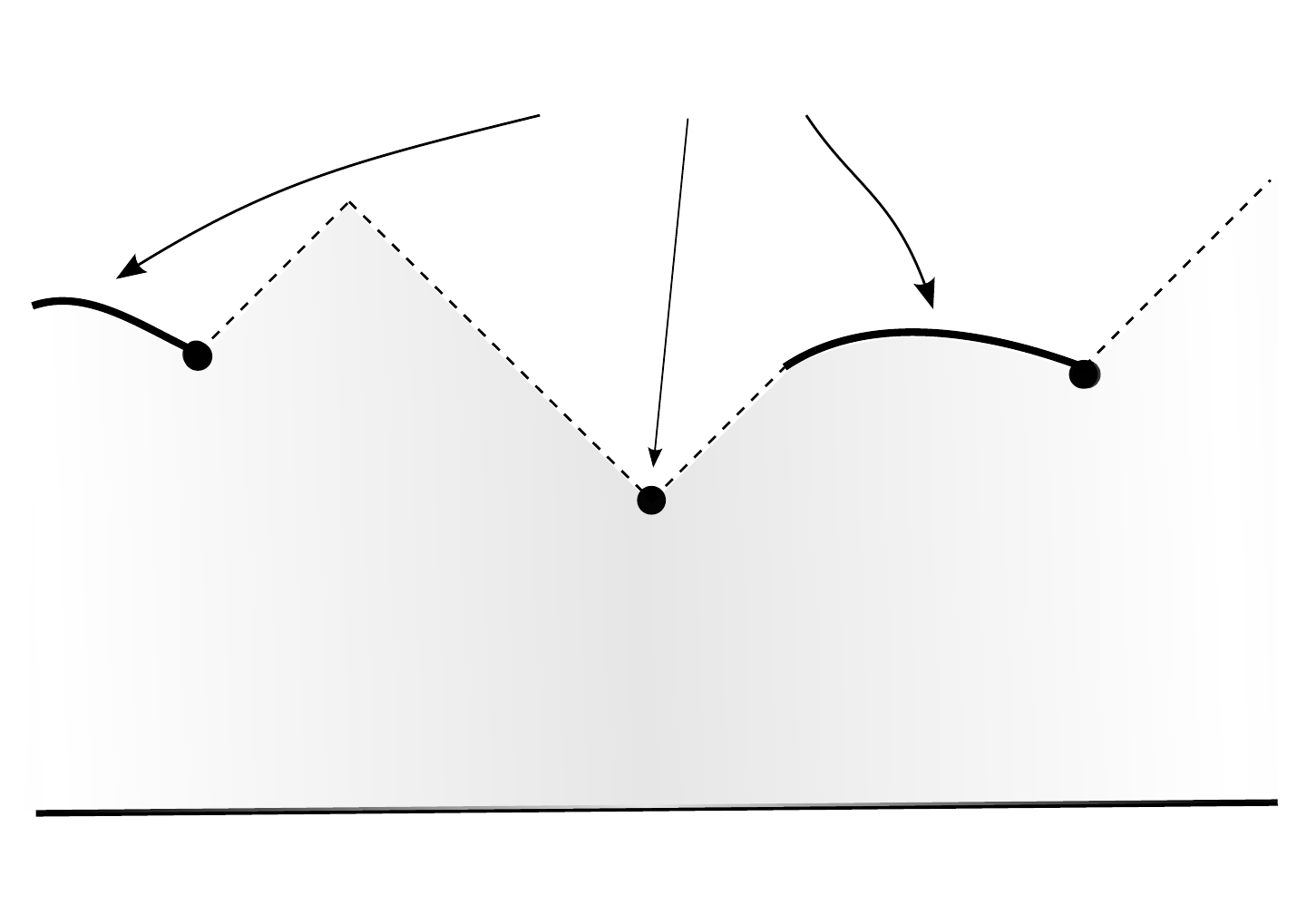' (pdf, eps, ps)
%%
%% To include the image in your LaTeX document, write
%%   \input{<filename>.pdf_tex}
%%  instead of
%%   \includegraphics{<filename>.pdf}
%% To scale the image, write
%%   \def\svgwidth{<desired width>}
%%   \input{<filename>.pdf_tex}
%%  instead of
%%   \includegraphics[width=<desired width>]{<filename>.pdf}
%%
%% Images with a different path to the parent latex file can
%% be accessed with the `import' package (which may need to be
%% installed) using
%%   \usepackage{import}
%% in the preamble, and then including the image with
%%   \import{<path to file>}{<filename>.pdf_tex}
%% Alternatively, one can specify
%%   \graphicspath{{<path to file>/}}
%% 
%% For more information, please see info/svg-inkscape on CTAN:
%%   http://tug.ctan.org/tex-archive/info/svg-inkscape
%%
\begingroup%
  \makeatletter%
  \providecommand\color[2][]{%
    \errmessage{(Inkscape) Color is used for the text in Inkscape, but the package 'color.sty' is not loaded}%
    \renewcommand\color[2][]{}%
  }%
  \providecommand\transparent[1]{%
    \errmessage{(Inkscape) Transparency is used (non-zero) for the text in Inkscape, but the package 'transparent.sty' is not loaded}%
    \renewcommand\transparent[1]{}%
  }%
  \providecommand\rotatebox[2]{#2}%
  \ifx\svgwidth\undefined%
    \setlength{\unitlength}{414.75429688bp}%
    \ifx\svgscale\undefined%
      \relax%
    \else%
      \setlength{\unitlength}{\unitlength * \real{\svgscale}}%
    \fi%
  \else%
    \setlength{\unitlength}{\svgwidth}%
  \fi%
  \global\let\svgwidth\undefined%
  \global\let\svgscale\undefined%
  \makeatother%
  \begin{picture}(1,0.70747141)%
    \put(0,0){\includegraphics[width=\unitlength]{FirstSingularity.pdf}}%
    \put(0.36648209,0.64400009){\color[rgb]{0,0,0}\makebox(0,0)[lb]{\smash{First singularities}}}%
    \put(0.54283432,0.03227814){\color[rgb]{0,0,0}\makebox(0,0)[lb]{\smash{\(t=0\)}}}%
  \end{picture}%
\endgroup%

\end{center}

A natural question is then: does there exist a unique maximal globally hyperbolic\footnote{Note that it depends on the solution \(u\) whether a subset of \(\R^{3+1}\) is globally hyperbolic or not.} solution of \(\eqref{quasi}\) with initial values \(f\) and \(h\)? In the following we sketch a construction of such an object.
\newline
\newline
\textbf{First step:} We show that \emph{global uniqueness} holds, i.e., given two solutions \(u_1 : U_1 \to \R\) and \(u_2 : U_2 \to \R\) to the above Cauchy problem, where \(U_i\) is globally hyperbolic with respect to \(u_i\) and with Cauchy surface \(\{t=0\}\), the two solutions then agree on \(U_1 \cap U_2\). 

There are different ways to establish global uniqueness. One could for example prove this using energy estimates. Note, however, that such a proof is necessarily local by character, since \(U_1 \cap U_2\) is not a priori globally hyperbolic with respect to either of the solutions. 

The proof we sketch in the following is based on a continuity argument and only appeals to the local uniqueness statement. By this statement, we know that there is some open and globally hyperbolic neighbourhood \(V \subset U_1 \cap U_2\) of \(\{t=0\}\) on which the two solutions  agree (note that `global hyperbolicity' is here well-defined since the two solutions agree on the domain in question). Let us take the union \(W\) of all such \emph{common globally hyperbolic developments} (CGHD) of \((U_1, u_1)\) and \((U_2, u_2)\). By definition this set is clearly maximal, i.e., it is the biggest globally hyperbolic set on which \(u_1\) and \(u_2\) agree. We also call it the \emph{maximal common globally hyperbolic development} (MCGHD). 

Assume the so obtained set is not equal to \(U_1 \cap U_2\). Then, as in the picture below, we can take a small spacelike slice \(S\) that touches \(\partial W \cap U_1 \cap U_2\).\footnote{This step actually requires a bit of care...}
\vspace{3mm} 
\begin{center}
\def\svgwidth{8cm}
%% Creator: Inkscape 0.48.2, www.inkscape.org
%% PDF/EPS/PS + LaTeX output extension by Johan Engelen, 2010
%% Accompanies image file '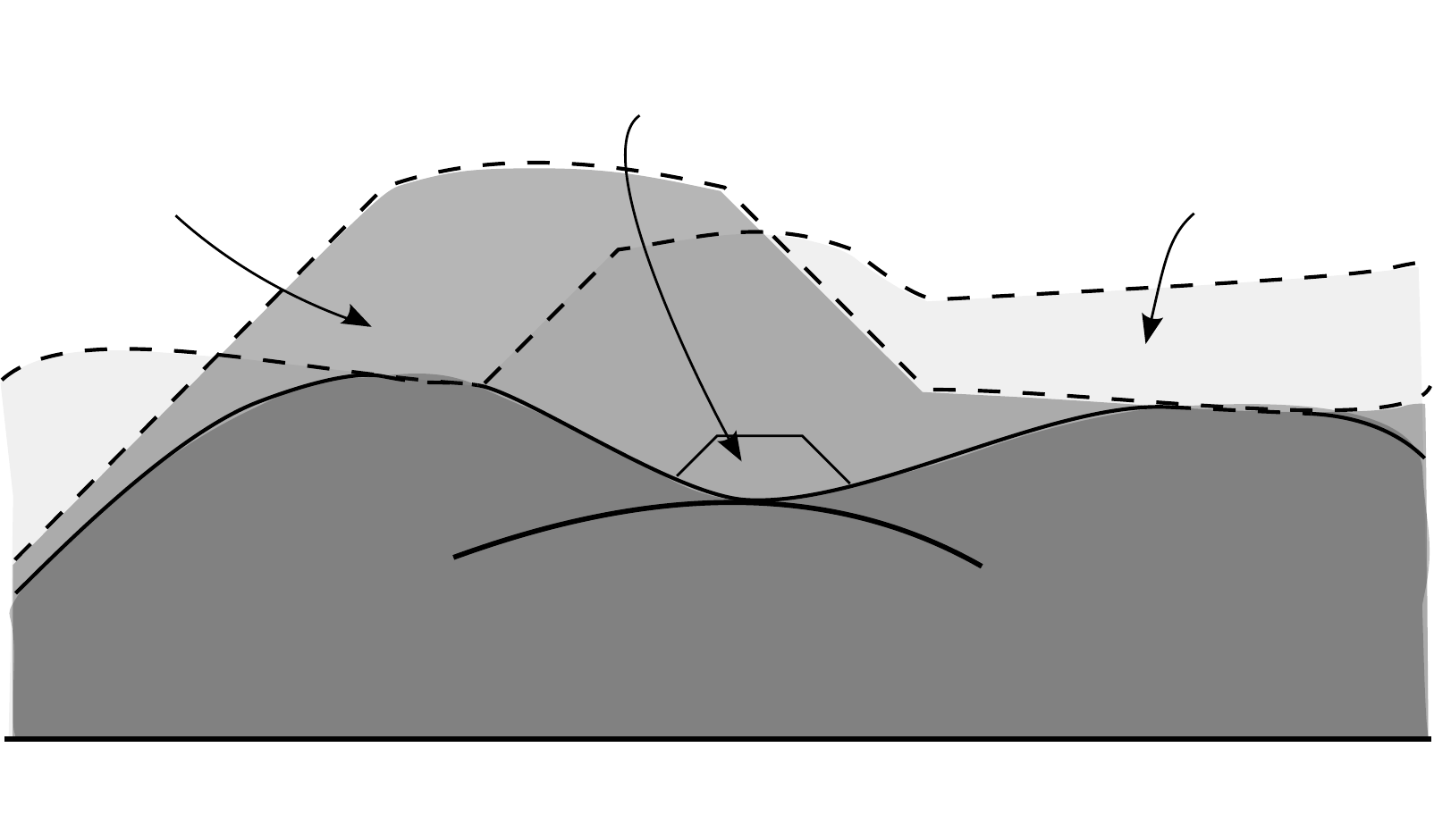' (pdf, eps, ps)
%%
%% To include the image in your LaTeX document, write
%%   \input{<filename>.pdf_tex}
%%  instead of
%%   \includegraphics{<filename>.pdf}
%% To scale the image, write
%%   \def\svgwidth{<desired width>}
%%   \input{<filename>.pdf_tex}
%%  instead of
%%   \includegraphics[width=<desired width>]{<filename>.pdf}
%%
%% Images with a different path to the parent latex file can
%% be accessed with the `import' package (which may need to be
%% installed) using
%%   \usepackage{import}
%% in the preamble, and then including the image with
%%   \import{<path to file>}{<filename>.pdf_tex}
%% Alternatively, one can specify
%%   \graphicspath{{<path to file>/}}
%% 
%% For more information, please see info/svg-inkscape on CTAN:
%%   http://tug.ctan.org/tex-archive/info/svg-inkscape
%%
\begingroup%
  \makeatletter%
  \providecommand\color[2][]{%
    \errmessage{(Inkscape) Color is used for the text in Inkscape, but the package 'color.sty' is not loaded}%
    \renewcommand\color[2][]{}%
  }%
  \providecommand\transparent[1]{%
    \errmessage{(Inkscape) Transparency is used (non-zero) for the text in Inkscape, but the package 'transparent.sty' is not loaded}%
    \renewcommand\transparent[1]{}%
  }%
  \providecommand\rotatebox[2]{#2}%
  \ifx\svgwidth\undefined%
    \setlength{\unitlength}{461.92260742bp}%
    \ifx\svgscale\undefined%
      \relax%
    \else%
      \setlength{\unitlength}{\unitlength * \real{\svgscale}}%
    \fi%
  \else%
    \setlength{\unitlength}{\svgwidth}%
  \fi%
  \global\let\svgwidth\undefined%
  \global\let\svgscale\undefined%
  \makeatother%
  \begin{picture}(1,0.58580306)%
    \put(0,0){\includegraphics[width=\unitlength]{QuasiExample.pdf}}%
    \put(0.6244145,0.01312111){\color[rgb]{0,0,0}\makebox(0,0)[lb]{\smash{\(t=0\)}}}%
    \put(0.07688775,0.13195232){\color[rgb]{0,0,0}\makebox(0,0)[lb]{\smash{\(W\)}}}%
    \put(0.06941163,0.45394102){\color[rgb]{0,0,0}\makebox(0,0)[lb]{\smash{\(U_1\)}}}%
    \put(0.84499414,0.44164321){\color[rgb]{0,0,0}\makebox(0,0)[lb]{\smash{\(U_2\)}}}%
    \put(0.69792376,0.17490681){\color[rgb]{0,0,0}\makebox(0,0)[lb]{\smash{\(S\)}}}%
    \put(0.24111765,0.52309302){\color[rgb]{0,0,0}\makebox(0,0)[lb]{\smash{Extension of MCGHD \(W\)}}}%
  \end{picture}%
\endgroup%

\end{center}

By assumption \(u_1\) and \(u_2\) agree in \(W\), thus by continuity they also agree on the slice \(S\). We now consider the initial value problem with the induced data on \(S\).\footnote{Note that a local uniqueness and existence statement for the initial value problem on \(S\) can be derived from \eqref{IV} by introducing slice coordinates for \(S\) and by appealing to the domain of dependence property.} Clearly, \(u_1\) and \(u_2\) are solutions, and thus, by the local uniqueness theorem, they agree in a small neighbourhood of \(S\). This, however, contradicts the maximality of \(W\). Hence, \(u_1\) and \(u_2\) agree on \(U_1 \cap U_2\).\footnote{The proof we just sketched yielded \(W = U_1 \cap U_2\) by \emph{contradiction}. However, it seems reasonable to expect that one can also prove \(W = U_1 \cap U_2\) \emph{directly} by the following continuity argument:
To begin with, the local uniqueness theorem shows that the set on which two solutions agree is not empty. By continuity of the solutions, we know then that the two solutions must also agree on the closure of this set, which furnishes the closedness part of the argument. Openness is achieved by restarting the local uniqueness argument from (spacelike slices that touch) the boundary, as in the above picture. Note however, that in order to obtain openness across null boundaries, one has to ``work one's way upwards'' along the null boundary, which makes this direct argument a bit more complicated. Also note that this continuity argument is qualitatively the same as the one already encountered in proving uniqueness of solutions to the initial value problem for regular ODEs.}

\textbf{Second step:} Having proved global uniqueness, the construction of the MGHD is now a trivial task:
We consider the set of all globally hyperbolic developments \(\{U_\alpha,u_\alpha\}_{\alpha \in A}\) of the initial data \(f\), \(h\) and note that this set is non-empty by the local existence theorem. We then take the union \(U := \bigcup_{\alpha \in A} U_\alpha\) of all the domains \(U_\alpha\) and define 
\begin{equation*}
u(x):= u_\alpha(x) \textnormal{ for } x \in U_\alpha\;.
\end{equation*}
By global uniqueness, this is well-defined. Moreover, it is easy to see  that the set \(U\) is globally hyperbolic with respect to \(u\) and that this development is maximal by construction.

\subsubsection{The case of the Einstein equations}

Our proof of the existence of the MGHD for the Einstein equations can be viewed as an `imitation' of the scheme just presented. To understand better the problems that have to be overcome, however, let us first qualitatively compare the Einstein equations with a quasilinear wave equation on a fixed background manifold: A solution to the Einstein equations is given by a pair \((M,g)\), where \(M\) is a manifold and \(g\) a Lorentzian metric on \(M\). The background manifold \(M\) is \emph{not} fixed here. The diffeomorphism invariance of the Einstein equations states that if \(\phi\) is a diffeomorphism from \(M\) to a manifold \(N\), then \((N,\phi_*g)\) is also a solution to the Einstein equations. Physically, these two solutions are indistinguishable - which suggests that one should consider the Einstein equations as `equations for isometry classes of Lorentzian manifolds' (cf. also Remark \ref{RemSym}). It is also only then that the Einstein equations become hyperbolic. Moreover, it is well-known that breaking the diffeomorphism invariance by imposing a harmonic gauge (this should be thought of as picking a representative of the isometry class) turns the Einstein equations into a system of quasilinear wave equations. 
It is thus reasonable to expect that the only problems caused in transferring the construction of the MGHD from Section \ref{Quasilin} to the Einstein equations are due to the fact that, while in the case of the quasilinear wave equation the objects one works with are functions defined on subsets of a fixed background manifold, for the Einstein equations one actually would have to consider isometry classes of Lorentzian manifolds. In particular we face the following two problems:
\begin{enumerate}[i)]
\item Already the \emph{definition} of `global uniqueness' does not transfer directly to the Einstein equations, since \(U_1 \cap U_2\) is not a priori defined for two GHDs \(U_1\) and \(U_2\) for the Einstein equations.
\item Since there is no fixed ambient space in the context of the Einstein equations, one cannot just take the union of all GHDs of given initial data in order to construct the MGHD.
\end{enumerate}

We discuss the first problem first.  
For the case of the quasilinear wave equation on a fixed background manifold, a trivially equivalent formulation of `global uniqueness' is that there is a globally hyperbolic development \((U,u)\) of the initial data such that \(U_1 \cup U_2\) is contained in \(U\) and such that \(u = u_1\) on \(U_1\) and \(u=u_2\) on \(U_2\). 

This formulation of `global uniqueness' \emph{does} transfer to the Einstein equations: \emph{Given two globally hyperbolic developments of the same initial data, there exists a globally hyperbolic development in which both isometrically embed}. This statement is the content of Theorem \ref{CommonExtension}. Moreover, it is exactly this notion of global uniqueness that is crucial for the existence of the MGHD.

Let us first motivate the method used in this paper for constructing this common extension of two GHDs for the Einstein equations: In the case of a quasilinear wave equation on a fixed background manifold, we would construct a common extension of \((U_1,u_1)\) and \((U_2,u_2)\) by first showing that the solutions agree on \(U_1 \cap U_2\) - as we did in Section \ref{Quasilin} - and thereafter extending both solutions to \(U_1 \cup U_2\). Let us observe here that instead of constructing the bigger space \(U_1 \cup U_2\) by taking the union of \(U_1\) and \(U_2\), we can also glue them together along \(U_1 \cap U_2\) - which yields the same result.
However, for the construction of the common extension, both operations only make sense, if we already know that the solutions agree on \(U_1 \cap U_2\). We can, however, still \emph{glue} along an a priori smaller set on which we know that the two solutions agree, i.e., along a common globally hyperbolic development \(V\) of \(U_1\) and \(U_2\). In general, the so obtained space will not be Hausdorff due to the presence of `corresponding boundary points', i.e., a point in \(\partial V \) that lies in \(U_1\) as well as in \(U_2\). The same argument which established global uniqueness above (cf. the last picture) shows, however, that if this is the case, then we can actually find a bigger CGHD along which we can glue. 

Let us now directly glue \(U_1\) and \(U_2\) together along the \emph{maximal} CGHD (recall, that this was defined as the union of all CGHDs). Again, the same argument that corresponds to the last picture shows that the MCGHD of \((U_1, u_1)\) and \((U_2, u_2)\) \emph{cannot have corresponding boundary points}\footnote{In particular we inferred that thus the MCGHD must be equal to \(U_1 \cap U_2\).} since this would violate the maximality of the MCGHD. In particular, we see that glueing along the MCGHD yields a Hausdorff space. 

This reinterpretation of the construction of the common extension \(U_1 \cup U_2\) of \(U_1\) and \(U_2\) for the case of a quasilinear wave equation can be transferred to the Einstein equations: In Section \ref{ExistenceMCGHD} we establish the existence of the MCGHD for two given GHDs for the Einstein equations. Note that this is also proved in the original paper by Choquet-Bruhat and Geroch - however, they appeal to Zorn's lemma. Here, we construct the MCGHD of two GHDs \(U_1\) and \(U_2\) by taking the union of all CGHDs (that are subsets of \(U_1\)) in \(U_1\).    
In Section \ref{PropertyMCGHD} we then give the rigorous proof that the MCGHD does not have corresponding boundary points, i.e., that the space obtained by glueing along the MCGHD, lets call it \(\tilde{M}\), is then indeed Hausdorff.  Moreover, it is more or less straightforward to show that \(\tilde{M}\) satisfies all other properties of a GHD, see Section \ref{Final}, which then finishes the construction of the common extension and thus proves global uniqueness for the Einstein equations. 

Let us summarise the main idea that guided the way for the construction of the common extension of two GHDs for the Einstein equations:
\begin{equation}
\label{NewIdeaEinstein}
\begin{split}
\textnormal{In the case of the Einstein equations, the appropriate analogue of `taking the union'} \\ \textnormal{of two GHDs is to glue them together along their MCGHD.} \qquad \qquad \qquad
\end{split}
\end{equation}

This statement, in spite of its simplicity, should be considered as the main new idea of this paper. It also leads straightforwardly to the construction of the MGHD in the case of the Einstein equations by proceeding in analogy to the case of a quasilinear wave equation on a fixed background manifold: for given initial data, we glue `all' GHDs together along their MCGHDs, see Section \ref{Final}.\footnote{\label{footclass}Let us already  remark here the following subtlety: The collection of \emph{all} GHDs of given initial data forms a \emph{proper} class, i.e., it is too `large' for being a set and, hence, also for performing the glueing construction using the axioms of \textbf{ZF}. In Section \ref{Final} we show that it suffices to work with an appropriate subclass of all GHDs, which actually is a \emph{set}.}

\subsection{Schematic comparison of the two proofs}
\label{Comparison}

\renewcommand{\arraystretch}{1.7}

\begin{center}
  \begin{tabular}{ p{7cm}| p{7cm} }
    \hspace{2cm}\textbf{Original proof} & \hspace{2.5cm}\textbf{New proof} \\ \hline
    Ensure existence of a maximal element \(M\) in the set of all GHDs (using Zorn's lemma).  &  \\
    Ensure existence of a MCGHD of two GHDs (using Zorn's lemma). & Construct MCGHD of two GHDs by taking the union (literally!) of all CGHDs. \\
    & Prove global uniqueness by `taking the union' (in the sense of \eqref{NewIdeaEinstein})  of two GHDs. \\
    Show that \(M\) is indeed the MGHD by `taking the union' (in the sense of \eqref{NewIdeaEinstein}).  & Construct MGHD by `taking the union' (in the sense of \eqref{NewIdeaEinstein}) of `all' GHDs. \\
    (Infer global uniqueness from the existence of the MGHD.) & \\ 
  \end{tabular}
\end{center}

\renewcommand{\arraystretch}{1.0}

\section{The basic definitions and the main theorems}
\label{Defs}

Let us start with some words about the stipulations we make: 
\begin{itemize}
\item
This paper is only concerned with the smooth case, i.e., we only consider smooth initial data for the Einstein equations. In particular, the MGHD we construct is, a priori, only maximal among \emph{smooth} GHDs. This raises the question whether one could extend the MGHD to a bigger GHD that is, however, less regular. 

An answer to this question is provided by the \emph{low regularity} local well-posedness theory for quasilinear wave equations, which in particular entails that \emph{as long as the solution remains in the low regularity class local well-posedness is proven in, any additional regularity is preserved}. The classical approach using energy estimates yields such a local well-posedness statement for very general quasilinear wave equations in \(H^{\nicefrac{5}{2} + \varepsilon}\). For the special case of the Einstein equations, the recent resolution of the bounded \(L^2\) curvature conjecture by Klainerman, Rodnianski and Szeftel implies that additional regularity is preserved as long as (roughly speaking) the metric is in $H^2$ (see \cite{KlRodSzef12} for details).

Regarding the technique of the proof given in this paper, it heavily depends on the causality theory developed for at least \(C^2\)-regular Lorentzian metrics. But as long as the initial data is such that it gives rise to a GHD of regularity at least \(C^2\), basically the same proof as given in this paper goes through. For work on the existence of the MGHD for rougher initial data along the lines of the original Choquet-Bruhat Geroch style argument using Zorn's lemma, see \cite{Chrus11}. Here one should mention that up to a few years ago the proof of local uniqueness (which plays, not surprisingly, a central role for proving global uniqueness) required one degree of differentiability more than the proof of local existence. This issue was overcome by Planchon and Rodnianski (\cite{PlaRod07}).

Having made these comments, we stipulate that all manifolds and tensor fields considered in this paper are smooth, even if this is not mentioned explicitly.

\item
We moreover assume that all Lorentzian manifolds we consider are connected and time oriented. The dimension of the Lorentzian manifolds is denoted by \(d+1\), where \(d \geq 1\).
\item
For simplicity of exposition we restrict our consideration to the \emph{vacuum} Einstein equations \(Ric(g) =0\). However, the inclusion of matter and/or of a cosmological constant does not pose any additional difficulty as long as a local existence and uniqueness statement holds. In fact, exactly the same proof applies.
\end{itemize}

The Einstein equations are of hyperbolic character, they allow for a well-posed initial value problem. \emph{Initial data} \((\overline{M},\bar{g},\bar{k})\) for the  vacuum Einstein equations consists of a connected \(d\)-dimensional Riemannian manifold \((\overline{M},\bar{g})\) together with a symmetric \(2\)-covariant tensor field \(\bar{k}\) on \(\overline{M}\) that satisfy the \emph{constraint equations}:
\begin{align*}
\bar{R} - |\bar{k}|^2 + (\tr{\bar{k}})^2 &=0 \\
\bar{\nabla}^i \bar{k}_{ij} - \bar{\nabla}_j\tr{\bar{k}} &=0 \;,
\end{align*}
where \(\bar{R}\) denotes the scalar curvature  and \(\bar{\nabla}\) denotes the Levi-Civita connection on \(\overline{M}\). 

\begin{definition}
A \emph{globally hyperbolic development} (GHD) \((M,g,\iota)\) of initial data \((\overline{M},\bar{g},\bar{k})\) is a time oriented, globally hyperbolic Lorentzian manifold \((M,g)\) that satisfies the vacuum Einstein equations, together with an embedding \(\iota : \overline{M} \to M\) such that 
\begin{enumerate}
\item \(\iota^*(g) = \bar{g}\)
\item \(\iota^*(k) = \bar{k}\), where \(k\) denotes the second fundamental form of \(\iota(\overline{M})\) in \(M\) with respect to the future normal
\item \(\iota(\overline{M})\) is a Cauchy surface in \((M,g)\).
\end{enumerate}
\end{definition}

\begin{definition}
\label{extension}
Given two GHDs \((M,g,\iota)\) and \((M',g',\iota')\) of the same initial data, we say that \((M',g',\iota')\) is an \emph{extension} of \((M,g,\iota)\) iff there exists a time orientation preserving isometric embedding\footnote{We lay down some terminology here: An \emph{isometry} is a diffeomorphism that preserves the metric. An \emph{isometric immersion} is an immersion that preserves the metric. Finally, an \emph{isometric embedding} is an isometric immersion that is a diffeomorphism onto its image.} \(\psi : M \to M'\) that preserves the initial data, i.e.\ \(\psi \circ \iota = \iota'\). 
\end{definition}

\begin{def1}
\label{PreDefCGHD}
Given two GHDs \((M,g,\iota)\) and \((M',g',\iota')\) of  initial data \((\overline{M},\bar{g},\bar{k})\), we say that a GHD \((U,g_U,\iota_U)\) of the same initial data is a \emph{common globally hyperbolic development} (CGHD) of \((M,g,\iota)\) and \((M',g',\iota')\) iff both \((M,g,\iota)\) and \((M',g',\iota')\) are extensions of \((U,g_U,\iota_U)\).
\end{def1}

Paraphrasing Definition \ref{PreDefCGHD}, a GHD \(U\) is a CGHD of GHDs \(M\) and \(M'\) if, and only if, \(U\) is `contained' in \(M\) as well as in \(M'\). Here we have just written \(M\) instead of \((M,g,\iota)\), etc. We will from now on often use this shorthand notation.

We now give a slightly different definition of a common globally hyperbolic development and discuss the relation with the previous definition thereafter in Remark \ref{RemSym}.

\begin{def2}
\label{DefCGHD}
Given two GHDs \((M,g,\iota)\) and \((M',g',\iota')\) of  initial data \((\overline{M},\bar{g},\bar{k})\), we say that a GHD \((U \subseteq M,g|_U,\iota)\) is a \emph{common globally hyperbolic development} (CGHD) of \((M,g,\iota)\) and \((M',g',\iota')\) iff  \((M',g',\iota')\) is an extension of \((U,g_U,\iota_U)\).
\end{def2}

\begin{remark}
\label{RemSym}
\begin{enumerate}
\item
The diffeomorphism invariance of the Einstein equations implies that if \(M\) is a GHD of certain initial data, then so is any spacetime that is isometric to \(M\). From a physical point of view, isometric spacetimes should be considered to be the same, i.e., one should actually consider the \emph{isometry class} of a GHD to be the solution to the Einstein equations. It is easy to check that the \emph{Definitions \ref{extension} and \ref{PreDefCGHD} also descend to the isometry classes of GHDs, i.e., they do not depend on the chosen representative of the isometry class}. It is also only when one considers isometry classes that one can prove uniqueness for the initial value problem to the Einstein equations in the strict meaning of this word. However, working with isometry classes has a decisive disadvantage for the purposes of this paper: the isometry class of a given GHD is a \emph{proper} class, i.e., not a set. Thus, if we considered an infinite number of isometry classes, not even the full axiom of choice would be strong enough to pick a representative of each - and we need a representative to work with. We thus refrain from considering isometry classes of GHDs. 

\item
As just mentioned, Definition \ref{PreDefCGHD} is diffeomorphism invariant. In Definition \ref{DefCGHD} we break the diffeomorphism invariance by requiring that a CGHD \(U\) of \(M\) and \(M'\) is realised as a subset of \(M\). However, this is not a serious restriction, since given any CGHD \(U\) of \(M\) and \(M'\) in the sense of Definition \ref{PreDefCGHD}, we can isometrically embed \(U\) into \(M\) by using the isometric embedding that is provided by \(M\) being an extension of \(U\).

Although Definition \ref{DefCGHD} is a bit less natural, we will choose it over Definition \ref{PreDefCGHD} in this paper since, for our purposes, it is more convenient to work with. Also note that while Definition \ref{PreDefCGHD} is symmetric in \(M\) and \(M'\), i.e., \emph{\(U\) being a CGHD of \(M\) and \(M'\)} is the same as \emph{\(U\) being a CGHD of \(M'\) and \(M\)}, the symmetry is broken in Definition \ref{DefCGHD}.
\end{enumerate}
\end{remark}

The local existence and uniqueness theorem for the initial value problem for the vacuum Einstein equations can now be phrased as follows:

\begin{theorem}
\label{LocalTheory}
Given initial data for the vacuum Einstein equations, there exists a GHD, and for any two GHDs of the same initial data, there exists a CGHD.
\end{theorem}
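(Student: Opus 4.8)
The plan is to treat this as the standard local theory for the vacuum Einstein equations, organised around the harmonic (wave) gauge, which reduces $Ric(g)=0$ to a quasilinear system of wave equations to which a local existence and uniqueness statement of the form \eqref{IV} applies. For the \emph{existence of a GHD} I would first introduce wave coordinates: on a neighbourhood of $\overline{M}$ one seeks coordinates $(x^0,\dots,x^d)$ satisfying $\Box_g x^\mu=0$, which turns the Einstein equations into the reduced Einstein equations $g^{\alpha\beta}\partial_\alpha\partial_\beta g_{\mu\nu}=N_{\mu\nu}(g,\partial g)$, a quasilinear wave system for the components $g_{\mu\nu}$. Its Cauchy data are reconstructed from $(\overline{M},\bar g,\bar k)$ together with a fixed gauge normalisation on the initial slice (e.g.\ lapse one, shift zero), and the constraint equations are precisely what forces the solution of the reduced system to remain in wave gauge and hence to solve $Ric(g)=0$ — the classical argument of Choquet-Bruhat \cite{Choquet52}. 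Applying the local well-posedness statement in the smooth category, patching the local solutions over a countable coordinate cover of $\overline{M}$ via finite speed of propagation, and then passing to the interior of the Cauchy development of the image of $\overline{M}$ inside the resulting spacetime, produces a globally hyperbolic $(M,g)$ solving $Ric(g)=0$ in which an embedded copy of $\overline{M}$ is a Cauchy surface inducing $\bar g$ and $\bar k$; this is the desired GHD.

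For the \emph{existence of a CGHD} of two given GHDs $(M,g,\iota)$ and $(M',g',\iota')$ I would run the uniqueness half of the same argument. On a neighbourhood of $\iota(\overline M)$ in $M$ I would construct wave coordinates by solving $\Box_g x^\mu=0$ with the \emph{same} initial data as used in the existence argument (a fixed coordinate system on $\overline M$ for the spatial functions, $x^0=0$, and normal derivatives fixed by the chosen gauge normalisation); crucially, this initial data depends only on $(\overline M,\bar g,\bar k)$. Doing the same in $M'$, the components of $g$ and of $g'$ in these respective coordinates are two solutions of the reduced Einstein system with identical Cauchy data, so by local uniqueness for quasilinear wave equations they agree on a neighbourhood of $\{x^0=0\}$. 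Unwinding the coordinates, this yields a time-orientation-preserving isometry $\Phi$ from an open neighbourhood $\Omega$ of $\iota(\overline M)$ in $M$ onto an open neighbourhood of $\iota'(\overline M)$ in $M'$ with $\Phi\circ\iota=\iota'$.

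It remains to extract an honest GHD from $\Phi$. Regarding $\Omega$ as a spacetime in its own right, $\iota(\overline M)$ is an acausal spacelike hypersurface in it, so $U:=\mathrm{int}_\Omega D_\Omega\big(\iota(\overline M)\big)$ is open, globally hyperbolic with $\iota(\overline M)$ a Cauchy surface, and still solves $Ric(g)=0$; hence $(U,g|_U,\iota)$ is a GHD of the initial data. Since $\Phi|_U:U\to M'$ is a time-orientation-preserving isometric embedding with $\Phi|_U\circ\iota=\iota'$, the GHD $M'$ is an extension of $(U,g|_U,\iota)$ in the sense of Definition \ref{extension}, so $(U,g|_U,\iota)$ is a CGHD of $M$ and $M'$ in the sense of Definition \ref{DefCGHD}, as required.

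The main obstacle is the uniqueness part: arranging that the wave-coordinate constructions in $M$ and in $M'$ are driven by \emph{identical} Cauchy data (so that the two reduced systems genuinely share their initial data), which is exactly where the hypotheses $\iota^*g=\iota'^*g'=\bar g$ and $\iota^*k=\iota'^*k'=\bar k$ enter, together with the bookkeeping needed to globalise the locally constructed solution and isometry over a possibly non-compact $\overline M$ — a countable coordinate cover and the local uniqueness just established suffice, so no more than countable choice is used. All of the underlying PDE input (local well-posedness and local uniqueness for quasilinear wave equations, and the constraint-propagation computation) is classical in the smooth category and may be cited, e.g.\ \cite{Choquet52,Ring13,Sogge}, with \cite{PlaRod07} for the sharp form of the uniqueness statement; the only genuinely new bookkeeping is the extraction of $U$ in the last step.
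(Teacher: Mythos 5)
The paper does not actually prove Theorem \ref{LocalTheory}: it is quoted as the classical local existence and (geometric) uniqueness result, with the remark that ``the essential details of this theorem were proven by Choquet-Bruhat in 1952'' (\cite{Choquet52}), and it is then used as a black box throughout --- indeed the whole point of the paper is what happens \emph{after} this local input. Your sketch is the standard harmonic-gauge argument underlying that citation, and it is correct in outline: reduction of $Ric(g)=0$ to a quasilinear wave system, reconstruction of the Cauchy data from $(\overline M,\bar g,\bar k)$ plus a gauge normalisation, constraint propagation, local well-posedness, and then extraction of the interior of the Cauchy development of $\iota(\overline M)$ to obtain global hyperbolicity; for the CGHD, running the same reduction in both $M$ and $M'$ with identical reduced Cauchy data and invoking local uniqueness. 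Two points deserve emphasis if this were to be written out in full. First, the patching over a possibly non-compact $\overline M$ is where most of the work lies: the locally constructed solutions (and local isometries) must be shown to agree on overlaps, which is itself a geometric uniqueness statement and not merely ``bookkeeping''; this is carried out in detail in Ringstr\"om's \cite{Ring13}. Second, once $\Phi$ is assembled patch by patch you only have an isometric \emph{immersion} a priori; injectivity of $\Phi|_U$ is then exactly the content of Lemma \ref{IsoDiffeo} of this paper (an isometric immersion of globally hyperbolic spacetimes restricting to a diffeomorphism of Cauchy surfaces is an embedding), so you should invoke that rather than assert that $\Phi$ is an isometry onto an open set by construction. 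With those caveats your proposal is a faithful reconstruction of the classical proof that the paper delegates to the literature.
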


The essential details of this theorem were proven by Choquet-Bruhat in 1952, see \cite{Choquet52}.
The next two theorems are the main theorems of this paper.

\begin{theorem}[Global uniqueness]
\label{CommonExtension}
Given two GHDs \(M\) and \(M'\) of the same initial data, there exists a GHD \(\tilde{M}\) that is an extension of \(M\) and \(M'\).
\end{theorem}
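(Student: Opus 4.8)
The plan is to imitate, in the diffeomorphism-invariant setting, the "take the union" construction sketched in Section \ref{Quasilin}, following the guiding principle \eqref{NewIdeaEinstein}: the analogue of $U_1 \cup U_2$ is the space obtained by gluing $M$ and $M'$ along their maximal common globally hyperbolic development (MCGHD). First I would construct the MCGHD of $M$ and $M'$: using Definition \ref{DefCGHD}, consider the collection of all CGHDs $(U \subseteq M, g|_U, \iota)$ of $M$ and $M'$; this collection is nonempty by Theorem \ref{LocalTheory}. I would then take $V := \bigcup_\alpha U_\alpha$ (literally, as subsets of $M$), define the metric on $V$ to be $g|_V$, and check that $V$ is again a CGHD — that is, that $V$ is globally hyperbolic with Cauchy surface $\iota(\overline{M})$, and that the isometric embeddings $\psi_\alpha : U_\alpha \to M'$ patch together to a single time-orientation-preserving isometric embedding $\psi : V \to M'$. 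The patching uses that any two CGHDs of $M$ and $M'$ agree on their overlap (a consequence of local uniqueness, Theorem \ref{LocalTheory}, applied along spacelike slices, exactly as in the quasilinear case), so the $\psi_\alpha$ are compatible. By construction $V$ is then the largest CGHD inside $M$; this is the MCGHD.

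Next I would form the glued space $\tilde M := M \sqcup_V M'$, i.e.\ the quotient of the disjoint union $M \sqcup M'$ by the identification $x \sim \psi(x)$ for $x \in V$, equipped with the obvious smooth structure, Lorentzian metric (well-defined since $\psi$ is an isometric embedding), time orientation, and the embedding of the initial data coming from $\iota$. The candidate identity maps $M \hookrightarrow \tilde M$ and $M' \hookrightarrow \tilde M$ are then time-orientation-preserving isometric immersions preserving the initial data, so the only thing standing between us and the theorem is showing that $\tilde M$ is a \emph{manifold} — concretely, that it is Hausdorff — and that it is \emph{globally hyperbolic} with $\iota(\overline{M})$ a Cauchy surface.

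The main obstacle is exactly the Hausdorff property. Non-Hausdorffness would force the existence of a "corresponding boundary pair": a point $p \in \partial V \subseteq M$ and a point $p' \in M' \setminus \psi(V)$ which are limits, in $M$ and $M'$ respectively, of a sequence in $V$ whose image under $\psi$ converges to $p'$. The strategy to rule this out is the one already used for global uniqueness of the quasilinear wave equation and in the Choquet-Bruhat–Geroch argument: near such a $p$ one constructs a small spacelike hypersurface $T$ through $p$ with $T \setminus \{p\} \subseteq V$; then $T' := \psi(T \setminus \{p\}) \cup \{p'\}$ is a spacelike hypersurface in $M'$, and the induced initial data on $T$ and $T'$ agree by continuity. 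Local existence and uniqueness for the Einstein equations with data on $T$ (reduced from Theorem \ref{LocalTheory} via slice coordinates and the domain-of-dependence property) then yields an isometric identification of a neighbourhood of $p$ in $M$ with a neighbourhood of $p'$ in $M'$ extending $\psi$, hence a strictly larger CGHD than $V$ — contradicting the maximality of the MCGHD. The delicate points here, which I expect to require real care, are (a) producing the spacelike slice $T$ so that $T \setminus \{p\}$ genuinely lands in $V$ and $T$ itself is a Cauchy slice of a suitable globally hyperbolic neighbourhood (this is the "step requires a bit of care" footnoted in the sketch, and near null portions of $\partial V$ one must work one's way up the null boundary), and (b) verifying that the enlarged identification really does glue to give a CGHD in the precise sense of Definition \ref{DefCGHD}, i.e.\ is realised as a subset of $M$ and is globally hyperbolic.

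Finally, with $\tilde M$ known to be Hausdorff, I would check it inherits the remaining GHD properties: smoothness and the vacuum equations hold locally hence on $\tilde M$; time orientation is consistent; and global hyperbolicity with Cauchy surface $\iota(\overline M)$ follows since $M$ and $M'$ are globally hyperbolic with $\iota(\overline M)$, $\iota'(\overline M)$ Cauchy, and every inextendible causal curve in $\tilde M$ decomposes into pieces lying in the copies of $M$ and $M'$ (one uses here that $V$ is a CGHD, so causal curves crossing the gluing region behave well). This makes $\tilde M$ a GHD extending both $M$ and $M'$, proving the theorem.
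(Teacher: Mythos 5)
Your proposal follows essentially the same route as the paper's proof: construct the MCGHD as the literal union of all CGHDs realised inside $M$ (Theorem \ref{ExMCGHD}), glue $M$ and $M'$ along it, reduce Hausdorffness of the quotient to the absence of corresponding boundary points, and rule those out by erecting a spacelike slice at the boundary and restarting local existence/uniqueness there (Section \ref{PropertyMCGHD}, Theorem \ref{NotMCGHD}); you also correctly locate the technical crux in the construction of that slice. One step, however, is justified by an argument that would not work as stated: the compatibility of the embeddings $\psi_\alpha$ on overlaps is not a consequence of PDE local uniqueness. Theorem \ref{LocalTheory} only produces \emph{a} CGHD of two GHDs; it says nothing about two \emph{given} isometric embeddings coinciding, and note that with Definition \ref{DefCGHD} the $U_\alpha$ are literal subsets of $M$ carrying $g$, so they tautologically agree as spacetimes --- only the maps into $M'$ need comparing. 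What is actually needed is the rigidity of isometric immersions: two such maps on a connected Lorentzian manifold agreeing to first order at one point agree everywhere (Lemma \ref{IsoConnected}), and the $\psi_\alpha$ all agree to first order along the Cauchy surface because they preserve the initial data embedding and the time orientation (Corollary \ref{welldefined}). The same rigidity, not local uniqueness, is what shows the extended identification in your Hausdorff argument agrees with $\psi$ on the overlap with $U$. Two further points you defer but should be aware of: the patched $\psi$ is a priori only an immersion and must be upgraded to an embedding (Lemma \ref{IsoDiffeo}), and the existence of a boundary point at which a spacelike slice can be erected --- your step (a) --- is where the paper invests most of its effort (Lemmas \ref{ExtendIso}, \ref{SpacelikePoint}, \ref{FuturePoint}, and the time-separation level-set construction of $S$ in the proof of Theorem \ref{NotMCGHD}).
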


\begin{theorem}[Existence of MGHD]
\label{MGHD}
Given initial data there exists a GHD \(\tilde{M}\) that is an extension of any other GHD of the same initial data.  The GHD \(\tilde{M}\) is unique up to isometry and is called the \emph{maximal globally hyperbolic development (MGHD)} of the given initial data.
\end{theorem}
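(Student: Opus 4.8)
The plan is to mimic the second step of the quasilinear-wave-equation argument sketched in Section \ref{Quasilin}, using Theorem \ref{CommonExtension} (global uniqueness) as the analogue of ``the solutions agree on $U_1 \cap U_2$''. The construction of $\tilde M$ should be: take ``all'' GHDs of the given initial data, and glue them together along their pairwise MCGHDs, exactly as formulated in \eqref{NewIdeaEinstein}. Concretely, I would first deal with the set-theoretic subtlety flagged in footnote \ref{footclass}: the collection of all GHDs is a proper class, so before gluing anything I must cut it down to a set. The standard move is to observe that every GHD of fixed initial data $(\overline M,\bar g,\bar k)$ is, as a smooth manifold, second countable and of fixed dimension $d+1$, hence diffeomorphic to a subset of some fixed ``universal'' set (e.g.\ one can realise every such manifold, up to diffeomorphism, as a quotient of an open subset of $\R^{d+1}\times\N$, or embed it via Whitney into a fixed $\R^{2(d+1)+1}$); the isometry classes of GHDs therefore form a set $X$, and choosing — via countable choice, which the paper explicitly allows — or rather just fixing representatives realised inside this universal set, we obtain an honest set $\{(M_\alpha,g_\alpha,\iota_\alpha)\}_{\alpha\in A}$ of GHDs such that every GHD is an extension of (equivalently, isometric to) one of them. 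This is the place where I expect to have to be careful, and it is the main conceptual obstacle: one must argue the reduction does not lose any GHD, i.e.\ that the MGHD built from the set $A$ genuinely extends \emph{every} GHD, not just those in $A$; this follows because an arbitrary GHD is isometric to some $M_\alpha$, and extension is transitive and descends to isometry classes by Remark \ref{RemSym}.

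Next I would carry out the gluing. On the disjoint union $\bigsqcup_{\alpha\in A} M_\alpha$ introduce the relation $p\sim q$ for $p\in M_\alpha$, $q\in M_\beta$ iff $p$ and $q$ correspond under the (time-orientation-preserving, initial-data-preserving) isometric embedding of the MCGHD of $M_\alpha$ and $M_\beta$ into each of them; the MCGHD of a pair exists by the results of Section \ref{ExistenceMCGHD} (assumed), and ``correspond'' makes sense because, by the rigidity of isometric embeddings fixing a Cauchy surface, the embeddings of the MCGHD into $M_\alpha$ and into $M_\beta$ are uniquely determined. I would check that $\sim$ is an equivalence relation: reflexivity and symmetry are immediate; transitivity is the one point requiring work and uses that the MCGHD is maximal — given $p\sim q$ and $q\sim r$, one shows the overlapping CGHDs can be combined, and that the relevant isometric identifications agree on the overlap, again by uniqueness of the embedding of a CGHD. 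Set $\tilde M := \big(\bigsqcup_\alpha M_\alpha\big)/\!\sim$ with the quotient topology; each $M_\alpha$ maps into $\tilde M$, these maps are open embeddings (because gluing is along open sets), and the metrics $g_\alpha$ and time orientations are compatible on overlaps by construction, so they descend to a smooth Lorentzian metric $\tilde g$ and a time orientation on $\tilde M$ solving $Ric(\tilde g)=0$; the embedding $\tilde\iota$ comes from any (hence every) $\iota_\alpha$.

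It then remains to verify that $(\tilde M,\tilde g,\tilde\iota)$ is a GHD and that it extends every $M_\alpha$ (hence every GHD), plus uniqueness up to isometry. The extension property is essentially by construction: $M_\alpha\hookrightarrow\tilde M$ is a time-orientation- and initial-data-preserving isometric embedding. Checking GHD-hood breaks into: (a) Hausdorffness — this is where the key input of the paper enters, namely that the MCGHD of any two GHDs has no ``corresponding boundary points'' (Section \ref{PropertyMCGHD}, assumed), which is exactly what prevents a pair of non-separated points in $\tilde M$; a putative non-Hausdorff pair $[p],[q]$ with $p\in M_\alpha$, $q\in M_\beta$ would be a pair of corresponding boundary points of the MCGHD of $M_\alpha,M_\beta$, contradiction; (b) connectedness — each $M_\alpha$ is connected and shares the Cauchy surface $\tilde\iota(\overline M)$, so $\tilde M=\bigcup_\alpha M_\alpha$ is connected; (c) $\tilde\iota(\overline M)$ is a Cauchy surface in $\tilde M$ — one shows any inextendible causal curve in $\tilde M$ lies, by a connectedness/``it cannot leave and re-enter'' argument along the curve, within a single $M_\alpha$ up to reparametrisation, or more carefully that any such curve meets $\tilde\iota(\overline M)$ exactly once, using that it does so inside each $M_\alpha$ it passes through and that global hyperbolicity of the pieces is preserved under the gluing; this, together with the standard fact that a time-oriented Lorentzian manifold with a Cauchy surface is globally hyperbolic, gives global hyperbolicity. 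Finally, uniqueness: if $\tilde M_1,\tilde M_2$ are both GHDs extending every GHD, then each extends the other; a GHD that is an extension of another GHD which is in turn an extension of it must be isometric to it (the two isometric embeddings compose to an isometric embedding $\tilde M_1\to\tilde M_1$ fixing the Cauchy surface, which by global hyperbolicity and the rigidity of such maps must be onto, hence the composition is the identity up to the standard argument), so $\tilde M_1$ and $\tilde M_2$ are isometric. The main obstacle throughout is bookkeeping the compatibility of all the pairwise identifications simultaneously — i.e.\ that $\sim$ really is transitive and that the descended structures are well-defined — together with the proper-class-to-set reduction at the start; the genuinely substantive analytic facts (existence and boundary-point-freeness of the MCGHD, and Theorem \ref{CommonExtension}) are imported as black boxes.
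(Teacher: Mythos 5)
Your construction is the same as the paper's: reduce the proper class of GHDs to a set of representatives, glue all of them along their pairwise MCGHDs, check transitivity of the identification via composition of CGHDs, get Hausdorffness from Theorem \ref{TNDVersion} (no corresponding boundary points), and verify the GHD axioms for the quotient. The one genuinely different ingredient is the proper-class-to-set reduction. The paper does \emph{not} use a Whitney-type universal ambient space; instead it takes $X$ to be the set of GHDs whose underlying manifold is an open neighbourhood of $\overline{M}\times\{0\}$ in $\overline{M}\times\R$ with the canonical embedding $\iota(x)=(x,0)$, and then proves that every GHD is isometric to a member of $X$ by flowing the Cauchy surface along a global timelike vector field $T$: the flow map $\chi$ restricted to $\mathcal{D}_{\iota(\overline{M})}$ is shown to be a diffeomorphism onto $M$ precisely because $\iota(\overline{M})$ is a Cauchy surface. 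Your Whitney reduction buys you a softer, purely differential-topological argument that works for any second countable manifold of fixed dimension; the paper's reduction buys a canonical normal form (every representative contains $\overline{M}\times\{0\}$ with the same $\iota$), which slightly streamlines the bookkeeping of the embeddings $\iota_\alpha$ later. Both are legitimate, and your observation that the reduction loses nothing because extension descends to isometry classes (Remark \ref{RemSym}) is exactly the justification the paper gives.

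Two loose ends. First, be careful with the phrase ``the isometry classes of GHDs therefore form a set'': as Remark \ref{RemSym} points out, each isometry class is itself a proper class, so this statement is not literally available in \textbf{ZF}; your self-correction --- work directly with the set of representatives realised inside the universal ambient, not with the classes --- is the right move and should be the official formulation. Second, your checklist for GHD-hood omits second countability of $\tilde{M}$. Since the gluing is over a potentially uncountable index set, second countability of the quotient is not automatic from that of the pieces; the paper handles this by citing Geroch's theorem (\cite{Ger68}) that a connected, Hausdorff, locally euclidean space admitting a smooth Lorentzian metric is second countable. This is a small but genuine step you need to add (or replace by some other argument, e.g.\ deducing paracompactness from global hyperbolicity).
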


Note that Theorem \ref{MGHD} clearly implies Theorem \ref{CommonExtension}. In the original proof by Choquet-Bruhat and Geroch, Theorem \ref{MGHD} was proven without first proving Theorem \ref{CommonExtension}. In our approach, however, we first establish Theorem \ref{CommonExtension} - Theorem \ref{MGHD} then follows easily.

\section{Proving the main theorems}
\label{Proofs}

\subsection{The existence of the maximal common globally hyperbolic development}
\label{ExistenceMCGHD}

In this section we construct the unique maximal common globally hyperbolic development of two GHDs. We start with a couple of lemmata that are needed for this construction.

\begin{lemma}
\label{IsoConnected}
Let \((M,g)\) and \((M',g')\) be Lorentzian manifolds, where \(M\) is connected. Furthermore, let \(\psi_1, \psi_2 : M \to M'\) be two isometric immersions with \(\psi_1(p) = \psi_2(p)\) and \(d\psi_1 (p) = d\psi_2(p)\) for some \(p \in M\). It then follows that \(\psi_1 = \psi_2\).
\end{lemma}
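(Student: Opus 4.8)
\textbf{Proof plan for Lemma \ref{IsoConnected}.}

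The plan is to show that the set where the two isometric immersions agree to first order is open, closed, and nonempty, and then invoke connectedness of \(M\). Define
\[
S := \{ q \in M : \psi_1(q) = \psi_2(q) \text{ and } d\psi_1(q) = d\psi_2(q) \}.
\]
By hypothesis \(p \in S\), so \(S \neq \emptyset\). The set \(S\) is closed because \(\psi_1, \psi_2\) and their differentials are continuous; more precisely, \(q \mapsto (\psi_1(q), d\psi_1(q))\) and \(q \mapsto (\psi_2(q), d\psi_2(q))\) are continuous maps into the frame bundle (or simply into a bundle of jets), and \(S\) is the preimage of the diagonal, which is closed since \(M'\) is Hausdorff. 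So the whole game is to prove that \(S\) is open.

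For openness, I would use normal coordinates / the exponential map. Fix \(q \in S\). Since \(\psi_1, \psi_2\) are isometric immersions, they map geodesics of \(g\) through \(q\) to geodesics of \(g'\), i.e. \(\psi_i \circ \exp^g_q = \exp^{g'}_{\psi_i(q)} \circ \, d\psi_i(q)\) on a neighbourhood of \(0\) in \(T_qM\) where the left-hand side is defined. (This uses that an isometric immersion is totally geodesic and that it intertwines the Levi-Civita connections, hence the geodesic sprays.) Because \(\psi_1(q) = \psi_2(q) =: q'\) and \(d\psi_1(q) = d\psi_2(q) =: L\), the two right-hand sides coincide: \(\exp^{g'}_{q'} \circ L\). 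Hence on a normal neighbourhood \(V\) of \(q\), mapped diffeomorphically by \(\exp^g_q\) from a star-shaped neighbourhood of \(0\), we get \(\psi_1|_V = \psi_2|_V\). Differentiating this identity of maps then also gives \(d\psi_1 = d\psi_2\) on \(V\) (alternatively, differentiate \(\psi_i \circ \exp^g_q = \exp^{g'}_{q'}\circ L\) at points of \(V\) directly). Thus \(V \subseteq S\), and \(S\) is open.

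The main obstacle — or rather the point requiring the most care — is justifying cleanly that an isometric immersion maps geodesics to geodesics and intertwines exponential maps, since \(\psi_i\) are only immersions, not embeddings, so one cannot naively identify \(M\) with a submanifold of \(M'\). The clean statement to use is that an isometric immersion pulls back the Levi-Civita connection of \(g'\) to that of \(g\) (equivalently, \(\psi_i\) is an affine map for the respective connections, or: a geodesic \(\gamma\) in \(M\) has \(\psi_i \circ \gamma\) satisfying the geodesic equation in \(M'\) because \(\nabla'_{\dot{(\psi_i\gamma)}}\dot{(\psi_i\gamma)} = d\psi_i(\nabla_{\dot\gamma}\dot\gamma) = 0\), using compatibility of \(d\psi_i\) with the connections which follows from \(\psi_i^* g' = g\) via the Koszul formula). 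Once this local fact is in hand, the open–closed–connected argument closes immediately. Note connectedness of \(M'\) is not needed; only connectedness of \(M\) and the Hausdorff property of \(M'\) (the latter for closedness of \(S\)).
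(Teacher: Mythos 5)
Your proposal is correct and follows essentially the same route as the paper: the open--closed--nonempty argument on the agreement set, with openness obtained by comparing geodesics (equivalently, exponential maps) emanating from a point of agreement. The only caveat is that your parenthetical ``an isometric immersion is totally geodesic'' holds here only because the manifolds are equidimensional (so the immersions are local isometries), an assumption the paper makes implicitly as well.
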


\begin{proof}
One shows that the set 
\begin{equation*}
A= \big{\{} x \in M \; | \; \psi_1(x) = \psi_2(x) \textnormal{ and } d\psi_1(x) = d\psi_2(x) \big{\}}
\end{equation*}
is open, closed and non-empty, from which it then follows that \(A = M\). In order to show openness, let \(x_0 \in A\) be given and choose a normal neighbourhood \(U\) of \(x_0\). For \(x \in U\), there is then a geodesic \(\gamma : [0, \varepsilon] \to U\) with \(\gamma (0) = x_0\) and \(\gamma(\varepsilon) = x\). Since \(\psi_1\) and \(\psi_2\) are both isometric immersions, we have that both \(\psi_1 \circ \gamma\) and \(\psi_2 \circ \gamma\) are geodesics. Moreover, since by assumption we have \((\psi_1\circ \gamma)(0) = (\psi_2\circ \gamma)(0)\) and \(\dot{(\psi_1 \circ \gamma)}(0) = \dot{(\psi_2 \circ \gamma)}(0)\), the two geodesics agree. In particular, we obtain \(\psi_1(x) = (\psi_1 \circ \gamma) (\varepsilon) = (\psi_2 \circ \gamma) ( \varepsilon) = \psi_2(x)\). 

The closedness of \(A\) follows from the smoothness of \(\psi_1\) and \(\psi_2\), and non-emptyness holds by assumption.
\end{proof}

\begin{corollary}
\label{welldefined}
Let \((M,g)\) be a globally hyperbolic, time oriented Lorentzian manifold with Cauchy surface \(\Sigma\) and \((M',g')\) another time oriented Lorentzian manifold. Moreover, say \(U_1, U_2 \subseteq M\) are open and globally hyperbolic with Cauchy surface \(\Sigma\), and \(\psi_i : U_i \to M'\), \(i=1,2\), are time orientation preserving isometric immersions that agree on \(\Sigma\).

Then \(\psi_1\) and \(\psi_2\) agree on \(U_1 \cap U_2\).
\end{corollary}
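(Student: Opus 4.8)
The plan is to reduce the corollary to Lemma \ref{IsoConnected} by producing a single point $p \in U_1 \cap U_2$ at which both the maps and their differentials agree, and then invoking connectedness of $U_1 \cap U_2$. First I would observe that $U_1 \cap U_2$ is an open subset of $M$ containing the Cauchy surface $\Sigma$, since both $U_i$ contain $\Sigma$ by hypothesis. The key topological input is that $U_1 \cap U_2$ is \emph{connected}: this should follow from the fact that a globally hyperbolic spacetime with Cauchy surface $\Sigma$ is homeomorphic (indeed diffeomorphic) to $\R \times \Sigma$, with $\Sigma$ sitting as $\{0\} \times \Sigma$, and more relevantly that every inextendible causal curve through a point of $U_i$ meets $\Sigma$; so any point of $U_1 \cap U_2$ can be joined to $\Sigma$ by a causal curve lying in $U_1 \cap U_2$ (one follows the integral curve of a suitable timelike vector field, or just a causal curve realizing the Cauchy property, which stays in each $U_i$ by global hyperbolicity relative to $\Sigma$). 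Hence $U_1 \cap U_2$ deformation retracts onto (a subset of) $\Sigma$; since $\Sigma$ is connected (it is a Cauchy surface of a connected spacetime, hence connected), $U_1 \cap U_2$ is connected.

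Next, I would pin down the agreement of the differentials on $\Sigma$. We are given $\psi_1 = \psi_2$ on $\Sigma$, so the two maps certainly agree there; for the differentials, note that $d\psi_1$ and $d\psi_2$ agree on vectors \emph{tangent} to $\Sigma$ simply by differentiating the relation $\psi_1|_\Sigma = \psi_2|_\Sigma$ along $\Sigma$. It remains to check that they agree on the normal direction to $\Sigma$. This is where time orientation preservation enters: at a point $x \in \Sigma$, let $n$ be the future-directed unit normal to $\Sigma$ in $(M,g)$. Since each $\psi_i$ is an isometric immersion mapping $\Sigma$ into $M'$ and preserving time orientation, $d\psi_i(n)$ must be a future-directed unit vector in $M'$ orthogonal to $d\psi_i(T_x\Sigma) = d\psi_1(T_x\Sigma)$ (a fixed $d$-dimensional spacelike subspace, the same for $i=1,2$). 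In a Lorentzian vector space there is a \emph{unique} future-directed unit timelike vector orthogonal to a given spacelike hyperplane, so $d\psi_1(n) = d\psi_2(n)$. Together with agreement on $T_x\Sigma$, this gives $d\psi_1(x) = d\psi_2(x)$ for every $x \in \Sigma$.

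Finally, I would apply Lemma \ref{IsoConnected} with the connected manifold $U_1 \cap U_2$ in the role of $M$, the isometric immersions $\psi_1|_{U_1 \cap U_2}$ and $\psi_2|_{U_1 \cap U_2}$, and any point $p \in \Sigma \subseteq U_1 \cap U_2$ as the base point; the hypotheses $\psi_1(p) = \psi_2(p)$ and $d\psi_1(p) = d\psi_2(p)$ have just been verified. The lemma then yields $\psi_1 = \psi_2$ on $U_1 \cap U_2$, as claimed. The main obstacle I anticipate is the connectedness of $U_1 \cap U_2$: one must argue carefully that a causal curve from an arbitrary point of $U_i$ down to $\Sigma$ stays inside $U_i$ — this uses that $\Sigma$ is a Cauchy surface \emph{for $U_i$}, so that every inextendible (within $U_i$) causal curve meets $\Sigma$, and that such a curve cannot leave $U_i$ before reaching $\Sigma$ without contradicting this. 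The differential-matching step is a short linear-algebra argument once time orientation is invoked, and the application of the lemma is then immediate.
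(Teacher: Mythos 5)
Your proof is correct and follows essentially the same route as the paper: show that the differentials of \(\psi_1\) and \(\psi_2\) agree along \(\Sigma\) (on tangent vectors by differentiating \(\psi_1|_\Sigma = \psi_2|_\Sigma\), on the normal direction via preservation of time orientation) and then invoke Lemma \ref{IsoConnected}. The only difference is that you explicitly verify the connectedness of \(U_1 \cap U_2\) required by that lemma — a point the paper leaves implicit — and your causal-curve argument for it (the segment of an inextendible timelike curve between a point of \(U_1 \cap U_2\) and \(\Sigma\) stays in each \(U_i\) because \(\Sigma\) is a Cauchy surface for \(U_i\)) is sound.
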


\begin{proof}
Since \(\psi_1\) and \(\psi_2\) agree on \(\Sigma\), their differentials agree on \(\Sigma\) if evaluated on vectors tangent to \(\Sigma\). Moreover, since the isometric immersion preserve the time orientation, they both map the future normal of \(\Sigma\) onto the future normal of \(\psi_1(\Sigma) = \psi_2(\Sigma)\). Thus, the differentials of \(\psi_1\) and \(\psi_2\) agree on \(\Sigma\). The corollary now follows from Lemma \ref{IsoConnected}.
\end{proof}

\begin{lemma}
\label{IsoDiffeo}
Say \((M,g)\) and \((M',g')\) are two globally hyperbolic spacetimes with Cauchy surfaces \(\Sigma\) and \(\Sigma'\), respectively. Let \(\psi : M \to M'\) be an isometric immersion such that \(\psi|_\Sigma : \Sigma \to \Sigma'\) is a diffeomorphism.

Then \(\psi\) is an isometric embedding.
\end{lemma}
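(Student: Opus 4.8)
The plan is to show $\psi$ is injective and an immersion (the latter is given), hence an injective immersion; since an injective immersion from a manifold into another of the same dimension is a local diffeomorphism, it is then an open map, and an injective open continuous map onto its image is a homeomorphism onto its image, so $\psi$ is an embedding. The real content, therefore, is \emph{injectivity} of $\psi$ on all of $M$, and this is where the global hyperbolicity and the hypothesis on $\psi|_\Sigma$ must be used; an isometric immersion need not be injective in general.

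First I would recall that since $\psi$ is an isometric immersion between spacetimes of equal dimension, it is a local isometry, so it maps geodesics to geodesics and is compatible with the exponential maps; in particular it preserves causal character of curves and, being time-orientation compatible (this we may arrange, or it is part of the setup as in Corollary \ref{welldefined}; in any case one can check $\psi$ preserves time orientation after possibly noting $\Sigma$ is connected), it maps future-directed causal curves to future-directed causal curves. Next I would use the structure of globally hyperbolic spacetimes: every point $p \in M$ lies on an inextendible timelike curve, which meets the Cauchy surface $\Sigma$ exactly once, say at $q \in \Sigma$. The image under $\psi$ of such a curve is a future- or past-directed timelike curve in $M'$ through $\psi(p)$ and through $\psi(q) \in \Sigma'$.

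The heart of the argument is then: suppose $\psi(p_1) = \psi(p_2)$ for $p_1, p_2 \in M$. Connect each $p_i$ to a point $q_i \in \Sigma$ by a timelike curve $\gamma_i$; then $\psi \circ \gamma_1$ and $\psi \circ \gamma_2$ are timelike curves in $M'$ emanating from the common point $\psi(p_1) = \psi(p_2)$ and reaching $\Sigma'$. One wants to conclude $q_1 = q_2$: since $\psi|_\Sigma$ is a bijection onto $\Sigma'$, it suffices that $\psi(q_1) = \psi(q_2)$. This will follow if the $\gamma_i$ can be chosen as (portions of) a single geodesic through $\psi(p_1)$, or more robustly by a connectedness/continuity argument. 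A cleaner route: fix $p_1$, set $p' := \psi(p_1)$, let $q' \in \Sigma'$ be the unique point of $\Sigma'$ on the inextendible timelike geodesic in $M'$ through $p'$ (or just use that $\Sigma'$ Cauchy $\Rightarrow$ $p' \in D(\Sigma')$), and let $q := (\psi|_\Sigma)^{-1}(q') \in \Sigma$. Using that $\psi$ is a local isometry intertwining exponential maps and that $\psi(q) = q'$ with $d\psi_q$ mapping the future normal of $\Sigma$ to that of $\Sigma'$ (time-orientation compatibility on $\Sigma$), one shows the timelike geodesic from $p'$ to $q'$ lifts through $\psi$ starting at $q$, and its other endpoint must be $p_1$; the same argument applied to $p_2$ forces $p_2$ onto the same lifted geodesic with the same endpoint, hence $p_1 = p_2$. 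I expect the main obstacle to be making this lifting/continuation argument rigorous: one must argue the lifted geodesic is defined on the full parameter interval (no premature exit, which uses inextendibility and the Cauchy property rather than geodesic completeness) and that the endpoint of the lift is genuinely $p_i$ rather than some other point in the fiber $\psi^{-1}(p')$. Once injectivity is in hand, the embedding conclusion is immediate as indicated in the first paragraph.
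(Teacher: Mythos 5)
Your reduction to injectivity is correct, and your first, briefly stated route --- arrange that the curves from $p_1$ and $p_2$ down to $\Sigma$ have images lying on a \emph{single} geodesic of $M'$ --- is essentially the paper's proof. But the ``cleaner route'' you then elaborate has a genuine gap; in fact both of the obstacles you flag are real and neither is merely technical. First, the lift of the geodesic $\Gamma$ from $q'$ to $p'$, started at $q=(\psi|_\Sigma)^{-1}(q')$, need not be defined on the full parameter interval: the maximal geodesic in $M$ with those initial conditions is some future-inextendible timelike geodesic, and global hyperbolicity of $M$ only guarantees that it meets $\Sigma$, not that it survives for a prescribed affine length; its image under $\psi$ can perfectly well be extendible in $M'$ even though the lift is inextendible in $M$. (This is ruled out once one knows $\psi(M)$ is an embedded globally hyperbolic subset with Cauchy surface $\Sigma'$, since then $J^-(p')\cap J^+(\Sigma')\subseteq \psi(M)$ --- but that is the conclusion you are trying to prove.) Second, even if the lift reaches the final parameter, its endpoint is merely \emph{some} point of the fibre $\psi^{-1}(p')$; identifying that point with $p_i$ is exactly the injectivity you are after, so the argument is circular at this step.

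Both gaps disappear if you reverse the direction of the construction, which is what the paper does: fix a timelike vector at $\psi(p_1)=\psi(p_2)$, pull it back by $d\psi$ (via local inverses) to each $p_i$, and let $\sigma_i$ be the resulting \emph{inextendible} timelike geodesics in $M$. Global hyperbolicity of $M$ forces each $\sigma_i$ to meet $\Sigma$, at parameter $\tau_i$ say; the images $\psi\circ\sigma_i$ are both restrictions of the single maximal geodesic of $M'$ determined by the chosen vector, and a timelike geodesic meets the Cauchy surface $\Sigma'$ at most once, so $\tau_1=\tau_2$ and $\psi(\sigma_1(\tau_1))=\psi(\sigma_2(\tau_1))\in\Sigma'$. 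Injectivity of $\psi|_\Sigma$ then gives $\sigma_1(\tau_1)=\sigma_2(\tau_1)$, local injectivity of $d\psi$ gives $\dot\sigma_1(\tau_1)=\dot\sigma_2(\tau_1)$, hence $\sigma_1=\sigma_2$ and $p_1=p_2$. No lifting, no issue of geodesics leaving $M$ prematurely, no fibre ambiguity. I would write up your first route along these lines and discard the lifting argument.
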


Note that this shows in particular that in Definition \ref{extension} one does not need to require \(\psi\) to be an isometric embedding - \(\psi\) being an isometric immersion suffices.

\begin{proof}
It suffices to show that \(\psi\) is injective. So let \(p,q\) be points in \(M\) with \(\psi(p) = \psi(q)\). Consider an inextendible timelike geodesic \(\gamma : (a,b) \to M\) with \(\gamma(0) = q\), where \(-\infty \leq a < 0 < b \leq \infty\). Since \((M,g)\) is globally hyperbolic, \(\gamma\) intersects \(\Sigma\) exactly once; say \(\gamma(\tau_0) \in \Sigma\), where \(\tau_0 \in (a,b)\). Note that since \(\psi\) is an isometric immersion, \(\psi \circ \gamma : (a,b) \to M'\) is also a timelike geodesic. We now choose a neighbourhood \(V\) of \(p\) such that \(\psi\big|_V : V \to \psi(V)\) is a diffeomorphism and we pull back the velocity vector of \(\psi \circ \gamma\) at \(\psi(p)\) to \(p\). Let \(\sigma : (c,d) \to M\) denote the inextendible timelike geodesic with \(\sigma(0) =p\) and \(\dot{\sigma}(0) = d\psi\big|_V^{-1}\big(\dot{\psi \circ \gamma}\big)\big|_{\psi(q)}\), where \(-\infty \leq c < 0 < d \leq \infty\). Again, by \(M\) being globally hyperbolic, \(\sigma\) intersects \(\Sigma\) exactly once; say at \(\sigma(\tau_1) \in \Sigma\), whith \(c < \tau_1 < d\). Clearly, the geodesics \(\psi \circ \gamma\) and \(\psi \circ \sigma\) agree on their common domain, since they share the same initial data.

\begin{center}
\def\svgwidth{15cm}
%% Creator: Inkscape 0.48.2, www.inkscape.org
%% PDF/EPS/PS + LaTeX output extension by Johan Engelen, 2010
%% Accompanies image file '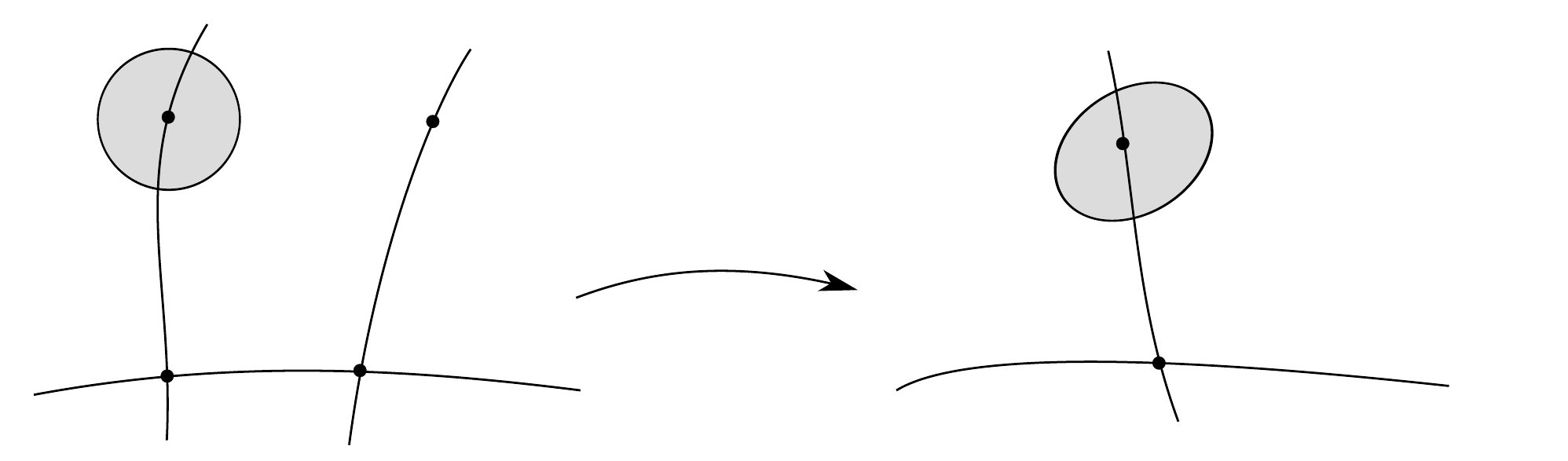' (pdf, eps, ps)
%%
%% To include the image in your LaTeX document, write
%%   \input{<filename>.pdf_tex}
%%  instead of
%%   \includegraphics{<filename>.pdf}
%% To scale the image, write
%%   \def\svgwidth{<desired width>}
%%   \input{<filename>.pdf_tex}
%%  instead of
%%   \includegraphics[width=<desired width>]{<filename>.pdf}
%%
%% Images with a different path to the parent latex file can
%% be accessed with the `import' package (which may need to be
%% installed) using
%%   \usepackage{import}
%% in the preamble, and then including the image with
%%   \import{<path to file>}{<filename>.pdf_tex}
%% Alternatively, one can specify
%%   \graphicspath{{<path to file>/}}
%% 
%% For more information, please see info/svg-inkscape on CTAN:
%%   http://tug.ctan.org/tex-archive/info/svg-inkscape
%%
\begingroup%
  \makeatletter%
  \providecommand\color[2][]{%
    \errmessage{(Inkscape) Color is used for the text in Inkscape, but the package 'color.sty' is not loaded}%
    \renewcommand\color[2][]{}%
  }%
  \providecommand\transparent[1]{%
    \errmessage{(Inkscape) Transparency is used (non-zero) for the text in Inkscape, but the package 'transparent.sty' is not loaded}%
    \renewcommand\transparent[1]{}%
  }%
  \providecommand\rotatebox[2]{#2}%
  \ifx\svgwidth\undefined%
    \setlength{\unitlength}{574.81279297bp}%
    \ifx\svgscale\undefined%
      \relax%
    \else%
      \setlength{\unitlength}{\unitlength * \real{\svgscale}}%
    \fi%
  \else%
    \setlength{\unitlength}{\svgwidth}%
  \fi%
  \global\let\svgwidth\undefined%
  \global\let\svgscale\undefined%
  \makeatother%
  \begin{picture}(1,0.29382422)%
    \put(0,0){\includegraphics[width=\unitlength]{IsometryIsDiffeo.pdf}}%
    \put(0.11857847,0.23035362){\color[rgb]{0,0,0}\makebox(0,0)[lb]{\smash{\(p\)}}}%
    \put(0.28563786,0.2064243){\color[rgb]{0,0,0}\makebox(0,0)[lb]{\smash{\(q\)}}}%
    \put(0.43209135,0.13756505){\color[rgb]{0,0,0}\makebox(0,0)[lb]{\smash{\(\psi\)}}}%
    \put(0.72170415,0.21207712){\color[rgb]{0,0,0}\makebox(0,0)[lb]{\smash{\(\psi(p) = \psi(q)\)}}}%
    \put(0.10763113,0.1199002){\color[rgb]{0,0,0}\makebox(0,0)[lb]{\smash{\(\sigma\)}}}%
    \put(0.259579,0.13610066){\color[rgb]{0,0,0}\makebox(0,0)[lb]{\smash{\(\gamma\)}}}%
    \put(0.29431441,0.02509411){\color[rgb]{0,0,0}\makebox(0,0)[lb]{\smash{\(\Sigma\)}}}%
    \put(0.77653374,0.02931169){\color[rgb]{0,0,0}\makebox(0,0)[lb]{\smash{\(\Sigma'\)}}}%
    \put(0.04968999,0.24863016){\color[rgb]{0,0,0}\makebox(0,0)[lb]{\smash{\(V\)}}}%
    \put(0.62891558,0.23035366){\color[rgb]{0,0,0}\makebox(0,0)[lb]{\smash{\(\psi(V)\)}}}%
    \put(0.73131623,0.10947663){\color[rgb]{0,0,0}\makebox(0,0)[lb]{\smash{\(\psi \circ \gamma\)}}}%
  \end{picture}%
\endgroup%

\end{center}

By the global hyperbolicity of \((M',g')\), the geodesics \(\psi \circ \gamma\) and \(\psi \circ \sigma\) cannot intersect \(\Sigma'\) more than once, which implies that \(\tau_0 = \tau_1\). Moreover, since \(\psi\big|_\Sigma : \Sigma \to \Sigma'\) is a diffeomorphism, we have \(\sigma(\tau_0) = \gamma(\tau_0)\). Now making use again of \(\psi\) being a local diffeomorphism at \(\sigma(\tau_0)\), one infers that \(\dot{\sigma}(\tau_0) = \dot{\gamma}(\tau_0)\) also holds. It follows that \(\sigma = \gamma\) and in particular that \(p = \sigma(0) = \gamma(0) = q\).
\end{proof}

We can finally prove the main result of this section:

\begin{theorem}[Existence of MCGHD]
\label{ExMCGHD}
Given two GHDs \(M\) and \(M'\) of the same initial data, there exists a unique  CGHD \(U\) of \(M\) and \(M'\) with the property that if \(V\) is another CGHD of \(M\) and \(M'\), then \(U\) is an extension of \(V\). 

We call \(U\) the \emph{maximal common globally hyperbolic development (MCGHD) of \(M\) and \(M'\)}.
\end{theorem}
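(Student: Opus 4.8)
The plan is to construct the MCGHD explicitly as a union of CGHDs realised as subsets of $M$ (exploiting Definition \ref{DefCGHD}), and then to verify that the union has all the required properties. Concretely, following Remark \ref{RemSym}(2), every CGHD of $M$ and $M'$ can be realised as an open subset $U \subseteq M$ (with metric $g|_U$ and initial embedding $\iota$) such that $M'$ is an extension of $(U,g|_U,\iota)$; this means there is a time-orientation-preserving isometric embedding $\psi_U : U \to M'$ with $\psi_U \circ \iota = \iota'$. Let $\{U_\alpha\}_{\alpha \in A}$ be the collection of all such subsets, each equipped with its embedding $\psi_\alpha : U_\alpha \to M'$. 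Note the collection is non-empty by Theorem \ref{LocalTheory}, and it genuinely is a \emph{set} (not a proper class) since its elements are open subsets of the fixed manifold $M$. Define $U := \bigcup_{\alpha \in A} U_\alpha$.

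The key steps, in order, are as follows. First I would show that the embeddings glue: for any $\alpha, \beta \in A$, the maps $\psi_\alpha$ and $\psi_\beta$ agree on $U_\alpha \cap U_\beta$. This is exactly Corollary \ref{welldefined}, applied with the Cauchy surface $\Sigma = \iota(\overline M)$: both $U_\alpha$ and $U_\beta$ contain $\iota(\overline M)$ and are globally hyperbolic with it as a Cauchy surface, the maps $\psi_\alpha, \psi_\beta$ are time-orientation-preserving isometric immersions agreeing on $\iota(\overline M)$ (since $\psi_\alpha \circ \iota = \iota' = \psi_\beta \circ \iota$), so they agree on the overlap. Hence the $\psi_\alpha$ patch together to a well-defined smooth map $\psi : U \to M'$, which is a time-orientation-preserving isometric immersion and satisfies $\psi \circ \iota = \iota'$. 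Second, I would check that $(U, g|_U, \iota)$ is itself a GHD of the given data: the vacuum Einstein equations hold since they hold on each $U_\alpha$; $\iota^*(g|_U) = \bar g$ and $\iota^*(k) = \bar k$ are inherited; the only substantive point is that $U$ is globally hyperbolic with $\iota(\overline M)$ a Cauchy surface. For this one argues that any inextendible causal curve in $U$ through a point $p \in U_\alpha$ stays, at least near $p$, inside $U_\alpha$ and hence meets $\iota(\overline M)$; more carefully, global hyperbolicity of each $U_\alpha$ with Cauchy surface $\iota(\overline M)$ combined with the fact that $U$ is a union of such sets all sharing that Cauchy surface forces every inextendible causal curve in $U$ to cross $\iota(\overline M)$ exactly once — this is the standard argument that a union of globally hyperbolic neighbourhoods of a common Cauchy surface is again globally hyperbolic with that Cauchy surface. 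Third, once $U$ is a GHD, the glued map $\psi : U \to M'$ restricts to a diffeomorphism $\iota(\overline M) \to \iota'(\overline M)$, so by Lemma \ref{IsoDiffeo} it is an isometric embedding; thus $M'$ is an extension of $(U, g|_U, \iota)$, i.e.\ $U$ is a CGHD of $M$ and $M'$. Finally, maximality is immediate: if $V$ is any CGHD of $M$ and $M'$, then by Remark \ref{RemSym}(2) it is (isometric to) one of the $U_\alpha \subseteq U$, and the inclusion $U_\alpha \hookrightarrow U$ together with $\iota$ exhibits $U$ as an extension of $V$. Uniqueness of the MCGHD follows because if $U$ and $\tilde U$ both have the stated maximality property, each is an extension of the other, and two GHDs that extend each other are isometric (the mutual isometric embeddings compose to an isometric immersion of a globally hyperbolic manifold into itself fixing the Cauchy data, which by Lemma \ref{IsoConnected} and Lemma \ref{IsoDiffeo} is the identity).

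I expect the main obstacle to be the careful verification in the second step that $U$ is globally hyperbolic with $\iota(\overline M)$ as a Cauchy surface — i.e.\ that an inextendible causal curve in $U$ cannot "escape" through the union structure by wandering between different $U_\alpha$'s without ever crossing $\iota(\overline M)$, and that $\iota(\overline M)$ remains achronal in the larger space $U$. Achronality is in fact easy (a timelike curve in $U$ with both endpoints on $\iota(\overline M)$ would lie, near one endpoint, in some $U_\alpha$, contradicting achronality there, once one also rules out the curve leaving and re-entering — which again reduces to the single-$U_\alpha$ case via the Cauchy property). The Cauchy-surface property (every inextendible causal curve meets it) is the part requiring genuine causality-theoretic care, and it is presumably where the paper invokes the structure of $U$ as a monotone-type union; everything else is bookkeeping built on Lemma \ref{IsoConnected}, Corollary \ref{welldefined}, and Lemma \ref{IsoDiffeo}.
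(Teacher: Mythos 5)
Your proposal is correct and follows essentially the same route as the paper: take the union $U=\bigcup_\alpha U_\alpha$ of all CGHDs realised as subsets of $M$, glue the embeddings $\psi_\alpha$ via Corollary \ref{welldefined}, verify global hyperbolicity by noting that the maximal segment of an inextendible timelike curve inside a single $U_\alpha$ is itself inextendible in $U_\alpha$ (with at-most-one intersection following from global hyperbolicity of the ambient $M$), upgrade the glued immersion to an embedding via Lemma \ref{IsoDiffeo}, and read off maximality from the construction. The step you flag as the main obstacle is dispatched in the paper by exactly the one-line argument you sketch, so there is no gap.
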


The original proof of this theorem, i.e., as it is found in \cite{ChoquetGeroch69} or \cite{Ring13} for example, appeals to Zorn's lemma. The much simpler method of taking the union of all CGHDs of \(M\) and \(M'\) however works:

\begin{proof}
We consider the set \(\{U_\alpha \subseteq M \;\big|\; \alpha \in A\}\) of all CGHDs of \(M\) and \(M'\). By Theorem \ref{LocalTheory} this set is non-empty. We show that 
\begin{equation*}
U:= \bigcup_{\alpha \in A} U_\alpha
\end{equation*}
is the MCGHD of \(M\) and \(M'\).
\begin{enumerate}
\item It is clear that \(U\) is open an thus a time-oriented Ricci-flat Lorentzian manifold.
\item \(U\) is globally hyperbolic with Cauchy surface \(\iota(\overline{M})\): Let \(\gamma\) be an inextendible timelike curve in \(U\). Take a point on \(\gamma\); it lies in some \(U_\alpha\) and the corresponding curve segment in \(U_\alpha\) can be considered to be an inextendible timelike curve in \(U_\alpha\) and thus has to meet \(\iota(\overline{M})\). Note that \(\gamma\) cannot meet \(\iota(\overline{M})\) more than once, since \(\gamma\) is also a segment of an inextendible timelike curve in \(M\) - and \(M\) is globally hyperbolic.
\item It follows that \(U\) is a GHD of the given initial data.
\item \(U\) is a CGHD of \(M\) and \(M'\): It suffices to give an isometric immersion \(\psi : U \to M'\) that respects the embedding of \(\overline{M}\) and the time orientation. Note that by Lemma \ref{IsoDiffeo} \(\psi\) is then automatically an isometric embedding.

For each \(\alpha \in A\) there is such an isometric immersion \(\psi_\alpha : U_\alpha \to M'\). We define
\begin{equation*}
\psi(p) := \psi_\alpha(p) \qquad \textnormal{ for } p \in U_\alpha \,.
\end{equation*}
By Corollary \ref{welldefined} this is well-defined and clearly \(\psi\) is an isometric immersion that respects the embedding of \(\overline{M}\) and the time orientation.
\item That \(U\) is maximal follows directly from its definition. It then also follows that \(U\) is the unique CGHD with this maximality property. 
\end{enumerate}
\end{proof}

\subsection{The maximal common globally hyperbolic development does not have corresponding boundary points}
\label{PropertyMCGHD}

In this section we prove that the MCGHD of two GHDs \(M\) and \(M'\) does not have `corresponding boundary points'. Most of the proofs found in this section are based on proofs found in Chapter 23 of Ringstr\"om's book \cite{Ring13}.

\begin{definition}
\label{DefBound}
Let \(U\) be a CGHD of \(M\) and \(M'\), and let us denote the isometric embedding of \(U\) into \(M'\) with \(\psi\). Two points \(p \in \partial U \subseteq M\) and \(p' \in \partial \psi(U) \subseteq M'\) are called \emph{corresponding boundary points} of \(U\) iff for all neighbourhoods \(V\) of \(p\) and for all neighbourhoods \(V'\) of \(p'\) one has
\begin{equation*}
\psi^{-1}\big( V' \cap \psi(U)\big) \cap V \neq \emptyset\;.
\end{equation*}
\end{definition}

The main theorem of this section is

\begin{theorem}
\label{NotMCGHD}
Let \(M\) and \(M'\) be GHDs of the same initial data, and say \(U\) is a CGHD of \(M\) and \(M'\). If there are corresponding boundary points of \(U\) in \(M\) and \(M'\), then there exists a strictly larger extension of \(U\) that is also a CGHD of \(M\) and \(M'\). In particular, \(U\) is not the MCGHD of \(M\) and \(M'\).
\end{theorem}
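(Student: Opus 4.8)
The plan is to imitate the argument sketched in Section \ref{SketchMyProof} (the quasilinear-wave heuristic illustrated in the last two pictures): given a pair of corresponding boundary points $p \in \partial U \subseteq M$ and $p' \in \partial \psi(U) \subseteq M'$, I would construct a spacelike hypersurface $T$ through $p$ inside $M$ whose complement $T \setminus \{p\}$ lies in $U$, transport it via $\psi$ to a spacelike hypersurface $T' = \psi(T \setminus \{p\}) \cup \{p'\}$ through $p'$ inside $M'$, observe that the initial data induced on $T$ and on $T'$ are isometric (because $\psi$ is an isometric embedding and both data sets extend continuously to $p$, resp.\ $p'$), and then invoke the local existence-and-uniqueness statement (Theorem \ref{LocalTheory}) to obtain a CGHD $W$ of the two spacetimes $M$ and $M'$ \emph{with Cauchy surface $T$}. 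The development $W$ contains a neighbourhood of $p$, so gluing $U$ and $W$ together along their overlap (which contains $T \setminus \{p\}$, a Cauchy surface of a neighbourhood of it in $U$) produces a CGHD of $M$ and $M'$ that is strictly larger than $U$ — it contains $p$, which was a boundary point of $U$. This then contradicts maximality, proving $U$ is not the MCGHD.

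The steps, in order: (1) Using that $p, p'$ are corresponding boundary points, show that one can find a point $q \in U$ close to $p$ whose image $\psi(q)$ is close to $p'$, and that one can choose a small spacelike disc $T$ centred at $p$ in $M$ with $T \setminus \{p\} \subseteq U$; this is the step flagged in the excerpt as ``requires a bit of care,'' since one must ensure the existence of a genuinely \emph{spacelike} slice touching the boundary at $p$, not merely a topological slice — the standard device is to take a slightly deformed level set of a time function near $p$ and use that $\partial U$ contains spacelike portions near a corresponding boundary point. (2) Push $T \setminus \{p\}$ forward by $\psi$ and check that $T' := \psi(T \setminus \{p\}) \cup \{p'\}$ is a smooth spacelike hypersurface in $M'$, using that $\psi$ extends to a local isometry near $p$ (an isometric immersion between complete-enough pieces extends across the boundary by the geodesic argument of Lemma \ref{IsoConnected}), so that the closure of $\psi(T \setminus \{p\})$ is exactly $T'$. (3) Verify the induced data $(\bar g_T, \bar k_T)$ on $T$ and $(\bar g_{T'}, \bar k_{T'})$ on $T'$ are isometric, via the isometry $\psi$ extended to $p$. (4) Apply Theorem \ref{LocalTheory} to produce a CGHD $W$ of $M$ and $M'$ with Cauchy surface $T$, and then glue $U$ to $W$ along the appropriate common CGHD of both (which exists and contains a neighbourhood of $T \setminus \{p\}$ in $U$) — more precisely, one checks that $U \cup W$, formed by identifying points via the isometric embeddings into $M$, is again globally hyperbolic with Cauchy surface $\iota(\overline M)$, is Ricci-flat, and admits an isometric immersion into $M'$ extending both those of $U$ and of $W$. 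The resulting object is a CGHD of $M$ and $M'$ strictly containing $U$.

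The main obstacle I expect is step (1): carefully producing a spacelike slice $T$ that passes through the boundary point $p$ and lies in $U$ away from $p$, together with the matching slice in $M'$, requires knowing that the non-Hausdorff/corresponding-boundary phenomenon occurs along a portion of $\partial U$ that is, in a suitable sense, achronal and spacelike — one cannot in general slice transversally to an arbitrary boundary. In the Choquet-Bruhat–Geroch argument this is handled by a separate lemma identifying the spacelike character of the relevant part of the boundary; here I would establish, as preliminary lemmata, (a) that $\partial U$ near $p$ contains no timelike curve through $p$ staying in $\overline U$ (achronality, using global hyperbolicity of $M$ and that $U$ has the same Cauchy surface), and (b) a local-coordinate construction of a spacelike graph over $\iota(\overline M)$ through $p$ whose interior lies in $U$. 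A secondary subtlety is making the gluing of $U$ and $W$ rigorous as a manifold operation and checking Hausdorffness of the glued space — but since we glue along an \emph{open} CGHD containing a full neighbourhood of $T \setminus \{p\}$ and the two pieces agree there as submanifolds of $M$, no new non-Hausdorff pairs are introduced, and the verification that $U \cup W$ is again a GHD is then routine along the lines of the proof of Theorem \ref{ExMCGHD}.
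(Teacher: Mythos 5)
Your overall strategy --- find a spacelike slice touching \(\partial U\) at a corresponding boundary point, restart the Cauchy problem from the induced data on that slice, and adjoin the resulting development to \(U\) --- is exactly the strategy of the paper's proof. However, there is a genuine gap at the step you yourself flag as delicate, and the device you gesture at would not repair it. You propose to construct a spacelike disc \(T\) through the \emph{given} corresponding boundary point \(p\) with \(T\setminus\{p\}\subseteq U\), justified by the achronality of \(\partial U\cap J^+\big(\iota(\overline M)\big)\) and a ``spacelike graph'' construction. Achronality is true but far too weak: the future boundary of \(U\) can be null at \(p\) (e.g.\ a piece of a null cone), in which case \emph{every} spacelike hypersurface through \(p\) exits \(\overline U\) on one side, so no such \(T\) exists at \(p\). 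The paper's Lemma \ref{SpacelikePoint} addresses precisely this: one must in general \emph{replace} \(p\) by a different point, namely the past endpoint \(\gamma(b)\) of the null generator of \(\partial U\) through \(p\), which satisfies the ``spacelike'' condition \(J^-(\gamma(b))\cap\partial U\cap J^+\big(\iota(\overline M)\big)=\{\gamma(b)\}\). Crucially, one must then also prove that this new point is \emph{still} a corresponding boundary point; this requires showing that the set \(C\) of corresponding boundary points is open in \(\partial U\) (Lemma \ref{ExtendIso}) and closed along the null generator (the \(J=[0,b]\) argument inside Lemma \ref{SpacelikePoint}, resting on the timelike-curve characterisation of Proposition \ref{EquivCha}). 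None of this is present in your outline, and without it the construction of \(T\) fails for a general \(p\in C\).

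A secondary gap: even at a ``spacelike'' boundary point, producing a slice that touches \(\partial U\) and lies in \(\overline U\) away from the touching point needs a concrete construction, not merely a ``slightly deformed level set of a time function.'' The paper takes a level set of the time separation \(\tau_q\) from a point \(q\in I^+(p)\), where \(q\) is chosen via Lemma \ref{FuturePoint} so that \(J^-(q)\cap U^c\cap J^+\big(\iota(\overline M)\big)\) is trapped in a small causally convex neighbourhood \(W\); this choice is also what makes the final verification that \(U\cup N\) is globally hyperbolic with Cauchy surface \(\iota(\overline M)\) go through (a timelike curve through \(N\setminus U\) must cross the slice and, by the choice of the maximal value \(\tau_0\), enter \(U\)). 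Your remark that this verification is ``routine'' glosses over exactly this point. The remaining steps of your proposal --- extending \(\psi\) across the boundary by the exponential-map argument, matching the induced data, invoking Theorem \ref{LocalTheory}, and patching the two embeddings by the rigidity of isometric immersions --- do match the paper's proof.
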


Before we give the proof of Theorem \ref{NotMCGHD}, we need to establish some results concerning the structure and properties of corresponding boundary points. Let us begin by giving a different characterisation of corresponding boundary points using timelike curves, which will often prove more convenient.

\begin{proposition}
\label{EquivCha}
Let \(U\) be a CGHD of \(M\) and \(M'\) with isometric embedding \(\psi : U \subseteq M \to M'\).
The following statements are equivalent:
\begin{enumerate}[i)]
\item The points \(p \in \partial U\) and \(p' \in \partial \psi(U)\) are corresponding boundary points.
\item If \(\gamma : (-\varepsilon, 0) \to U\) is a timelike curve with \(\lim_{s \nearrow 0} \gamma(s) = p\), then \(\lim_{s \nearrow 0}  (\psi \circ \gamma) (s) =p'\).
\item There is a timelike curve \(\gamma : (-\varepsilon, 0) \to U\) with \(\lim_{s \nearrow 0} \gamma(s) = p\) such that \(\lim_{s \nearrow 0}  \psi \circ \gamma (s) =p'\).
\end{enumerate}
In particular it follows from \(ii)\) and \(iii)\) that \(p \in \partial U\) has at most one corresponding boundary point.
\end{proposition}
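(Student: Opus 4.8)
\textbf{Proof proposal for Proposition \ref{EquivCha}.}

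The plan is to prove the cyclic chain of implications $i) \Rightarrow ii) \Rightarrow iii) \Rightarrow i)$, together with the final uniqueness remark as a trivial consequence of $ii)$ and $iii)$. The implication $ii) \Rightarrow iii)$ is essentially immediate: the hypothesis of $ii)$ is nonvacuous, since $p \in \partial U$ means there is a sequence $p_n \in U$ converging to $p$ in $M$, and because $U$ is open with $p$ on its boundary one can connect a suitable such $p_n$ (lying in a convex normal neighbourhood of $p$) to $p$ by a short timelike geodesic segment $\gamma : (-\varepsilon,0) \to U$ with $\gamma(s) \to p$; applying $ii)$ to this particular curve yields $iii)$. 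So the two substantial implications are $i) \Rightarrow ii)$ and $iii) \Rightarrow i)$.

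For $i) \Rightarrow ii)$: let $\gamma : (-\varepsilon,0) \to U$ be a timelike curve with $\gamma(s) \to p$. Because $\psi$ is an isometric embedding, $\psi \circ \gamma$ is a timelike curve in $\psi(U) \subseteq M'$. I first want to show $\psi \circ \gamma$ converges in $M'$. The natural route is a causality-theoretic compactness argument: the curve $\psi\circ\gamma$ lies in the causal future/past of a fixed point (e.g.\ a point $\psi(\gamma(s_0))$), intersected with a neighbourhood, and one invokes that in a globally hyperbolic (or merely strongly causal) spacetime a timelike curve contained in a compact set must have endpoints — more precisely, one shows the curve cannot be future-inextendible within a small causally convex neighbourhood, so it accumulates at some point. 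Alternatively, and perhaps more cleanly, one argues that if $\psi\circ\gamma$ had two distinct accumulation points $q_1 \neq q_2$ in $M'$, one could find neighbourhoods separating them and, pulling back along $\psi$, contradict the characterisation of $p$ as a single boundary point together with the definition of corresponding boundary points. Once $\psi\circ\gamma(s) \to q$ for some $q \in M'$, one checks $q \in \partial\psi(U)$ (it cannot lie in $\psi(U)$, else pulling back through the local diffeomorphism $\psi$ near $\psi^{-1}(q)$ would force $\gamma(s) \to \psi^{-1}(q) \in U$, contradicting $p \in \partial U$; and $q$ is clearly in the closure). Finally, to see $q = p'$: take arbitrary neighbourhoods $V \ni p$ and $V' \ni q$; for $s$ close to $0$ we have $\gamma(s) \in V$ and $\psi(\gamma(s)) \in V'$, so $\psi^{-1}(V' \cap \psi(U)) \cap V \neq \emptyset$. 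This shows $p$ and $q$ are corresponding boundary points. If we already know — or prove here — that $p$ has \emph{at most} one corresponding boundary point, then $q = p'$ and we are done; but this is circular with the final remark, so instead I would run the argument showing directly that \emph{any} accumulation point of $\psi\circ\gamma$ must equal $p'$ by exhibiting, for $V'$ a neighbourhood of $p'$ and $V'_q$ a disjoint-from-$V'$ neighbourhood of a putative second limit $q\ne p'$, a contradiction with the definition of $p,p'$ being corresponding: the set $\psi^{-1}(V'_q \cap \psi(U))$ must meet every neighbourhood of $p$ (since $\gamma(s)\to p$ and $\psi(\gamma(s))\to q$), but the definition of corresponding boundary points constrains points of $U$ near $p$ to map near $p'$ — one has to be slightly careful here, as the definition only gives one direction, so the cleanest fix is to first establish uniqueness of the corresponding boundary point abstractly and then deduce $ii)$.

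For $iii) \Rightarrow i)$: suppose $\gamma : (-\varepsilon,0) \to U$ is timelike with $\gamma(s) \to p$ and $\psi\circ\gamma(s) \to p'$. Let $V \ni p$ and $V' \ni p'$ be arbitrary neighbourhoods. For $s$ sufficiently close to $0$, $\gamma(s) \in V$ and $\psi(\gamma(s)) \in V' \cap \psi(U)$, hence $\gamma(s) = \psi^{-1}(\psi(\gamma(s))) \in \psi^{-1}(V' \cap \psi(U)) \cap V$, which is therefore nonempty. This gives $i)$ directly and is the easy implication. The main obstacle in the whole argument is the convergence/uniqueness issue in $i) \Rightarrow ii)$ — namely ruling out that $\psi\circ\gamma$ oscillates or runs off to the "edge" of $M'$ without converging, and pinning the limit to be exactly $p'$. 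I expect this to be handled by a strong-causality argument in $M'$ (a timelike curve trapped in a compact causally convex set converges) combined with the observation, provable from $iii)\Rightarrow i)$ applied in reverse, that distinct limit points would both be corresponding boundary points of $p$; so in practice I would prove the uniqueness clause first — "$p$ has at most one corresponding boundary point" follows because if $p'$ and $p''$ both correspond to $p$, pick a timelike curve $\gamma$ into $U$ with $\gamma(s)\to p$ (exists as above), note $\psi\circ\gamma$ then converges (strong causality) to some single point which, by $iii)\Rightarrow i)$, corresponds to $p$, and by the definition both $p'$ and $p''$ must coincide with it — and then feed that back to complete $i)\Rightarrow ii)$.
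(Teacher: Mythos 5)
Your implications \(iii)\Rightarrow i)\) and \(ii)\Rightarrow iii)\) are fine (the paper dismisses them as trivial), but your treatment of \(i)\Rightarrow ii)\) --- which is the entire content of the proposition --- has a genuine gap: you never actually establish that \(\psi\circ\gamma\) converges. Strong causality by itself does not force a timelike curve to have a future endpoint; you need the curve to be imprisoned in a compact set, and producing such a set (e.g.\ \(J^+(x')\cap J^-(y')\) in \(M'\)) requires an upper causal bound \(y'\) on \(\psi\circ\gamma\), which is exactly where the hypothesis that \(p\) and \(p'\) correspond must be used quantitatively --- your sketch never does this. The paper supplies precisely the missing ingredient: it first proves the identity \(\psi\big(I^-(p,M)\cap U\big)=I^-(p',M')\cap\psi(U)\), by combining the definition of corresponding boundary points with the closedness of the relation \(\leq\) on globally hyperbolic manifolds (taking a sequence \(t_i'\gg p'\) with \(t_i'\to p'\) to get \(\psi(q)\leq p'\), then upgrading to \(\ll\) via an intermediate point \(q\ll s\ll p\)). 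With this identity in hand, strong causality at \(p'\) yields, for any neighbourhood \(V'\) of \(p'\), a point \(q'\in V'\cap I^-(p',M')\) with \(I^+(q',M')\cap I^-(p',M')\subseteq V'\); pulling back to \(q=\psi^{-1}(q')\in I^-(p,M)\) shows \(\gamma(s)\in I^+(q,M)\cap I^-(p,M)\) for \(s\) near \(0\), hence \((\psi\circ\gamma)(s)\in V'\). Nothing in your proposal replaces this step.

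Your fallback --- ``prove uniqueness of the corresponding boundary point first, then deduce \(ii)\)'' --- is circular. Definition \ref{DefBound} only asserts that certain pulled-back neighbourhoods have nonempty intersection; if \(p'\) and \(p''\) both correspond to \(p\), the definition is perfectly consistent with some points of \(U\) near \(p\) mapping near \(p'\) and others mapping near \(p''\), so your step ``by the definition both \(p'\) and \(p''\) must coincide with it'' does not follow. Uniqueness is genuinely a \emph{corollary} of \(i)\Rightarrow ii)\) (a curve in a Hausdorff space has at most one limit), and cannot be fed into that implication as an input. To repair the proof you should prove the past-set identity above (or an equivalent quantitative consequence of Definition \ref{DefBound}) before attempting the convergence argument.
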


Before we give the proof, let us recall some notation from causality theory on time oriented Lorentzian manifolds\footnote{For a detailed discussion of causality theory on Lorentzian manifolds the reader is referred to Chapter \(14\) of \cite{ONeill}.}: we write 
\begin{enumerate}
\item \(p  \ll q\) iff there is a future directed timelike curve from \(p\) to \(q\)
\item \(p < q\) iff there is a future directed causal curve from \(p\) to \(q\)
\item \(p \leq q\) iff \(p < q\) or \(p = q\).
\end{enumerate} 

\begin{proof}[Proof of Proposition \ref{EquivCha}:]
The implications \(ii) \implies iii)\) and \(iii) \implies i)\) are trivial. We prove \(i) \implies ii)\): Without loss of generality let us assume that \(p\) and \(p'\) lie to the future of the Cauchy surfaces \(\iota(\overline{M})\) and \(\iota'(\overline{M})\), respectively\footnote{It follows directly from Definition \ref{DefBound} that one cannot have one lying to the future and the other to the past.}. Let \(\gamma : (-\varepsilon, 0) \to U\) be now a (necessarily) future directed timelike curve with \(\lim_{s \nearrow 0} \gamma(s) = p\). 

We first show that\footnote{Although actually no confusion can arise, we write \(I^-(p,M)\) to emphasise that this denotes the past of \(p\) \emph{in} \(M\).} \(\psi\big(I^-(p,M) \cap U\big) = I^-(p', M') \cap \psi(U)\). 

So let \(q \in I^-(p,M) \cap U\). Then \(I^+(q,M)\) is an open neighbourhood of \(p\). Moreover, let \(t'_1 \in M'\) with \(t'_1 \gg p'\). Then \(I^-(t'_1,M')\) is an open neighbourhood of \(p'\). Since \(p\) and \(p'\) are corresponding boundary points, it follows that \( \psi^{-1}\big(I^-(t'_1,M')\cap \psi(U)\big) \cap I^+(q,M)   \neq \emptyset\). Thus we can find an \(r'_1 \in \psi(U)\) with \(\psi(q) \ll r'_1 \ll t'_1\); hence, in particular, \(\psi(q) \leq t'_1\). 
Taking a sequence \(t'_i \gg p'\), \(i \in \N\), with \(t'_i \to p'\) for \(i \to \infty\), we get \(\psi(q) \leq p'\) since the relation \(\leq\) is closed on globally hyperbolic manifolds\footnote{Cf.\ Lemma \(22\) in Chapter \(14\) of \cite{ONeill}.}.

In order to get \(\psi(q) \ll p'\), take an \(s \in U\) with \(q \ll s \ll p\) and repeat the argument above with \(s\) instead of \(q\). This then gives \(\psi(q) \ll \psi(s) \leq p'\), and thus\footnote{Cf.\ Proposition \(46\) in Chapter \(10\) of \cite{ONeill}.} \(\psi(q) \ll p'\). Hence, we have shown \(\psi\big(I^-(p,M) \cap U\big) \subseteq I^-(p', M') \cap \psi(U)\). The other inclusion follows by symmetry.

Let now \(\gamma : (-\varepsilon , 0) \to M\) be a future directed timelike curve with \(\lim_{s \nearrow 0} \gamma(s) = p\). Then \(\psi \circ \gamma|_{(-\varepsilon, 0)}\) is a timelike curve in \(I^-(p', M')\) and we claim that \(\lim_{t \nearrow 0} (\psi \circ \gamma) (t) = p'\). To see this, let \(V'\) be an open neighbourhood of \(p'\). Since \(M'\) satisfies the strong causality condition, we can find a \(q' \in V' \cap I^-(p',M')\) such that \(I^+(q',M') \cap I^-(p',M') \subseteq V'\).\footnote{Recall that the strong causality condition is satisfied at the point \(p'\) iff for all neighbourhoods \(V'\) of \(p'\) there is a neighbourhood \(W'\) of \(p'\) such that all causal curves with endpoints in \(W'\) are entirely contained in \(V'\). In order to prove the just made claim, it remains to pick a point \(q' \in W' \cap I^-(p',M')\).}

\begin{center}
\def\svgwidth{10cm}
%% Creator: Inkscape 0.48.2, www.inkscape.org
%% PDF/EPS/PS + LaTeX output extension by Johan Engelen, 2010
%% Accompanies image file '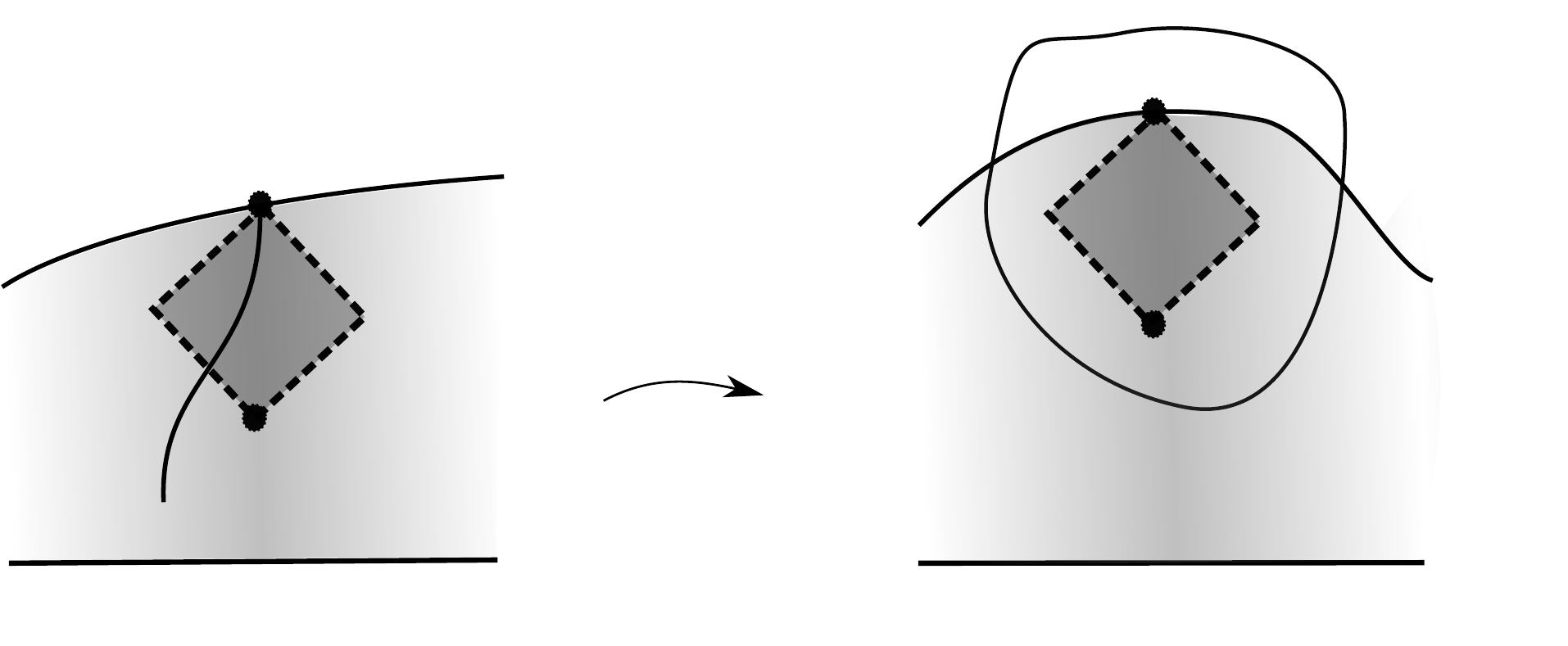' (pdf, eps, ps)
%%
%% To include the image in your LaTeX document, write
%%   \input{<filename>.pdf_tex}
%%  instead of
%%   \includegraphics{<filename>.pdf}
%% To scale the image, write
%%   \def\svgwidth{<desired width>}
%%   \input{<filename>.pdf_tex}
%%  instead of
%%   \includegraphics[width=<desired width>]{<filename>.pdf}
%%
%% Images with a different path to the parent latex file can
%% be accessed with the `import' package (which may need to be
%% installed) using
%%   \usepackage{import}
%% in the preamble, and then including the image with
%%   \import{<path to file>}{<filename>.pdf_tex}
%% Alternatively, one can specify
%%   \graphicspath{{<path to file>/}}
%% 
%% For more information, please see info/svg-inkscape on CTAN:
%%   http://tug.ctan.org/tex-archive/info/svg-inkscape
%%
\begingroup%
  \makeatletter%
  \providecommand\color[2][]{%
    \errmessage{(Inkscape) Color is used for the text in Inkscape, but the package 'color.sty' is not loaded}%
    \renewcommand\color[2][]{}%
  }%
  \providecommand\transparent[1]{%
    \errmessage{(Inkscape) Transparency is used (non-zero) for the text in Inkscape, but the package 'transparent.sty' is not loaded}%
    \renewcommand\transparent[1]{}%
  }%
  \providecommand\rotatebox[2]{#2}%
  \ifx\svgwidth\undefined%
    \setlength{\unitlength}{551.10341797bp}%
    \ifx\svgscale\undefined%
      \relax%
    \else%
      \setlength{\unitlength}{\unitlength * \real{\svgscale}}%
    \fi%
  \else%
    \setlength{\unitlength}{\svgwidth}%
  \fi%
  \global\let\svgwidth\undefined%
  \global\let\svgscale\undefined%
  \makeatother%
  \begin{picture}(1,0.41229115)%
    \put(0,0){\includegraphics[width=\unitlength]{ProofEqui.pdf}}%
    \put(0.41363025,0.18818018){\color[rgb]{0,0,0}\makebox(0,0)[lb]{\smash{\(\psi\)}}}%
    \put(0.02548104,0.28068365){\color[rgb]{0,0,0}\makebox(0,0)[lb]{\smash{\(M\)}}}%
    \put(0.58025335,0.32682745){\color[rgb]{0,0,0}\makebox(0,0)[lb]{\smash{\(M'\)}}}%
    \put(0.13444508,0.01252521){\color[rgb]{0,0,0}\makebox(0,0)[lb]{\smash{\(\iota(\overline{M})\)}}}%
    \put(0.71654686,0.01099782){\color[rgb]{0,0,0}\makebox(0,0)[lb]{\smash{\(\iota'(\overline{M})\)}}}%
    \put(0.75339676,0.18365282){\color[rgb]{0,0,0}\makebox(0,0)[lb]{\smash{\(q'\)}}}%
    \put(0.75096095,0.35371611){\color[rgb]{0,0,0}\makebox(0,0)[lb]{\smash{\(p'\)}}}%
    \put(0.17206882,0.29763823){\color[rgb]{0,0,0}\makebox(0,0)[lb]{\smash{\(p\)}}}%
    \put(0.17597221,0.1207432){\color[rgb]{0,0,0}\makebox(0,0)[lb]{\smash{\(q\)}}}%
    \put(0.07349761,0.13522684){\color[rgb]{0,0,0}\makebox(0,0)[lb]{\smash{ \(\gamma\)}}}%
    \put(0.86129464,0.36802405){\color[rgb]{0,0,0}\makebox(0,0)[lb]{\smash{\(V'\)}}}%
    \put(0.27412091,0.31797686){\color[rgb]{0,0,0}\makebox(0,0)[lb]{\smash{\(\partial U\)}}}%
    \put(0.89559585,0.26127368){\color[rgb]{0,0,0}\makebox(0,0)[lb]{\smash{\(\partial\psi(U)\)}}}%
  \end{picture}%
\endgroup%

\end{center}

From what we first showed, we know that \(q := \psi^{-1}(q') \in I^-(p,M)\). Since \(I^+(q,M)\) is an open neighbourhood of \(p\), there exists a \(\delta >0\) such that \(\gamma(s) \in I^+(q,M) \cap I^-(p,M)\) for all \(-\delta < s < 0\). Moreover, we have \(\psi \big(I^+(q,M) \cap I^-(p,M)\big) = I^+(q',M') \cap I^-(p',M')\), from which it follows that \((\psi \circ \gamma)(s) \in V'\) for all \(-\delta < s < 0\).
\end{proof}

If \(U\) is a CGHD of \(M\) and \(M'\) with isometric embedding \(\psi : U \subseteq M \to M'\), we denote the set of points in \(\partial U\) that have a corresponding boundary point in \(\partial \psi(U)\) with \(C\). 

\begin{lemma}
\label{ExtendIso}
Let \(U\) be a CGHD of \(M\) and \(M'\) with isometric embedding \(\psi : U \subseteq M \to M'\). Then the set \(C\) is open in \(\partial U\) and the isometric embedding \(\psi : U \to M'\) extends smoothly to \(\psi : U \cup C \to M'\).
\end{lemma}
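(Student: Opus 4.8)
The plan is to prove the two assertions --- that $C$ is open in $\partial U$ and that $\psi$ extends smoothly across $C$ --- by a single local construction: near any point $p \in C$ with corresponding boundary point $p'$, I would build a common extension of a neighbourhood using normal coordinates, and read off both openness and smoothness from it. First I would fix $p \in C$ and its corresponding boundary point $p' \in \partial\psi(U)$, which is unique by Proposition \ref{EquivCha}. The idea is to exploit that $\psi$ is a local isometry on $U$ and that $p$, $p'$ are limit points of corresponding timelike curves, so that $\psi$ together with $d\psi$ should extend by continuity along such curves. Concretely, pick a timelike geodesic $\gamma:(-\varepsilon,0)\to U$ with $\gamma(s)\to p$; by Proposition \ref{EquivCha}, $\psi\circ\gamma$ is a timelike (geodesic, since $\psi$ is an isometric immersion) curve with $(\psi\circ\gamma)(s)\to p'$. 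One then checks that $d\psi$ extends continuously to $p$: the velocity $\dot\gamma(s)$ converges to a timelike vector $v\in T_pM$ (after choosing $\gamma$ to be a geodesic whose reparametrised velocity has a limit, e.g.\ an affinely parametrised timelike geodesic hitting the Cauchy surface), and $d\psi|_{\gamma(s)}(\dot\gamma(s)) = \dot{(\psi\circ\gamma)}(s)$ converges to the velocity of the limiting geodesic at $p'$. Pinning down $d\psi$ on a single timelike vector together with the metric-preserving property and continuity forces $d\psi$ to converge as a whole; I would make this precise by working in a normal neighbourhood $V'$ of $p'$ in $M'$ and a normal neighbourhood $V$ of $p$ in $M$, transporting frames.

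The cleanest way to organise this is via the exponential map. Choose a normal neighbourhood $W'$ of $p'$ in $M'$. Pick any point $q \in I^-(p,M)\cap U$ close to $p$ whose image $q'=\psi(q)$ lies in $W'$ (such $q$ exists since $\psi(I^-(p,M)\cap U) = I^-(p',M')\cap\psi(U)$, established inside the proof of Proposition \ref{EquivCha}). Near $q$, $\psi$ is a diffeomorphism onto its image, so $d\psi|_q$ is a linear isometry $T_qM \to T_{q'}M'$. Now define a candidate extension $\Psi$ on a suitable neighbourhood $\mathcal{O}$ of $p$ in $M$ by $\Psi := \exp^{M'}_{q'}\circ\, d\psi|_q\circ(\exp^M_q)^{-1}$, i.e.\ transport radially from $q$ along geodesics. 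Since $\psi$ is an isometric immersion on $U$ and maps geodesics from $q$ to geodesics from $q'$ preserving initial data, $\Psi$ agrees with $\psi$ on $\mathcal{O}\cap U$ (at least on the part reachable by geodesics from $q$ staying in $U$ --- one has to be a little careful, but shrinking $\mathcal{O}$ and using that $U$ is open and $I^+(q)\cap\mathcal{O}$-type sets are contained in $U$ handles this), and $\Psi$ is smooth on all of $\mathcal{O}$, including at $p$. Moreover $\Psi$ is an isometric immersion on $\mathcal{O}$ because $\exp_q$ is a radial isometry in the relevant directions and $\Psi$ is smooth with $\Psi^*g'$ agreeing with $g$ on the dense open set $\mathcal{O}\cap U$. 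This simultaneously shows $\psi$ extends smoothly near $p$ and, since $\Psi(\mathcal{O})$ is an open neighbourhood of $p'=\Psi(p)$ with $\Psi$ a local diffeomorphism at $p$, that nearby boundary points of $U$ in $\mathcal{O}$ are corresponding boundary points (using the curve characterisation iii) of Proposition \ref{EquivCha}: any timelike curve approaching a point $\tilde p\in\mathcal{O}\cap\partial U$ is pushed by $\Psi=\psi$ to a timelike curve approaching $\Psi(\tilde p)$). Hence $C$ is open in $\partial U$ and $\psi$ extends smoothly to $U\cup C$; the extension is well-defined and independent of choices by Lemma \ref{IsoConnected} (any two such local extensions agree where both defined since they agree with $\psi$ on the open dense set $U$, hence have matching value and differential there).

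The main obstacle I expect is the bookkeeping in the exponential-map construction: one must verify that geodesics from $q$ used to define $\Psi$ genuinely stay inside $U$ (so that $\Psi$ really coincides with $\psi$ there, making the isometry property propagate to $p$ by continuity of $\Psi^*g' - g$) rather than exiting through $\partial U$ prematurely. This is where global hyperbolicity of $U$ and the fact that $p$ lies to the future of the Cauchy surface are used: choosing $q$ in $I^-(p,M)\cap U$ sufficiently close to $p$, the truncated geodesic cone $J^+(q)\cap J^-(\text{slightly to the future of }p)$ within a small normal ball is a compact set all of whose timelike interior points lie in $U$ --- this needs the observation, as in the proof of Proposition \ref{EquivCha}, that $I^+(q,M)\cap I^-(p,M)$ is mapped by $\psi$ into $M'$, forcing it into $U$. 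A second, minor subtlety is choosing the affine parametrisation of $\gamma$ so that $\dot\gamma(s)$ has a genuine limit rather than merely a limiting direction; taking $\gamma$ to be the maximal timelike geodesic from a fixed point on the Cauchy surface through the relevant region, affinely parametrised, resolves this. Everything else --- smoothness of $\Psi$, the identity $\Psi|_{U\cap\mathcal O}=\psi$, and passing from a local extension to the global one on $U\cup C$ --- is routine given Lemmas \ref{IsoConnected} and \ref{IsoDiffeo}.
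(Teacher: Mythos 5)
Your proposal is correct and follows essentially the same route as the paper: both define the local extension as $\exp_{q'}\circ\,\psi_*\circ\exp_q^{-1}$ based at an interior point $q\in U$ near $p$ (the paper takes $q=\gamma(-\varepsilon)$ on a timelike geodesic converging to $p$), use that the exponential map commutes with isometries to match the extension with $\psi$ on the overlap with $U$, and deduce openness of $C$ from the radial timelike curves via Proposition \ref{EquivCha}~iii). The only cosmetic difference is your preliminary discussion of extending $d\psi$ by continuity, which the exponential-map construction renders unnecessary (and note $\mathcal{O}\cap U$ need not be dense in $\mathcal{O}$, though continuity on its closure suffices for everything you need).
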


\begin{proof}
Assume that there exists a pair \(p \in \partial U\) and \(p' \in \partial \psi(U)\) of corresponding boundary points, otherwise there is nothing to show. 

Let \(V \subseteq M\) be a convex\footnote{Recall that an open set is called convex iff it is a normal neighbourhood of each of its points. For the existence of convex neighbourhoods we refer the reader to Proposition \(7\) of Chapter \(5\) of \cite{ONeill}.} neighbourhood of \(p\) and \(V' \subseteq M'\) be a convex neighbourhood of \(p'\). Consider a future directed timelike geodesic \(\gamma : [-\varepsilon , 0) \to U\) with \(\lim_{s \nearrow 0} \gamma(s) = p\). Then, by Proposition \ref{EquivCha}, \(\gamma' := \psi \circ \gamma\) is a future directed timelike geodesic in \(M'\) with \(\lim_{s \nearrow 0} \gamma'(s) = p'\). Without loss of generality we may assume that \(\varepsilon >0\) is so small that \(\gamma([-\varepsilon,0)) \subseteq V\) and \(\gamma'([-\varepsilon,0)) \subseteq V'\). 

Let \(p \in W \subseteq V\) be a small open neighbourhood of \(p\) such that \(W \subseteq I^+(\gamma(-\varepsilon))\) and
\begin{equation*}
\psi_*\big[\exp^{-1}_{\gamma(-\varepsilon)}(W)\big] \subseteq \exp^{-1}_{\gamma'(-\varepsilon)}(V')\;.
\end{equation*}

We can now define the smooth extension \(\hat{\psi} : W \to M'\) by
\begin{equation*}
\hat{\psi} (q) := \exp_{\gamma'(-\varepsilon)}\big(\psi_*(\exp^{-1}_{\gamma(-\varepsilon)}(q))\big) \;.
\end{equation*}
This is clearly a smooth diffeomorphism onto its image and it also agrees with \(\psi\) on \(W \cap U\), since the exponential map commutes with isometries: Let \(q \in W\cap U\) and say \(X \in T_{\gamma(-\varepsilon)}M\) is such that \(q = \exp_{\gamma(-\varepsilon)}(X)\). We then have
\begin{equation*}
\psi(q)= \psi\big(\exp_{\gamma(-\varepsilon)}(X)\big) = \exp_{(\psi \circ \gamma) (-\varepsilon)} \big(\psi_*(X)\big) = \hat{\psi}(q)\;.
\end{equation*}
Moreover, using the same argument, we have \(W \cap \partial U \subseteq C\), since for \(q \in W \cap \partial U\) and \(X := \exp^{-1}_{\gamma(-\varepsilon)}(q)\), we have that \(s \mapsto \gamma(s) = \exp_{\gamma(-\varepsilon)}(s\cdot X)\) is a timelike curve that converges to \(q\) for \(s \nearrow 1\), while \((\psi \circ \gamma) (s)\) converges to a point in \(\partial \psi(U)\) for \(s \nearrow 1\). By Proposition \ref{EquivCha}, point \(iii)\), \(q\) thus has a corresponding boundary point.  Hence, \(C\) is open in \(\partial U\).
\end{proof}

Note that in the case of \(C\) being non-empty, this lemma states that one can extend the identification of \(M\) with \(M'\). It thus furnishes the closure part of the analogy to the method of continuity referred to in the introduction. Pursuing this analogy, the next two lemmata lay the foundation for restarting the local uniqueness argument again, i.e., they lay the foundation for the openness part.

\begin{lemma}
\label{SpacelikePoint}
Let \(U\) be a CGHD of \(M\) and \(M'\) with isometric embedding \(\psi : U \subseteq M \to M'\). Assume that 
\(C \cap J^+\big(\iota(\overline{M})\big)\) is non-empty. Then there exists a point \(p \in C\) with the property
\begin{equation}
\label{spacelike}
J^-(p) \cap \partial U \cap J^+\big(\iota(\overline{M})\big)  = \{p\}\;.
\end{equation}  
\end{lemma}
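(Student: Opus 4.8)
\textbf{Proof proposal for Lemma \ref{SpacelikePoint}.}

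The plan is to extract the desired point $p$ by a minimisation argument along the "time function" provided by the global hyperbolicity of $M$. Since $(M,g)$ is globally hyperbolic with Cauchy surface $\iota(\overline{M})$, there is a Cauchy time function $t : M \to \R$ with $t = 0$ on $\iota(\overline{M})$ and $t$ strictly increasing along every future directed causal curve; moreover the level sets $\{t = \tau\}$ are Cauchy surfaces and $J^+(\iota(\overline{M})) = \{t \ge 0\}$. Consider the set $C \cap J^+(\iota(\overline{M}))$, which is non-empty by hypothesis, and let
\begin{equation*}
\tau_0 := \inf \big\{ t(q) \; \big| \; q \in C \cap J^+(\iota(\overline{M})) \big\} \ge 0 \;.
\end{equation*}
First I would pick a sequence $q_n \in C \cap J^+(\iota(\overline{M}))$ with $t(q_n) \to \tau_0$. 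The points $q_n$ all lie in $\partial U$; since $\partial U \subseteq M$ and we can confine attention to a compact region (all $q_n$ lie between two Cauchy surfaces $\{t = 0\}$ and $\{t = \tau_0 + 1\}$, whose enclosed region is compact once intersected with $\overline{U}$ — here one uses that $U$ is globally hyperbolic with the same Cauchy surface, so $\overline{U} \cap \{0 \le t \le \tau_0+1\}$ is contained in a compact set, by causal compactness in $M$), we may pass to a convergent subsequence $q_n \to p$. Then $p \in \partial U$ (the boundary is closed), $t(p) = \tau_0$ by continuity, and $p \in J^+(\iota(\overline{M}))$.

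Next I would show $p \in C$, i.e.\ that $p$ has a corresponding boundary point. For each $n$, let $p'_n \in \partial\psi(U)$ be the corresponding boundary point of $q_n$. By the characterisation in Proposition \ref{EquivCha}, $\psi$ extends continuously across $C$ (Lemma \ref{ExtendIso}), and $\psi(q_n) = p'_n$ under that extension. I would argue the $p'_n$ lie in a compact region of $M'$: using a Cauchy time function $t'$ on $M'$ and the fact that $\psi$ intertwines the two time orientations and preserves the Cauchy surfaces, one gets control $t'(p'_n) \le \tau_0 + 1$, and one passes to a convergent subsequence $p'_n \to p'$ with $p' \in \partial \psi(U)$. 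Then the continuity of the extended $\psi$ from Lemma \ref{ExtendIso}, together with point $iii)$ of Proposition \ref{EquivCha} (take a timelike geodesic into $U$ ending at $p$ and push it forward), yields that $p$ and $p'$ are corresponding boundary points, so $p \in C$. Hence $\tau_0$ is attained.

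It remains to verify the defining property \eqref{spacelike}, namely $J^-(p) \cap \partial U \cap J^+(\iota(\overline{M})) = \{p\}$. Clearly $p$ itself is in the set. Conversely, suppose $q \in J^-(p) \cap \partial U \cap J^+(\iota(\overline{M}))$ with $q \ne p$. Since $q \le p$ and $q \ne p$, there is a future directed causal curve from $q$ to $p$, along which $t$ is strictly increasing, so $t(q) < t(p) = \tau_0$. But $q \in \partial U \cap J^+(\iota(\overline{M}))$; I claim $q \in C$, which then contradicts the minimality of $\tau_0$. To see $q \in C$: I would use that $C$ is "causally closed from above" along $\partial U$ in the sense that if a boundary point lies in the causal past of a point of $C$ and still to the future of the initial surface, then it too is a corresponding boundary point. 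Concretely, take a past directed timelike curve from (near) $p$ toward $q$ staying in $\overline{U}$; one shows it can be perturbed to a timelike curve inside $U$ approaching $q$, and its image under $\psi$ approaches a boundary point of $\psi(U)$ (staying in the past of $p'$), whence $q \in C$ by Proposition \ref{EquivCha}. Alternatively, one invokes the analogous statement proved (or easily proved) from the equivalence characterisation: corresponding boundary points to the future of $\iota(\overline{M})$ that are in the causal past of a given corresponding boundary point are again corresponding boundary points.

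\textbf{Main obstacle.} The delicate point is the very last step — showing that $q \in \partial U$ with $q \le p$, $q \ne p$, $t(q) \ge 0$ forces $q \in C$. This requires producing a timelike curve \emph{inside} $U$ that limits onto $q$ whose $\psi$-image limits onto a point of $\partial\psi(U)$; the curve must be chosen carefully so that it does not accidentally run into $\partial U$ before reaching $q$, and one needs the behaviour of $\psi$ along it, which is where Lemma \ref{ExtendIso} and the past-set identification $\psi(I^-(p,M)\cap U) = I^-(p',M')\cap\psi(U)$ from the proof of Proposition \ref{EquivCha} do the work. The compactness arguments used to extract the minimiser $p$ and its partner $p'$ are routine given the Cauchy time functions on $M$ and $M'$, but they do rely on the causal compactness properties of globally hyperbolic spacetimes (e.g.\ $J^+(S_1) \cap J^-(S_2)$ compact for Cauchy surfaces $S_1, S_2$), which I would cite from O'Neill rather than reprove.
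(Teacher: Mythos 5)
Your strategy (minimise a Cauchy time function over $C \cap J^+(\iota(\overline{M}))$) has two genuine gaps. First, the compactness argument used to extract the minimiser fails: the region between two Cauchy surfaces, $J^+(\{t=0\}) \cap J^-(\{t=\tau_0+1\})$, is \emph{not} compact unless the Cauchy surfaces themselves are compact (global hyperbolicity gives compactness of $J^+(x)\cap J^-(y)$ for \emph{points}, or of $J^-(y)\cap J^+(\Sigma)$ for a point and a Cauchy surface, but not for a slab), and intersecting with $\overline{U}$ does not help. Concretely, in $1+1$ Minkowski space take $U = \{(t,x) : t < \tfrac{3}{2} - \tfrac{1}{\pi}\arctan x\}$: this is a GHD with Cauchy surface $\{t=0\}$, $\partial U$ is a spacelike graph on which $t$ has infimum $1$ that is not attained, and a minimising sequence runs off to spatial infinity with no convergent subsequence. (One could try to restrict the minimisation to $C \cap J^-(p_0)$ for a fixed $p_0 \in C$, which \emph{is} precompact, but then the limit point lies only in $\overline{C}$ and $C$ is a priori merely open in $\partial U$, so attainment is still not clear.) Second, the step you yourself flag as the main obstacle --- that $q \in \partial U \cap J^+(\iota(\overline{M}))$ with $q \le p$, $q \ne p$, $p \in C$ forces $q \in C$ --- is precisely the hard content of the lemma, and your proposed workarounds do not supply it. Note that by achronality of $\partial U \cap J^+(\iota(\overline{M}))$ the causal curve from $q$ to $p$ is a null geodesic lying \emph{entirely in} $\partial U$, so it cannot be ``perturbed to a timelike curve inside $U$ approaching $q$'' while retaining control of its $\psi$-image; and the ``analogous statement'' you invoke as an alternative is exactly what needs to be proved.

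The paper's proof sidesteps both issues by working locally along a single null generator: starting from an arbitrary $p \in C \cap J^+(\iota(\overline{M}))$ and any competitor $q$, the past directed causal curve from $p$ to $q$ is a null geodesic whose intersection with $\partial U$ is a compact connected interval $[0,b]$ (compactness comes for free because the past inextendible geodesic must eventually cross the Cauchy surface $\iota(\overline{M}) \subseteq U$), and the terminal point $\gamma(b)$ is the sought point. Membership of the whole segment in $C$ is then propagated by an open--closed argument along this one geodesic: openness of $C$ in $\partial U$ from Lemma \ref{ExtendIso}, and closedness by approximating $\gamma(t_j)$ with timelike curves in $U$ and using the continuity of the extended $\psi$. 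If you want to rescue your minimisation idea, you would essentially have to prove this null-geodesic propagation statement anyway, at which point the paper's direct construction is both shorter and avoids any global compactness hypothesis.
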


Whenever \(C\) is non-empty, we can assume without loss of generality (otherwise we reverse the time orientation) that we have in fact \(C \cap J^+\big(\iota(\overline{M})\big) \neq \emptyset\). In this case, the above lemma ensures the existence of a `spacelike' part of the boundary - only those parts are suitable for restarting the local uniqueness argument.

\begin{proof}
So assume that \(C \cap J^+\big(\iota(\overline{M})\big)\) is non-empty. Let \(p \in C \cap J^+\big(\iota(\overline{M})\big)\) and we have to deal with the case that \(\big(J^-(p) \cap \partial U \cap J^+\big(\iota(\overline{M})\big)\big) \setminus \{p\} \neq \emptyset\). So let \(q \in \big(J^-(p) \cap \partial U \cap J^+\big(\iota(\overline{M})\big) \big) \setminus \{p\}\). Thus, there exists a past directed causal curve \(\gamma\) from \(p\) to \(q\). Since \(\partial U \cap J^+\big(\iota(\overline{M})\big) \) is achronal, \(\gamma\) must be a null geodesic\footnote{\label{Ach}That \(\partial U \cap J^+\big(\iota(\overline{M})\big) \) is  achronal follows from \(\ll\) being an open relation, see Lemma \(3\) in Chapter \(14\) of \cite{ONeill}: If there were two points \(x, y \in \partial U \cap J^+\big(\iota(\overline{M})\big)\) with \(x \ll y\), then we could also find \(x' \in U^c \cap J^+\big(\iota(\overline{M})\big)\) close to \(x\) and \(y' \in U \cap J^+\big(\iota(\overline{M})\big)\) close to \(y\) such that \(x' \ll y'\). This, however, gives rise to an inextendible timelike curve in \(U\) which does not intersect the Cauchy hypersurface \(\iota(\overline{M})\) - a contradiction to the global hyperbolicity of \(U\). That \(\gamma\) must be a null geodesic is an easy consequence of the fundamental Proposition \(46\) in Chapter \(10\) of \cite{ONeill}.}. Let \(\gamma : [0, a) \to M\), where \(a >1\), be a parameterization of the past inextendible null geodesic \(\gamma\) with \(\gamma(0) = p\) and \(\gamma(1) = q\).  Moreover, note that \(\gamma([0,1]) \subseteq \partial U\). Since if there were a \(0<t<1\) with \(\gamma(t) \in U\) then global hyperbolicity of \(U\) would imply that \(\gamma(1) =q \in U\) as well. On the other hand, if \(\gamma(t) \in U^c \setminus \partial U\) then we could find a closeby point \(r \in U^c \setminus \partial U\) that could be connected by a timelike curve to \(p\). But then, we could also find a point \(s \in U\) close by to \(p\) such that \(r\) and \(s\) could be connected by a timelike curve - again a contradiction to the global hyperbolicity of \(U\).

Let \([0,b] := \gamma^{-1}(\partial U)\). Since \(\partial U\) is closed in \(M\), this is indeed a closed interval - and exactly the same argument as above shows that it is connected. In the following we show that \(\gamma(b)\) has the wanted property, namely
\begin{equation*}
\gamma(b) \in  C \textnormal{ and }  J^-(\gamma(b)) \cap \partial U \cap J^+\big(\iota(\overline{M})\big) = \{\gamma(b)\}\;.
\end{equation*} 
We first show that \(J:=\{t \in [0,b]\, |\, \gamma(t) \in C\}\) is equal to \([0,b]\).
Since \(\gamma(0) \in C\), \(J\) is non-empty. By Lemma \ref{ExtendIso} we know that \(C\) is open in \(\partial U\), so \(J\) is open in \([0,b]\). It remains to show that \(J\) is closed in \([0,b]\) in order to deduce that \(J = [0,b]\). 

Since by Lemma \ref{ExtendIso} \(\psi\) extends to an isometric embedding on \(U \cup C\), \(\gamma'|_J := \psi \circ \gamma|_J\) is a null geodesic in \(M'\). Denote with \(\gamma'\) the corresponding past inextedible null geodesic in \(M'\). So let \(t_j \in J\), \(j \in \N\), be a sequence with \(t_j \to t_\infty\) in \([0,b]\) for \(j \to \infty\). We then claim that \(\gamma'(t_\infty)\) and \(\gamma(t_\infty)\) are corresponding boundary points. This is seen as follows: let \(V \subseteq M\) be a neighbourhood of \(\gamma(t_\infty)\) and \(V' \subseteq M'\) a neighbourhood of \(\gamma'(t_\infty)\). Consider now a sequence of future directed timelike curves \(\alpha_j : (-\varepsilon, 0) \to U\), \(j \in \N\), with \(\lim_{s \nearrow 0}\alpha_j(s) = \gamma(t_j)\). Then for \(j\) large enough and \(\sigma <0\) close enough to zero, we have \(\alpha_j(\sigma) \in  V \cap \psi^{-1}\big(V' \cap \psi(U)\big)\). This finally shows that \(\gamma(b) \in C\). 

That \(\gamma(b)\) lies to the future of \(\iota(\overline{M})\) is immediate, since \(\gamma\) cannot cross \(\iota(\overline{M})\) as long as it lies in \(\partial U\).

In order to show that \(J^-(\gamma(b)) \cap \partial U \cap J^+\big(\iota(\overline{M})\big) = \{\gamma(b)\}\), assume that there were a \(q \in \big(J^-(\gamma(b)) \cap \partial U \cap J^+\big(\iota(\overline{M})\big)\big) \setminus \{\gamma(b)\}\). Then there is a past directed null geodesic from \(\gamma(b)\) to \(q\). Concatenate \(\gamma|_{[0,b]}\) and this null geodesic. Note that by definition of \([0,b]\) this null geodesic must be broken. But then we can connect \(p\) and \(q\) by a timelike curve\footnote{See again Proposition \(46\) in Chapter \(10\) of \cite{ONeill}.}, which, as before, leads to a contradiction to \(U\) being globally hyperbolic.
\end{proof}

\begin{lemma}
\label{FuturePoint}
Let \(U\) be a GHD of some initial data and \(M \supseteq U\) an extension of \(U\). Suppose that there exists a \(p \in \partial U\) that satisfies \eqref{spacelike}. Then for every open neighbourhood \(W\) of \(p\) in \(M\) there exists a point \(q \in I^+(p) \subseteq M\) such that 
\begin{equation*}
J^-(q) \cap U^c \cap J^+\big(\iota(\overline{M})\big)  \subseteq W\;.
\end{equation*}
\end{lemma}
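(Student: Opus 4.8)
The plan is to argue by contradiction, imitating the ``openness'' half of a continuity argument. Write $\Sigma := \iota(\overline{M})$; since $M$ is an extension of the GHD $U$, the set $\Sigma$ is a Cauchy surface of the globally hyperbolic $M$, and $\Sigma \subseteq U$. Suppose there were an open neighbourhood $W$ of $p$ in $M$ such that for \emph{every} $q \in I^+(p)$ the set $J^-(q) \cap U^c \cap J^+(\Sigma)$ fails to be contained in $W$. Fix a non-constant future directed timelike curve $c : [0,1] \to M$ with $c(0) = p$ (e.g.\ a short timelike geodesic) and set $q_n := c(1/n)$; then $q_n \to p$, each $q_n$ lies in $I^+(p)$, and $q_n \leq q_1$. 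For each $n$, using the axiom of countable choice, pick a point $r_n \in \big( J^-(q_n) \cap U^c \cap J^+(\Sigma) \big) \setminus W$. Since $r_n \leq q_n \leq q_1$ and $r_n \in J^+(\Sigma)$, all the $r_n$ lie in $J^-(q_1) \cap J^+(\Sigma)$, which is compact by a standard property of Cauchy hypersurfaces (Chapter 14 of \cite{ONeill}); passing to a subsequence (countable choice again), $r_{n_k} \to r_\infty$. As $U^c$, $J^+(\Sigma)$ and the complement of $W$ are closed, $r_\infty \in U^c \cap J^+(\Sigma)$ and $r_\infty \notin W$, so $r_\infty \neq p$; and since $r_{n_k} \leq q_{n_k} \to p$ while $\leq$ is a closed relation on a globally hyperbolic manifold (Lemma 22 in Chapter 14 of \cite{ONeill}), we get $r_\infty \leq p$. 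Thus $r_\infty \in J^-(p) \cap U^c \cap J^+(\Sigma)$ with $r_\infty \neq p$.

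It therefore suffices to prove the Claim that $J^-(p) \cap U^c \cap J^+(\Sigma) = \{p\}$: this is the step that upgrades the hypothesis \eqref{spacelike}, which a priori only controls $\partial U$, to a statement about all of $U^c$, and it contradicts $r_\infty \neq p$. To prove the Claim, let $r$ belong to the left-hand side with $r \neq p$, so $r < p$. If $r \in \partial U$, then $r$ already contradicts \eqref{spacelike}, so assume $r \notin \overline{U}$. Pick $s \in \Sigma$ with $s \leq r$ (possible since $r \in J^+(\Sigma)$; note $s \in \Sigma \subseteq U$) and concatenate a future directed causal curve from $s$ to $r$ with one from $r$ to $p$, obtaining a future directed causal curve $\mu : [0,1] \to M$ with $\mu(0) = s \in U$, $\mu(t_r) = r$ for some $0 < t_r < 1$, and $\mu(1) = p$. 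Put $t_0 := \sup\{ t \in [0,t_r] : \mu(t) \in \overline{U}\}$; since $\mu(t_r) = r$ lies in the open set $M \setminus \overline{U}$ we have $t_0 < t_r$, and $\mu(t_0) \in \overline{U}$ as $\overline{U}$ is closed; since $\mu(t) \notin \overline{U}$ for $t \in (t_0, t_r]$ and $U$ is open, $\mu(t_0) \in \overline{U} \setminus U = \partial U$. Moreover $\mu(t_0) \leq \mu(t_r) \leq p$ and $\mu(t_0) \geq s$, so $\mu(t_0) \in J^-(p) \cap \partial U \cap J^+(\Sigma)$, whence $\mu(t_0) = p$ by \eqref{spacelike}. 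But $t_0 < t_r < 1$ and $\mu(t_r) = r \neq p$, so $\mu|_{[t_0,1]}$ is a non-constant future directed causal curve from $p$ to $p$ -- impossible, since the globally hyperbolic $M$ is causal. This proves the Claim, and with it the lemma.

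The step I expect to be the main obstacle is the Claim, and within it the case $r \notin \overline{U}$: one has to exclude points of the exterior of $\overline{U}$ from the causal past of $p$, yet the only hypothesis available, \eqref{spacelike}, says nothing directly about such points. The device that makes \eqref{spacelike} applicable is to run a causal curve from the Cauchy surface (which always lies inside $U$) out to the offending point, record the last instant at which this curve still lies in $\overline{U}$ -- necessarily a boundary point of $U$ that lies causally below $p$ and to the future of $\Sigma$ -- and then read off a causal loop through $p$. The remaining ingredients -- compactness of $J^-(q_1) \cap J^+(\Sigma)$ and the closedness of $\leq$ -- are standard facts about globally hyperbolic spacetimes.
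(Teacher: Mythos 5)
Your proof is correct. The first half --- arguing by contradiction, extracting a sequence $r_n$ lying in the compact set $J^-(q_1) \cap J^+\big(\iota(\overline{M})\big)$, and using the closedness of the relation $\leq$ to produce a limit point $r_\infty \leq p$ with $r_\infty \in U^c \cap J^+\big(\iota(\overline{M})\big)$ but $r_\infty \notin W$ --- is exactly the paper's argument. Where you genuinely diverge is in ruling out this limit point. The paper notes that $r_\infty \notin \partial U$ by \eqref{spacelike} and then disposes of the remaining case $r_\infty \in U^c \setminus \partial U$ by recycling the perturbation argument from the proof of Lemma \ref{SpacelikePoint}: one upgrades the causal relation to a timelike curve running from a point outside $\overline{U}$ to a point of $U$ near $p$, which yields an inextendible timelike curve in $U$ that misses the Cauchy surface, contradicting the global hyperbolicity of $U$. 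You instead prove the stronger intermediate claim $J^-(p) \cap U^c \cap J^+\big(\iota(\overline{M})\big) = \{p\}$ directly: run a causal curve from the Cauchy surface through the offending point up to $p$, record its last exit from $\overline{U}$, observe that this exit point lies in $J^-(p) \cap \partial U \cap J^+\big(\iota(\overline{M})\big)$ and hence equals $p$ by \eqref{spacelike}, and read off a nontrivial closed causal curve in $M$. Both routes are sound; yours has the advantage of being self-contained (it uses only \eqref{spacelike} and the causality of $M$, avoiding the causal-to-timelike perturbation and any appeal to the global hyperbolicity of $U$), and it cleanly isolates the fact that \eqref{spacelike} automatically propagates from $\partial U$ to all of $U^c$, whereas the paper's version economises by reusing an argument it needs anyway in Lemma \ref{SpacelikePoint}.
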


\begin{proof}
So let \(p\) satisfy \(J^-(p) \cap \partial U \cap J^+\big(\iota(\overline{M})\big) = \{p\}\). Let \(\gamma : [0, \varepsilon] \to M\) be a future directed timelike curve with \(\gamma(0) = p\). Then we have \(\gamma((0,\varepsilon]) \subseteq U^c\). Let \(W \subseteq M\) be an open neighbourhood of \(p\).  If the lemma were not true, then there is a sequence \(t_j \in (0, \varepsilon]\), \(j\in \N\), with \(t_j \to 0\) in \([0,\varepsilon]\) for \(j \to \infty\), and a sequence of points \(\{q_j\}_{j\in\N}\) with 
\begin{equation*}
q_j \in J^-(\gamma(t_j)) \cap U^c \cap J^+\big(\iota(\overline{M})\big) \cap W^c \;.
\end{equation*}
Since \(M\) is globally hyperbolic, \(J^-(\gamma(\varepsilon)) \cap J^+\big(\iota(\overline{M})\big)\) is compact, thus \(J^-(\gamma(\varepsilon)) \cap U^c \cap J^+\big(\iota(\overline{M})\big) \cap W^c\) is compact, and we can assume without loss of generality that \(q_j \to q \in J^-(\gamma(\varepsilon)) \cap U^c \cap J^+\big(\iota(\overline{M})\big) \cap W^c\). Since the relation \(\leq\) is closed, we obtain \(q \leq p\), and thus clearly \(q < p\).  But this leads again to a contradiction: We cannot have \(q \in \partial U\) by assumption, thus \(q \in U^c \setminus \partial U\). This, however, contradicts the global hyperbolicity of \(U\) in the same way as we argued in the proof of Lemma \ref{SpacelikePoint}. 
\end{proof}

We are finally well-prepared for the proof of Theorem \ref{NotMCGHD}.

\begin{proof}[Proof of Theorem \ref{NotMCGHD}:]
Recall that \(M\) and \(M'\) are GHDs, and \(U \subseteq M\) is a CGHD of \(M\) and \(M'\) that has corresponding boundary points in \(M\) and \(M'\). Without loss of generality we can assume that \(C \cap J^+\big(\iota(\overline{M})\big)\) is non empty, and thus, by Lemma \ref{SpacelikePoint}, we can find a \(p \in C\) which satisfies \(J^-(p) \cap \partial U \cap J^+\big(\iota(\overline{M})\big) = \{p\}\). Since by Lemma \ref{ExtendIso} \(C\) is open in \(\partial U\), we can find a convex neighbourhood \(V \subseteq M\) of \(p\) such that \(V \cap \partial U \subseteq C\). Since the strong causality condition holds at \(p\), we can find a causally convex neighbourhood \(W\) of \(p\) whose closure is compact and completely contained in  \(V\).\footnote{Recall that an open set \(W \subseteq M\) is called \emph{causally convex} iff every causal curve in \(M\) with endpoints in \(W\) is entirely contained in \(W\). That we can find such a causally convex neighbourhood follows from the strong causality condition: Let \(V_1\) be a neighbourhood of \(p\) whose closure is compact and completely contained in \(V\). By the strong causality condition we can find a neighbourhood \(V_2 \subseteq V_1\) of \(p\) with the property that every causal curve with endpoints in \(V_2\) is completely contained in \(V_1\). Pick now two points \(p_1, p_2 \in V_2\) such that  \(p_1 \ll p \ll p_2\). It follows that \(W:=I^+(p_1) \cap I^-(p_2)\) is an open neighbourhood of \(p\) which is completely contained in \(V_1\) and thus has compact closure. Moreover, \(W\) is causally convex: Let \(\gamma\) be a causal curve with endpoints \(x \leq y \in W\) and let \(z\) be a point on \(\gamma\). We then have \(p_1 \ll x \leq z \leq y \ll p_2\), and by Proposition \(46\) of Chapter \(10\) in \cite{ONeill} it follows that \(z \in W\).}
Let \(q \in I^+(p)\) be a point with the property that \(J^-(q) \cap U^c \cap J^+\big(\iota(\overline{M})\big) \subseteq W\), whose existence is guaranteed by Lemma \ref{FuturePoint}.

Let us denote with  \(\tau_q : M \to [0,\infty)\) the \emph{time separation} from \(q\), i.e.\
\begin{quote}
\(\tau_q(r) := \sup \{L(\gamma) : \gamma \) is a future directed causal curve segment from \(r\) to \(q\)\},
\end{quote}
where \(L(\gamma)\) denotes the length of \(\gamma\). If \(r \notin J^-(q)\) we set \(\tau_q(r)\) equal to zero. Note that \(\tau_q\) restricted to \(W\) can be explicitly given by the exponential map based at \(q\): Given \(r \in W\), there exists, by the global hyperbolicity of \(M\), a geodesic from \(r\) to \(q\) whose length equals the time separation from \(r\) to \(q\). Since \(W\) is causally convex, this geodesic must be completely contained in \(W\) - and since \(V \supseteq W\) is convex, this geodesic is a radial one in the exponential chart centred at \(q\). Thus, we obtain for $r \in I^-(q) \cap W$ that 
\begin{equation}
\label{GivenByExp}
\tau_q(r) = \sqrt{-g|_q\big(\exp_q^{-1}(r), \exp_q^{-1}(r)\big)}\;.
\end{equation}
In particular \(\tau_q\) is smooth in \(I^-(q) \cap W\) and, by the global hyperbolicity of \(M\), continuous in \(V\).\footnote{Cf.\ Lemma \(21\) in Chapter \(14\) of \cite{ONeill}.}

\begin{center}
\def\svgwidth{5cm}
%% Creator: Inkscape 0.48.2, www.inkscape.org
%% PDF/EPS/PS + LaTeX output extension by Johan Engelen, 2010
%% Accompanies image file '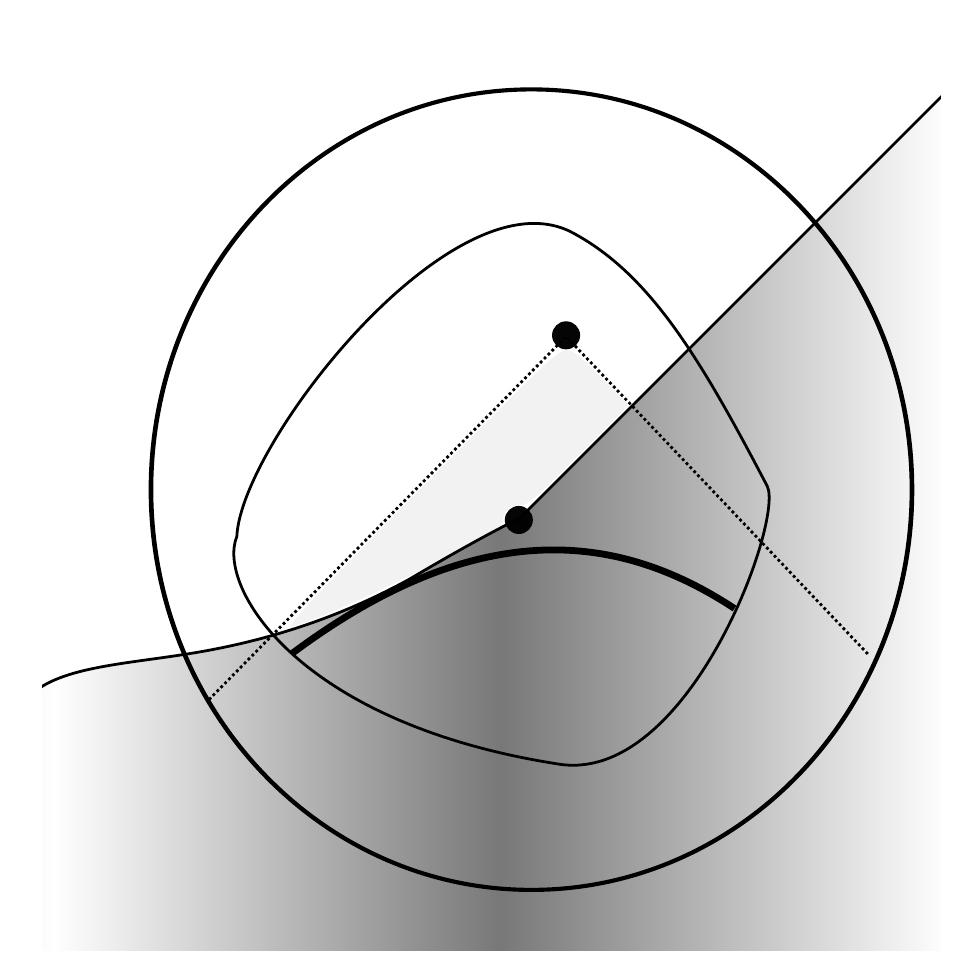' (pdf, eps, ps)
%%
%% To include the image in your LaTeX document, write
%%   \input{<filename>.pdf_tex}
%%  instead of
%%   \includegraphics{<filename>.pdf}
%% To scale the image, write
%%   \def\svgwidth{<desired width>}
%%   \input{<filename>.pdf_tex}
%%  instead of
%%   \includegraphics[width=<desired width>]{<filename>.pdf}
%%
%% Images with a different path to the parent latex file can
%% be accessed with the `import' package (which may need to be
%% installed) using
%%   \usepackage{import}
%% in the preamble, and then including the image with
%%   \import{<path to file>}{<filename>.pdf_tex}
%% Alternatively, one can specify
%%   \graphicspath{{<path to file>/}}
%% 
%% For more information, please see info/svg-inkscape on CTAN:
%%   http://tug.ctan.org/tex-archive/info/svg-inkscape
%%
\begingroup%
  \makeatletter%
  \providecommand\color[2][]{%
    \errmessage{(Inkscape) Color is used for the text in Inkscape, but the package 'color.sty' is not loaded}%
    \renewcommand\color[2][]{}%
  }%
  \providecommand\transparent[1]{%
    \errmessage{(Inkscape) Transparency is used (non-zero) for the text in Inkscape, but the package 'transparent.sty' is not loaded}%
    \renewcommand\transparent[1]{}%
  }%
  \providecommand\rotatebox[2]{#2}%
  \ifx\svgwidth\undefined%
    \setlength{\unitlength}{280.57802734bp}%
    \ifx\svgscale\undefined%
      \relax%
    \else%
      \setlength{\unitlength}{\unitlength * \real{\svgscale}}%
    \fi%
  \else%
    \setlength{\unitlength}{\svgwidth}%
  \fi%
  \global\let\svgwidth\undefined%
  \global\let\svgscale\undefined%
  \makeatother%
  \begin{picture}(1,1.00267114)%
    \put(0,0){\includegraphics[width=\unitlength]{SetUp.pdf}}%
    \put(0.57281632,0.44750099){\color[rgb]{0,0,0}\makebox(0,0)[lb]{\smash{\(p\)}}}%
    \put(0.52543132,0.67234445){\color[rgb]{0,0,0}\makebox(0,0)[lb]{\smash{\(q\)}}}%
    \put(0.6347096,0.32326502){\color[rgb]{0,0,0}\makebox(0,0)[lb]{\smash{\(S\)}}}%
    \put(0.06280003,0.12535223){\color[rgb]{0,0,0}\makebox(0,0)[lb]{\smash{\(U\)}}}%
    \put(0.7166064,0.90300736){\color[rgb]{0,0,0}\makebox(0,0)[lb]{\smash{\(V\)}}}%
    \put(0.73635854,0.59651275){\color[rgb]{0,0,0}\makebox(0,0)[lb]{\smash{\(W\)}}}%
  \end{picture}%
\endgroup%

\end{center}

Since \(\overline{W}\) is compact, \(\tau_q\) takes on its maximum on \(\overline{W} \cap U^c \cap J^+\big(\iota(\overline{M})\big)\). Let us denote this maximum by \(\tau_0\). Clearly, we have \(\tau_0 > 0\). Moreover, one has \(\tau_q (r) = \tau_0\) only for \(r \in \partial U \cap W \cap J^+\big(\iota(\overline{M})\big)\), since if this were not the case, using normal coodinates around \(q\), one could continue the length maximising geodesic from \(r_0\) to \(q\) a bit to the past, staying in \(W \cap U^c\), which would lead to a longer timelike curve.

We now define 
\begin{equation*}
S:= \tau_q^{-1} (\tau_0) \cap W \cap I^+\big(\iota(\overline{M})\big) \;.
\end{equation*}
By construction, $S$ contains at least one point  of \(\partial U\); and since the hyperboloid 
\begin{equation*}
Q^-_{\tau_0} := \big{\{}X \in T_qM \; \big| \; \sqrt{-g|_q(X,X)} = \tau_0 \text{ and } X \text{ past directed}\big{\}}
\end{equation*}
is smooth, \eqref{GivenByExp} shows that $S$ is smooth as well. Moreover, it follows from the Gauss lemma\footnote{Cf.\ Lemma 1 in Chapter 5 of \cite{ONeill}} that the normal of $S$ at $\exp_q(X)$, where $X \in Q^-_{\tau_0}$, is given by $(\exp_q)_*(X)$, which is timelike - and hence $S$ is spacelike.
Furthermore, \(S\) is contained in \(\overline{U} \cap J^+\big(\iota(\overline{M})\big)\), since \(\tau_q(r)\) is only greater than zero for \(r \in J^-(q)\), and on \(J^-(q) \cap U^c \cap J^+\big(\iota(\overline{M})\big) \subseteq W\) we only have \(\tau_q(r) = \tau_0\) for \(r \in \partial U\) as argued above. 

Using Lemma \ref{ExtendIso} (and therefore the fact that \(V \cap \partial U \subseteq C\)) we can thus map\footnote{Recall that we denote the isometric embedding of \(U\) into \(M'\) by \(\psi\).} \(S\) isometrically to \(\psi(S) \subseteq M'\) - and suitable neighbourhoods of \(S\) in \(M\) and of \(\psi(S)\) in \(M'\) are GHDs of \((S,\bar{g}_S, k_S)\) (where \(\bar{g}_S\) is the induced metric from the ambient spacetime \(M\) and \(k_S\) is the second fundamental form of \(S\) in \(M\)). By Theorem \ref{LocalTheory} there exists a globally hyperbolic development  \(N \subseteq M\) of \((S, \bar{g}_S,k_S)\) together with an isometric embedding \( \phi : N \to M'\) such that \(\phi|_S = \psi|_S\). 

We now claim that \(\psi = \phi\) holds in \(N \cap \overline{U}\), which would imply that we can extend \(\psi\) to an isometric embedding \(\Psi : U \cup N  \to M'\). By the same argument as in the proof of  Corollary \ref{welldefined} we obtain $(d\psi)|_S = (d\phi)|_S$. The same continuity argument as in the proof of Lemma \ref{IsoConnected}, but this time applied to \(N \cap \overline{U}\), now proves the claim.

Also note that  \(U \cup N\) is globally hyperbolic with Cauchy hypersurface \(\iota(\overline{M})\): consider a point $r$ on an inextendible timelike curve $\gamma$ in  $U \cup N$. If $r$ is in $N \setminus U$, the curve $\gamma$ must intersect $S$, since $S$ is a Cauchy hypersurface in $N$. The choice of $\tau_0$ then implies that $\gamma$ must also enter $U$. So without loss of generality we can assume that there exists a point $r$ on $\gamma$ that lies in $U$. But since $U$ is globally hyperbolic with Cauchy hypersurface \(\iota(\overline{M})\), it now follows that $\gamma$ must intersect \(\iota(\overline{M})\). Moreover, $\gamma$ cannot intersect \(\iota(\overline{M})\) more than once, since \(\iota(\overline{M})\) is also a Cauchy hypersurface for $M$. 

Finally, since \(S\) contains at least one point in \(\partial U\), it follows that \(U \cup N \subseteq M\) is a strictly larger CGHD of \(M\) and \(M'\) than the CGHD \(U\) we started with. 
\end{proof}

Invoking the tertium non datur, Theorem \ref{NotMCGHD} implies
\begin{theorem}
\label{TNDVersion}
Let \(M\) and \(M'\) be GHDs of the same initial data, and let \(U\) be the MCGHD of \(M\) and \(M'\). Then \(U\) does not have corresponding boundary points in \(M\) and \(M'\).
\end{theorem}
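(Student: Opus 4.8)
The plan is to obtain Theorem \ref{TNDVersion} as the contrapositive of the ``in particular'' clause of Theorem \ref{NotMCGHD}; this contrapositive step is precisely where the \emph{tertium non datur} is invoked. Indeed, Theorem \ref{NotMCGHD} asserts that any CGHD of $M$ and $M'$ which possesses corresponding boundary points is not maximal, while the MCGHD is by definition maximal, so it cannot possess corresponding boundary points.

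Spelled out, I would argue by contradiction. Suppose the MCGHD $U$ of $M$ and $M'$ had a pair of corresponding boundary points in $M$ and $M'$. Then Theorem \ref{NotMCGHD} produces a CGHD $\tilde{U}$ of $M$ and $M'$ which is a strictly larger extension of $U$ --- concretely, in the notation of that proof, $\tilde{U} = U \cup N \subseteq M$ is a proper superset of $U$ sharing the Cauchy surface $\iota(\overline{M})$. This contradicts Theorem \ref{ExMCGHD}: being the MCGHD, $U$ must be an extension of the CGHD $\tilde{U}$, so there is a time orientation preserving isometric embedding $\psi : \tilde{U} \to U$ with $\psi \circ \iota = \iota$. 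Composing $\psi$ with the inclusion $U \hookrightarrow \tilde{U}$ yields an isometric immersion $\Phi : \tilde{U} \to \tilde{U}$ that fixes $\iota(\overline{M})$ pointwise; since $\Phi$ preserves the time orientation, its differential along $\iota(\overline{M})$ is the identity (this is the argument already used in Corollary \ref{welldefined}). Lemma \ref{IsoConnected} then forces $\Phi = \id_{\tilde{U}}$, whence $\tilde{U} = \Phi(\tilde{U}) \subseteq U$, contradicting $U \subsetneq \tilde{U}$. Hence $U$ has no corresponding boundary points in $M$ and $M'$.

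I do not anticipate a genuine obstacle here: the only delicate point is the passage from ``strictly larger extension'' to an actual contradiction with maximality, and this is exactly what the rigidity statement Lemma \ref{IsoConnected} delivers (equivalently, it is already subsumed in the uniqueness clause of Theorem \ref{ExMCGHD}). Everything else is a formal manipulation of the contrapositive.
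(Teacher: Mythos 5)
Your proposal is correct and follows exactly the paper's route: Theorem \ref{TNDVersion} is obtained as the contrapositive of Theorem \ref{NotMCGHD}, the maximality of the MCGHD ruling out the strictly larger CGHD that Theorem \ref{NotMCGHD} would produce. The rigidity argument via Lemma \ref{IsoConnected} that you supply to pass from ``strictly larger extension'' to a contradiction is valid but not needed in the paper's setup, since there the MCGHD is literally the union of all CGHDs $U_\alpha \subseteq M$, so a CGHD $\tilde{U} = U \cup N \subseteq M$ strictly containing $U$ is immediately impossible.
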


\subsection{Finishing off the proof of the main theorems}
\label{Final}

From here on, the proof of Theorem \ref{CommonExtension} is straightforward:

\begin{proof}[Proof of Theorem \ref{CommonExtension}:]
As already outlined in the introduction, we will construct the common extension of \(M\) and \(M'\) by glueing them together along their MCGHD. Theorem \ref{TNDVersion} will yield that this space is Hausdorff. It then remains to show that this quotient space comes with enough natural structure that turns it into a GHD.

Thus, let us take the disjoint union \(M \sqcup M'\) of \(M\) and \(M'\) and endow it with the natural topology. Let us denote the MCGHD of \(M\) and \(M'\) by \(U\) (the existence of such a CGHD is guaranteed by Theorem \ref{ExMCGHD}) and the isometric embedding of \(U\) into \(M'\) by \(\psi\). We now consider the following equivalence relation on \(M \sqcup M'\): For \(p,q \in M\sqcup M'\) we define \(p \sim q\) if and only if
\begin{quote}
\(p \in U\subseteq M\) and \(q = \psi(p)\) \hspace{1cm} \emph{or} \hspace{1cm} \(q \in U \subseteq M\) and \(p = \psi(q)\) \hspace{1cm} \emph{or} \hspace{1cm} \(p=q\).
\end{quote}
We then take the quotient \((M \sqcup M')/_\sim =:\tilde{M}\), endowed with the quotient topology. 
The following elementary remark is needed in the remainder of the proof:
\begin{equation}
\label{homeo}
\textnormal{The maps } \pi \circ j \textnormal{ and } \pi \circ j' \textnormal{ are homeomorphisms onto their image.}
\end{equation}

\begin{center}
\def\svgwidth{8cm}
%% Creator: Inkscape 0.48.2, www.inkscape.org
%% PDF/EPS/PS + LaTeX output extension by Johan Engelen, 2010
%% Accompanies image file '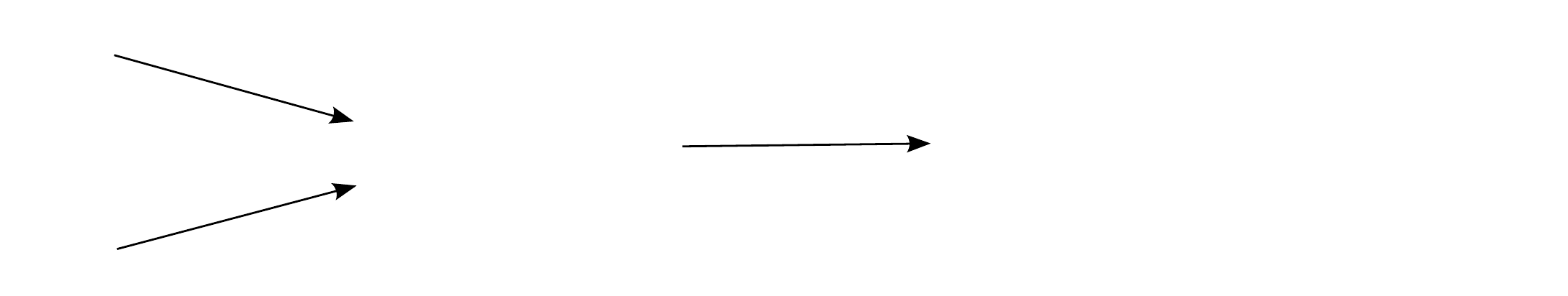' (pdf, eps, ps)
%%
%% To include the image in your LaTeX document, write
%%   \input{<filename>.pdf_tex}
%%  instead of
%%   \includegraphics{<filename>.pdf}
%% To scale the image, write
%%   \def\svgwidth{<desired width>}
%%   \input{<filename>.pdf_tex}
%%  instead of
%%   \includegraphics[width=<desired width>]{<filename>.pdf}
%%
%% Images with a different path to the parent latex file can
%% be accessed with the `import' package (which may need to be
%% installed) using
%%   \usepackage{import}
%% in the preamble, and then including the image with
%%   \import{<path to file>}{<filename>.pdf_tex}
%% Alternatively, one can specify
%%   \graphicspath{{<path to file>/}}
%% 
%% For more information, please see info/svg-inkscape on CTAN:
%%   http://tug.ctan.org/tex-archive/info/svg-inkscape
%%
\begingroup%
  \makeatletter%
  \providecommand\color[2][]{%
    \errmessage{(Inkscape) Color is used for the text in Inkscape, but the package 'color.sty' is not loaded}%
    \renewcommand\color[2][]{}%
  }%
  \providecommand\transparent[1]{%
    \errmessage{(Inkscape) Transparency is used (non-zero) for the text in Inkscape, but the package 'transparent.sty' is not loaded}%
    \renewcommand\transparent[1]{}%
  }%
  \providecommand\rotatebox[2]{#2}%
  \ifx\svgwidth\undefined%
    \setlength{\unitlength}{627.48212891bp}%
    \ifx\svgscale\undefined%
      \relax%
    \else%
      \setlength{\unitlength}{\unitlength * \real{\svgscale}}%
    \fi%
  \else%
    \setlength{\unitlength}{\svgwidth}%
  \fi%
  \global\let\svgwidth\undefined%
  \global\let\svgscale\undefined%
  \makeatother%
  \begin{picture}(1,0.18878102)%
    \put(0,0){\includegraphics[width=\unitlength]{Maps1.pdf}}%
    \put(0.00546403,0.14443821){\color[rgb]{0,0,0}\makebox(0,0)[lb]{\smash{\(M\)}}}%
    \put(-0,0.0169445){\color[rgb]{0,0,0}\makebox(0,0)[lb]{\smash{\(M'\)}}}%
    \put(0.13660039,0.15536622){\color[rgb]{0,0,0}\makebox(0,0)[lb]{\smash{\(j\)}}}%
    \put(0.13842173,0.00965914){\color[rgb]{0,0,0}\makebox(0,0)[lb]{\smash{\(j'\)}}}%
    \put(0.24588069,0.08433406){\color[rgb]{0,0,0}\makebox(0,0)[lb]{\smash{\(M \sqcup M'\)}}}%
    \put(0.48811864,0.1134754){\color[rgb]{0,0,0}\makebox(0,0)[lb]{\smash{\(\pi\)}}}%
    \put(0.61014827,0.08433402){\color[rgb]{0,0,0}\makebox(0,0)[lb]{\smash{\((M \sqcup M')/_\sim\)}}}%
  \end{picture}%
\endgroup%

\end{center}

Here the maps \(j\) and \(j'\) denote the canonical inclusion maps. Verifying  \eqref{homeo} is an easy exercise in set topology: Clearly the maps are continuous and injective. We show that they are also open: for \(A \subseteq M\) open we have, with slight abuse of notation, that \(M \cap \big[ \pi^{-1} \big( (\pi \circ j) (A) \big) \big] = A\) is open and so is
\(M' \cap \big[ \pi^{-1} \big((\pi \circ j) (A) \big) \big] = \psi(U \cap A)\).
\newline
\newline
We now show that the quotient topology on \(\tilde{M}\) is indeed Hausdorff. Using \eqref{homeo}, we can easily separate two points \([p] \neq [q] \in \tilde{M}\), if
\begin{enumerate}
\item \(p \neq q \in M\): In this case we separate \(p\) and \(q\) in \(M\) and then use the fact that \(\pi \circ j\) is a homeomorphism in order to push forward the separating neighbourhoods to \(\tilde{M}\).
\item \(p \in M \setminus \overline{U}\) and \(q \in M' \setminus \psi(U)\): we choose a neighbourhood of \(p\) in \(M\) that lies entirely in \(M \setminus \overline{U}\) and an arbitrary neighbourhood of \(q\) in \(M'\). Pushing forward these neighbourhoods via the homeomorphisms, we obtain separating neighbourhoods in \(\tilde{M}\).
\end{enumerate}
Trivial permutations or modifications of these two possibilities leave only open the task to separate \([p]\) and \([q]\) if \(p \in \partial U\) and \(q \in \partial \psi(U)\), or \(q \in \partial U\) and \(p \in \partial \psi(U)\). So suppose we could not separate these two points in, without loss of generality, the case \(p \in \partial U\) and \(q \in \partial \psi(U)\). For all neighbourhoods \(V\) of \(p\) and \(V'\) of \(p'\), we then have \((\pi \circ j)(V) \cap (\pi \circ j')(V') \neq \emptyset\). This, however, implies that \(\psi^{-1}\big( V' \cap \psi(U)\big) \cap V \neq \emptyset\), i.e., \(p\) and \(q\) are corresponding boundary points of \(U\) - in contradiction to Theorem \ref{TNDVersion}. Thus, \(\tilde{M}\) is indeed Hausdorff.
\newline
\newline
In the remaining part of the proof we show that \(\tilde{M}\) possesses a natural structure that turns it into a common extension of \(M\) and \(M'\).

\begin{enumerate}
\item \emph{\(\tilde{M}\) is locally euclidean and has a natural smooth structure:} We have to give an atlas for \(\tilde{M}\). Let \(\{V_i, \varphi_i\}_{i \in \N}\) be an atlas for \(M\) and \(\{V'_k,\varphi'_k\}_{k \in \N}\) an atlas for \(M'\), where the \(\varphi's\) are here homeomorphisms from some open subset of \(\R^{d+1}\) to the \(V's\). We then define an atlas for \(\tilde{M}\) by
\begin{equation*}
\Big{\{} (\pi \circ j) (V_i), \pi \circ j \circ \varphi_i \Big{\}}_{i \in \N} \cup \Big{\{} (\pi \circ j') (V_k), \pi \circ j' \circ \varphi_k \Big{\}}_{k \in \N}\;.
\end{equation*}
By \eqref{homeo} this furnishes an open covering of \(\tilde{M}\) and it is easy to check that the transition functions are either of the form \(\varphi_{i_0}^{-1} \circ \varphi_{i_1}\) with \(i_0, i_1 \in \N\), the primed analogue, or \((\varphi_{k_0}')^{-1} \circ \psi \circ \varphi_{i_0} \) with \(i_0, k_0 \in \N\), which are all smooth diffeomorphisms.

\item \emph{\(\tilde{M}\) is second countable:} This follows directly from the previous construction.

\item \emph{\(\tilde{M}\) has a natural smooth Lorentzian metric that is Ricci-flat:} Since \(\pi \circ j\) and \(\pi \circ j'\) are smooth diffeomorphism onto their image, we can endow \(\tilde{M}\) with a smooth Lorentzian metric by pushing forward \(g\) and \(g'\). On \((\pi \circ j)(U)\) the two metrics obtained in this way agree since \(\psi\) is an isometry, thus this yields a smooth Lorentzian Ricci-flat metric \(\tilde{g}\) on \(\tilde{M}\). Moreover, note that this turns \(\pi \circ j\) and \(\pi \circ j'\) into isometries.

\item \emph{\((\tilde{M}, \tilde{g})\) is globally hyperbolic with Cauchy surface \(\tilde{\iota}(\overline{M})\):} Here we have defined \(\tilde{\iota} := \pi \circ j \circ \iota : \overline{M} \to \tilde{M}\). So let \(\gamma : I \to \tilde{M}\) be an inextendible timelike curve, where \(I \subseteq \R\). Take \(t_0 \in I\) and, without loss of generality, assume \(\gamma(t_0) \in (\pi \circ j)(M)\). If we denote with \(J \ni t_0\) the maximal connected subinterval of \(I\) such that \(\gamma(J) \subseteq (\pi \circ j)(M)\), then \(\gamma|_J\) can be considered as an inextendible timelike curve in \(M\) and thus has to intersect \(\iota(\overline{M})\). Hence, \(\gamma\) intersects \(\tilde{\iota}(\overline{M})\) at least once. 

Let us now assume that \(\gamma\) intersected \(\tilde{\iota}(\overline{M})\) more than once. We can find \(t_1 < t_3 \in I\) with \(\gamma(t_1), \gamma(t_3) \in \tilde{\iota}(\overline{M})\) and \(\gamma(t) \notin \tilde{\iota}(\overline{M})\) for \(t_1 < t <t_3\). Since \(M\) and \(M'\) are globally hyperbolic, \(\gamma|_{[t_1,t_3]}\) cannot be contained entirely in \(\pi \circ j(M)\) or \(\pi \circ j'(M')\). Thus, there must be \(t_2, t_{12}, t_{23}\) with \(t_1 < t_{12} < t_2 < t_{23} < t_3\) such that \(\gamma(t_2) \in (\pi \circ j) (U)\) and, without loss of generality, \(\gamma(t_{12}) \notin (\pi \circ j')(M')\) and \(\gamma(t_{23}) \notin (\pi \circ j)(M)\).\footnote{The other possibility is \(\gamma(t_{12}) \notin (\pi \circ j)(M)\) and \(\gamma(t_{23}) \notin (\pi \circ j')(M')\) and leads in the same way to a contradiction.} But this leads to an inextendible timelike curve in \(U\) that does not intersect \(\iota(\overline{M})\), a contradiction, since \(U\) is globally hyperbolic.

\item \emph{\((\tilde{M}, \tilde{g})\) has a natural time orientation:} Since \(M\) and \(M'\) are time oriented, there exist continuous timelike vector fields \(T\) on \(M\) and \(T'\) on \(M'\). Since \(\psi : U \to M'\) preserves the time orientation, at each point \(\psi_* (T|_U)\) and \(T'|_{\psi(U)}\) lie in the same component of the set of all timelike tangent vectors at this point. Thus, pushing forward \(T\) and \(T'\) via \(\pi \circ j\) and \(\pi \circ j'\) we can consistently single out a future direction at each point of \(\tilde{M}\).
It remains to show that this choice is continuous. But since this is a local property, this follows immediately form \((\pi \circ j)_*(T)\) and \((\pi \circ j')_*(T')\) being continuous.
\end{enumerate}
We have thus shown that \((\tilde{M}, \tilde{g}, \tilde{\iota})\) is a GHD of \((\overline{M}, \bar{g}, \bar{k})\) and, moreover, it is an extension of \(M\) and \(M'\), where the isometric embeddings are given by the maps \(\pi \circ j\) and \(\pi \circ j'\). This finishes the proof of Theorem \ref{CommonExtension}.
\end{proof}

\label{detailed}
As outlined in the introduction, we would like to construct now the MGHD by glueing all GHDs together along their MCGHDs. However, the following subtlety arises: the collection of all GHDs of given initial data is not a set, but a \emph{proper} class - and thus we cannot use the axioms of the Zermelo-Fraenkel set theory for justifying the glueing construction we have in mind. Fortunately, there is an easy way to circumvent this obstacle: Instead of considering \emph{all} GHDs of given initial data \((\overline{M}, \bar{g}, \bar{k})\), we only consider those whose underlying manifold is a subset of \(\overline{M} \times \R\).\footnote{We will in fact impose some further restrictions on the GHDs, which are, however, not strictly necessary.} This collection \(X\) of GHDs is indeed a set (as we will show below), and thus we can glue all such GHDs together along their MCGHDs. In order to justify that the so obtained GHD \(\tilde{M}\) is indeed the MGHD, we just note that any GHD of the same initial data is isometric to one in \(X\), and hence isometrically embeds into \(\tilde{M}\).

\begin{proof}[Proof of Theorem \ref{MGHD}:]
We consider fixed initial data \((\overline{M}, \bar{g}, \bar{k})\). In the following we argue that \emph{the collection \(X\) of all GHDs \(M\) whose underlying manifold is an open neighbourhood of \(\overline{M} \times \{0\}\) in \(\overline{M} \times \R\) and whose embeddings \(\iota : \overline{M} \to M\) of the initial data into \(M\) are given by \(\iota(x) = (x,0)\), where \(x \in \overline{M}\), is a set}. 

To see this, consider the set \(Y:= T^*(\overline{M} \times \R) \otimes T^*(\overline{M} \times \R)\), i.e., the tensor product of the cotangent bundle of \(\overline{M} \times \R\) with itself. Each of the members of \(X\) is given by a subset of \(Y\). The \emph{axiom of power set} ensures that there is a set \(\mathcal{P}(Y)\) containing all subsets of \(Y\). The \emph{axiom schema of specification} now ensures that 
\begin{align*}
X:= \big{\{} M \in \mathcal{P}(Y) \; \big| \; &\overline{M} \times \{0\} \subseteq M \subseteq \overline{M} \times \R \textnormal{ is a GHD of the given initial data} \\ &\textnormal{and the initial data embeds canonically into } \overline{M} \times \{0\} \subseteq M \big{\}}
\end{align*}
is a set.

To simplify notation, let us now write \(X = \{M_\alpha \;|\; \alpha \in A\}\). We denote the MCGHD of \(M_{\alpha_i}\) and \(M_{\alpha_k}\) with \(U_{\alpha_i \alpha_k} \subseteq M_{\alpha_i}\) and the corresponding isometric embedding with \(\psi_{\alpha_i \alpha_k} : U_{\alpha_i \alpha_k} \to M_{\alpha_k}\). We define an equivalence relation \(\sim\) on \(\bigsqcup_{\alpha \in A} M_\alpha\) by
\begin{equation}
\label{equiv}
M_{\alpha_i} \ni p_{\alpha_i} \sim q_{\alpha_k} \in M_{\alpha_k} \textnormal{ iff } p_{\alpha_i} \in U_{\alpha_i \alpha_k} \textnormal{ and } \psi_{\alpha_i \alpha_k}(p_{\alpha_i}) = q_{\alpha_k}\;
\end{equation}
and take the quotient \((\bigsqcup_{\alpha \in A} M_\alpha)/_\sim =: \tilde{M}\) with the quotient topology. Note that \eqref{equiv} is indeed an equivalence relation. For the transitivity observe that if \(p_{\alpha_i} \in M_{\alpha_i}\), \(p_{\alpha_k} \in M_{\alpha_k}\) and \(p_{\alpha_l} \in M_{\alpha_l}\) with \(p_{\alpha_i} \sim p_{\alpha_k}\) and \(p_{\alpha_k} \sim p_{\alpha_l}\), then we have that \(U_{\alpha_i \alpha_k} \cap \psi_{\alpha_i \alpha_k}^{-1}(U_{\alpha_k \alpha_l})\) together with the composition \(\psi_{\alpha_k \alpha_l} \circ \psi_{\alpha_i \alpha_k}\) is a CGHD of \(M_{\alpha_i}\) and \(M_{\alpha_l}\) that contains \(p_{\alpha_i}\) and identifies it with \(p_{\alpha_l}\) - so certainly the MCGHD of \(M_{\alpha_i}\) and \(M_{\alpha_l}\) leads to the same identification.

\begin{enumerate}
\item \emph{\(\tilde{M}\) is Hausdorff:} Let \([p_{\alpha_i}] \neq [q_{\alpha_k}] \in \tilde{M}\) with \(p_{\alpha_i} \in M_{\alpha_i}\) and \(q_{\alpha_k} \in M_{\alpha_k}\). We show that we can find open neighbourhoods in \(\tilde{M}\) that separate these points. 

\begin{center}
\def\svgwidth{11cm}
%% Creator: Inkscape 0.48.2, www.inkscape.org
%% PDF/EPS/PS + LaTeX output extension by Johan Engelen, 2010
%% Accompanies image file '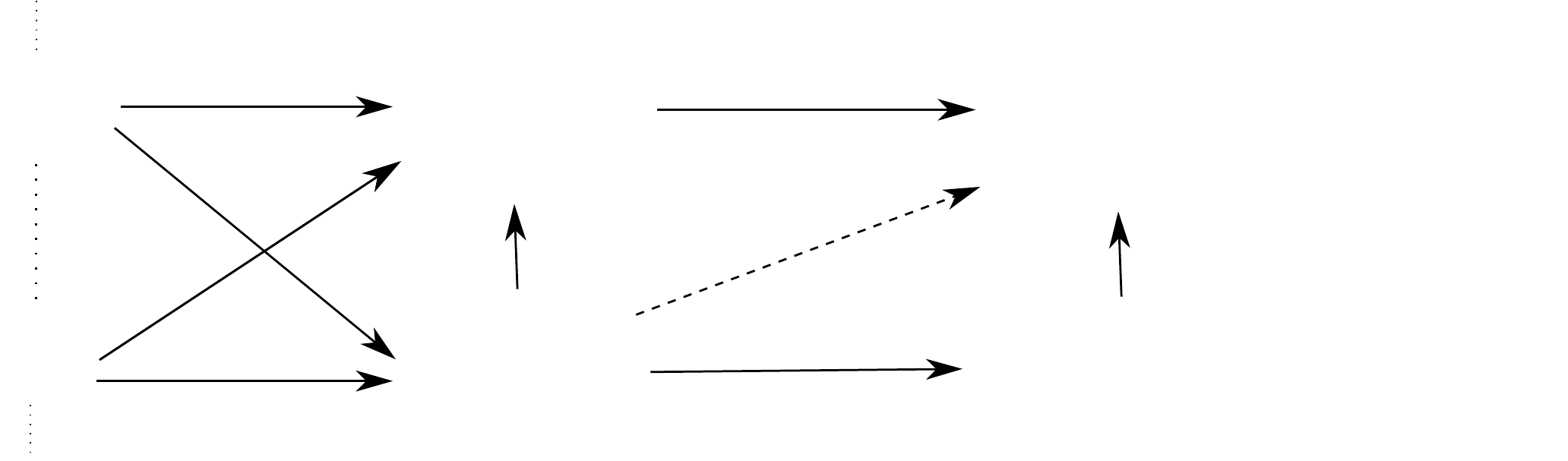' (pdf, eps, ps)
%%
%% To include the image in your LaTeX document, write
%%   \input{<filename>.pdf_tex}
%%  instead of
%%   \includegraphics{<filename>.pdf}
%% To scale the image, write
%%   \def\svgwidth{<desired width>}
%%   \input{<filename>.pdf_tex}
%%  instead of
%%   \includegraphics[width=<desired width>]{<filename>.pdf}
%%
%% Images with a different path to the parent latex file can
%% be accessed with the `import' package (which may need to be
%% installed) using
%%   \usepackage{import}
%% in the preamble, and then including the image with
%%   \import{<path to file>}{<filename>.pdf_tex}
%% Alternatively, one can specify
%%   \graphicspath{{<path to file>/}}
%% 
%% For more information, please see info/svg-inkscape on CTAN:
%%   http://tug.ctan.org/tex-archive/info/svg-inkscape
%%
\begingroup%
  \makeatletter%
  \providecommand\color[2][]{%
    \errmessage{(Inkscape) Color is used for the text in Inkscape, but the package 'color.sty' is not loaded}%
    \renewcommand\color[2][]{}%
  }%
  \providecommand\transparent[1]{%
    \errmessage{(Inkscape) Transparency is used (non-zero) for the text in Inkscape, but the package 'transparent.sty' is not loaded}%
    \renewcommand\transparent[1]{}%
  }%
  \providecommand\rotatebox[2]{#2}%
  \ifx\svgwidth\undefined%
    \setlength{\unitlength}{594.21430664bp}%
    \ifx\svgscale\undefined%
      \relax%
    \else%
      \setlength{\unitlength}{\unitlength * \real{\svgscale}}%
    \fi%
  \else%
    \setlength{\unitlength}{\svgwidth}%
  \fi%
  \global\let\svgwidth\undefined%
  \global\let\svgscale\undefined%
  \makeatother%
  \begin{picture}(1,0.29399341)%
    \put(0,0){\includegraphics[width=\unitlength]{Maps2.pdf}}%
    \put(0.00769323,0.21431433){\color[rgb]{0,0,0}\makebox(0,0)[lb]{\smash{\(M_{\alpha_i}\)}}}%
    \put(0,0.05660312){\color[rgb]{0,0,0}\makebox(0,0)[lb]{\smash{\(M_{\alpha_k}\)}}}%
    \put(0.12886164,0.1835414){\color[rgb]{0,0,0}\makebox(0,0)[lb]{\smash{\(j_i\)}}}%
    \put(0.12693833,0.07198952){\color[rgb]{0,0,0}\makebox(0,0)[lb]{\smash{\(j_k\)}}}%
    \put(0.34427214,0.12391882){\color[rgb]{0,0,0}\makebox(0,0)[lb]{\smash{\(j_{ik}\)}}}%
    \put(0.26541649,0.21046771){\color[rgb]{0,0,0}\makebox(0,0)[lb]{\smash{\(\bigsqcup_{\alpha \in A} M_\alpha\)}}}%
    \put(0.26733982,0.05275646){\color[rgb]{0,0,0}\makebox(0,0)[lb]{\smash{\(M_{\alpha_i} \sqcup M_{\alpha_k}\)}}}%
    \put(0.50967663,0.02390681){\color[rgb]{0,0,0}\makebox(0,0)[lb]{\smash{\(\pi\)}}}%
    \put(0.50198343,0.2470106){\color[rgb]{0,0,0}\makebox(0,0)[lb]{\smash{\(\pi\)}}}%
    \put(0.45891217,0.12921649){\color[rgb]{0,0,0}\rotatebox{20.24350861}{\makebox(0,0)[lb]{\smash{\(\pi \circ j_{ik}\)}}}}%
    \put(0.64623149,0.20662109){\color[rgb]{0,0,0}\makebox(0,0)[lb]{\smash{\(\big(\bigsqcup_{\alpha \in A} M_\alpha\big)/_\sim\)}}}%
    \put(0.65584806,0.04890987){\color[rgb]{0,0,0}\makebox(0,0)[lb]{\smash{\(\big(M_{\alpha_i} \sqcup M_{\alpha_k}\big)/_\sim\)}}}%
    \put(0.73085709,0.12007224){\color[rgb]{0,0,0}\makebox(0,0)[lb]{\smash{\(\tilde{j}_{ik}\)}}}%
  \end{picture}%
\endgroup%

\end{center}

Here, all \(j's\) denote canonical inclusion maps (in particular $j_i$ and $j_k$ denote the inclusion maps of $M_{\alpha_i}$ and $M_{\alpha_k}$ into $\bigsqcup_{\alpha \in A} M_\alpha$), the \(\pi's\) denote projection maps, the lower equivalence relation is defined as in the proof of Theorem \ref{CommonExtension} and it is easy to check that the map \(\pi \circ j_{ik}\) descends to the quotient, i.e.\ to \(\tilde{j}_{ik}\).

As for \eqref{homeo} one checks that \(\pi \circ j_{ik}\) is an open map. Thus, \(\tilde{j}_{ik}\) is open as well. Since \(\tilde{j}_{ik}\) is also continuous and injective, it is a homeomorphism onto its image.

In Theorem \ref{CommonExtension} we proved that the quotient topology on \((M_{\alpha_i} \sqcup M_{\alpha_k})/_\sim\) is Hausdorff - thus we can find open neighbourhoods that separate \([p_{\alpha_i}]\) and \([q_{\alpha_k}]\) in \((M_{\alpha_i} \sqcup M_{\alpha_k})/_\sim\). Pushing forward these neighbourhoods to \((\bigsqcup_{\alpha \in A} M_\alpha)/_\sim\) via \(\tilde{j}_{ik}\) we obtain separating open neighbourhoods of \([p_{\alpha_i}]\) and \([q_{\alpha_k}]\) in \(\tilde{M}\).

\item \emph{\(\tilde{M}\) is locally euclidean and has a natural smooth structure:} This is seen exactly as in the proof of Theorem \ref{CommonExtension}.
\item \emph{\(\tilde{M}\) has a natural smooth Lorentzian metric that is Ricci-flat and comes with a natural time orientation:}
Again, this is seen exactly as before.
\item \emph{\((\tilde{M}, \tilde{g})\) is globally hyperbolic with Cauchy surface \(\tilde{\iota}(\overline{M})\):} Here, \(\tilde{\iota} := \pi \circ j_i \circ \iota_i\) for some \(\alpha_i \in A\). This definition does obviously not depend on  \(\alpha_i \in A\). 

The proof is also nearly the same as before. Let \(\gamma : I \to \tilde{M}\) be an inextendible timelike curve. For \(t_0 \in I\) we have, say, \(\gamma(t_0) \in (\pi \circ j_i) ( M_{\alpha_i})\). Let \(J \ni t_0\)  denote the maximal connected subinterval of \(I\) such that \(\gamma(J) \subseteq (\pi \circ j_i) (M_{\alpha_i})\). We can then pull back \(\gamma|_J\)  via \(\pi \circ j_i\) to \(M_{\alpha_i}\), which gives rise to an inextendible timelike curve in \(M_{\alpha_i}\) that has to intersect \(\iota_i(\overline{M})\). Thus \(\gamma\) intersects \(\tilde{\iota}(\overline{M})\).

Assume \(\gamma\) intersected \(\tilde{\iota}(\overline{M})\) more than once. Again, we can find \(t_1 < t_4 \in I\) with \(\gamma(t_1), \gamma(t_4) \in \tilde{\iota}(\overline{M})\) and \(\gamma(t) \notin \tilde{\iota}(\overline{M})\) for \(t_1 < t <t_4\). Since \(\gamma\) is continuous and \([t_1,t_4]\) is compact, \(\gamma([t_1,t_4])\) is contained in finitely many \(\pi \circ j_\alpha (M_{\alpha})\). But since each of these \({M_\alpha}'s\) is globally hyperbolic one can actually reduce this cover to just two elements, since otherwise one would get an inextendible timelike curve of the form \(\gamma|_{[t_2,t_3]}\) in some \(M_\alpha\), where \(t_1 < t_2 < t_3 < t_4\), that does not intersect \(\iota_\alpha(\overline{M})\).

From here on, one follows the remaining argument from point \(4\) of the proof of Theorem \ref{CommonExtension}.

\item \emph{\(\tilde{M}\) is second countable:} This follows directly from a Theorem of Geroch, see the appendix of \cite{Ger68}, where he shows that any manifold that is connected\footnote{That \(\tilde{M}\) is connected here follows trivially from it being globally hyperbolic, hence path connected (recall that we assumed that \(\overline{M}\) is connected).}, Hausdorff and locally euclidean and which, moreover, admits a smooth Lorentzian metric, is also second countable.

\item \emph{\(\tilde{M}\) is an extension of any GHD of the same initial data:} Let \((M, g, \iota)\) be a GHD of the same initial data. Since \(M\) is second countable and time oriented, we can find a globally timelike vector field \(T\) on \(M\).  Let us denote with \(I_x \subseteq \R\) the maximal time interval of existence of the integral curve of \(T\) starting at \(x \in M\).\footnote{Note that the existence of such a maximal time interval follows from an elementary `taking the union of all time intervals of existence argument' - without appealing to Zorn's lemma.} In the following we recall some results from standard ODE theory:
The set \(\mathcal{D} := \{ (x,t) \in M \times \R \; | \; t \in I_x \}\) is open and the flow \(\Phi : \mathcal{D} \to M\) of \(T\) is smooth. Moreover, if we fix \(t \in \R\) and regard \(\Phi_t (\cdot) := \Phi\big((\cdot,t)\big)\) as a function from some open subset of \(M\) to \(M\), then \(\Phi_t\) is a local diffeomorphism.

We now define \(\mathcal{D}_{\iota(\overline{M})}:= \{ (x,t) \in \iota(\overline{M}) \times \R \; | \; t \in I_x\}\), which is an open neighbourhood of \(\iota(\overline{M}) \times \{0\}\) in \(\iota(\overline{M}) \times \R\) (again by standard ODE theory), and claim that \(\chi := \Phi\big|_{\mathcal{D}_{\iota(\overline{M})}} : \mathcal{D}_{\iota(\overline{M})} \to M\) is a diffeomorphism. 

The smoothness of \(\chi\) follows directly from the smoothness of \(\Phi\), and the bijectivity follows from the global hyperbolicity of \(M\). More precisely, since every maximal integral curve of \(T\) (which is, in particular, an inextendible timelike curve) has to intersect \(\iota(\overline{M})\), \(\chi\) is surjective; and since every such curve intersects \(\iota(\overline{M})\) exactly once, we obtain the injectivity. In order to see that \(\chi\) is a local diffeomorphism, let \((x,t) \in \mathcal{D}_{\iota(\overline{M})}\) and choose a basis \((Z_1, \ldots, Z_d)\) of \(T_x \iota(\overline{M})\). We have
\begin{equation}
\label{small}
\chi_*\big|_{(x,t)}(Z_i) = \big(\Phi_t\big)_*\big|_x(Z_i) \qquad \textnormal{ and } \qquad \chi_*\big|_{(x,t)} (\partial_t) = T\big|_{\Phi_t(x)} = \big(\Phi_t\big)_*\big|_x (T\big|_x) \;.
\end{equation}
Since \(\iota(\overline{M})\) is spacelike, \((Z_1, \ldots, Z_d, T_x)\) forms a basis for \(T_xM\); and since \(\Phi_t\) is a local diffeomorphism, it follows from \eqref{small} that \(\chi_*\) is surjective. Thus, we have shown that \(\chi\) is a diffeomorphism.

It now follows that \(\chi \circ (\iota \times \mathrm{id})\) is a diffeomorphism from some open neighbourhood of \(\overline{M} \times \{0\}\) in \(\overline{M} \times \R\) to \(M\) which maps \(\overline{M} \times \{0\}\) on \(\iota(\overline{M})\). Pulling back the Lorentzian metric, we obtain that there is an \(M_{\alpha_i} \in X\) that is isometric to \(M\) via \(\chi \circ (\iota \times \mathrm{id})\). The isometric embedding of \(M\) into \(\tilde{M}\) is now given by \(\pi \circ j_i \circ \big(\chi \circ (\iota \times \mathrm{id})\big)^{-1}\).

Finally, it is straightforward to deduce from this maximality property that \(\tilde{M}\) is, up to isometry, the only GHD with this property.
\end{enumerate}
This finally finishes the proof of the existence of the MGHD.
\end{proof}

\section*{Acknowledgements}
I would like to thank my supervisor Mihalis Dafermos not only for bringing to my attention that it might be of interest to find a `dezornification' of the proof of the existence of a maximal GHD, but also for numerous instructive discussions and helpful comments on a previous version of this paper. Here, I would also like to thank Marc Nardmann for valuable feedback on a previous version of this manuscript, and especially for bringing to my attention that the collection of all GHDs forms a proper class. Moreover, I am grateful to John Conway for a very stimulating discussion about the axiom of choice. Finally, I would like to thank the Science and Technology Facilities Council (STFC) and the German Academic Exchange Service (DAAD) (Doktorandenstipendium) for their financial support.

\bibliographystyle{acm}
\bibliography{Bibly}

\end{document}